\documentclass[conference]{IEEEtran}

\usepackage{main}
\usepackage{amsfonts}
\usepackage{makecell}

\title{Improving Network Clock Synchronization by Marking Congestion}
\author{
\IEEEauthorblockN{Yash Deshpande\IEEEauthorrefmark{1},
Quirin Vogel\IEEEauthorrefmark{2},
Laura Becker\IEEEauthorrefmark{1},
Kaan Aykurt\IEEEauthorrefmark{1},
Wolfgang Kellerer\IEEEauthorrefmark{1}}

\IEEEauthorblockA{\IEEEauthorrefmark{1}
Chair of Communication Networks, Technical University of Munich, Germany.}

\IEEEauthorblockA{\IEEEauthorrefmark{2}
Department of Statistics, University of Klagenfurt, Austria.}
}

\begin{document}

\begin{acronym}
    \acro{AP}{access point}
    \acro{PTP}{Precision Time Protocol}
    \acro{OWD}{One-way delay}
    \acro{RTT}{Round-trip time}
    \acro{SMS}{Smart manufacturing systems}
    \acro{CPS}{Cyber-physical systems}
    \acro{IIoT}{Industrial Internet-of-Things}
    \acro{TUB}{Time Uncertainty Bound}
    \acro{UTC}{Coordinated Universal Time}
    \acro{GPS}{Global Positioning System}
    \acro{OWD}{One-way delay}
    \acro{ppm}{parts per million}
    \acro{RTT}{Round trip time}
    \acro{NIC}{Network interface controllers}
    \acro{DES}{Discrete event simulator}
    \acro{MAC}{media access control}
    \acro{TSN}{Time Sensitive Networking}
    \acro{NTP}{Network Time Protocol}
    \acro{COTS}{commercial off-the-shelf}
    \acro{PDV}{packet delay variation}
    \acro{LAN}{local area networks}
    \acro{SNMP}{Simple Network Management Protocol}
    \acro{SDN}{Software defined networking}
    \acro{NIC}{Network Interface Card}
    \acro{NC}{Network Calculus}
    \acro{MDR}{Maximum Drift Rate}
    \acro{TDD}{Time Division Duplex}
    \acro{TC}{Transparent Clock}
    \acro{BC}{Boundary Clock}
    \acro{ECN}{Explicit Congestion Notification}
    \acro{PI}{Proportional Integral}
    \acro{SVM}{Support vector machine}
    \acro{ML}{Machine Learning}
    \acro{CMC}{Congestion Marking Clock}
    \acro{SCMC}{Serial CMC}
    \acro{CCMC}{Concurrent CMC}
    \acro{RED}{Random Early Detection}
    \acro{FR}{Forward-Reverse}
    \acro{MTU}{Maximum Transmission Unit}
    \acro{PDF}{Probability Density Function}
    \acro{CCDF}{Complementary Cumulative Density Function}
    \acro{RMS}{Root Mean square}
    \acro{MSE}{Mean Square Error}
    \acro{LM}{Long-Medium}
    \acro{SF}{Short-Fast}
    \acro{SS}{Short-Slow}
    \acro{SM}{Short-Medium}
    \acro{LS}{Legacy Switch}
    \acro{GNN}{Graph Neural Networks}
    \acro{IR}{Improvement Region}
\end{acronym}


\newcommand{\Clock}[1]{\ensuremath{C_{#1}}}
\newcommand{\ClockEst}[1]{\ensuremath{\hat{C}_{#1}}}

\newcommand{\Offset}{\theta}
\newcommand{\OffsetEst}{\hat{\theta}}

\newcommand{\Delay}[2]{\ensuremath{\delta_{\text{#1}}^{\text{#2}}}}

\newcommand{\DelayHop}[3]{\ensuremath{\delta_{\text{#1},#3}^{\text{#2}}}}

\newcommand{\DelayEst}[2]{\ensuremath{\hat{\delta}_{\text{#1}}^{\text{#2}}}}

\newcommand{\DelayEstHop}[3]{\ensuremath{\hat{\delta}_{\text{#1},#3}^{\text{#2}}}}

\newcommand{\abs}[1]{\left|#1\right|}

\newcommand{\Skew}[2]{\ensuremath{S_{#1#2}}}

\newcommand{\Drift}[2]{\ensuremath{D_{#1#2}}}

\newcommand{\Error}[1]{\ensuremath{E(#1)}}

\newcommand{\tub}[2]{\ensuremath{\epsilon_{#1#2}}}

\newcommand{\gk}[1]{\left\{#1\right\}}
\newcommand\mycom[2]{\genfrac{}{}{0pt}{}{#1}{#2}}
\newcommand{\ek}[1]{\left[#1\right]}
\newcommand{\rk}[1]{\left(#1\right)}
\newcommand{\hk}[1]{^{(#1)}}
 \newcommand{\ex}{{\rm e}} 
 \renewcommand{\d}{{\rm d}} 
\renewcommand{\P}{\mathbb{P}}
\newcommand{\E}{\mathbb{E}}
\newcommand{\1}{\boldsymbol{1}}
\newcommand{\e}{\varepsilon}
\newcommand{\var}{\mathrm{Var}}
\newcommand{\cov}{\mathrm{Cov}}

\maketitle

\begin{abstract}
Achieving consistent time across devices in distributed systems often involves exchanging timestamped messages over a network. Precise time synchronization is crucial for applications such as cellular networks, industrial automation, and transactional databases. However, delay variation in synchronization packets—often caused by congestion from competing traffic—degrades synchronization accuracy. Detecting whether a packet experienced congestion can help improve synchronization through filtering and statistical methods.

We propose an in-network congestion indication and filtering mechanism for synchronization messages used in protocols such as the Network Time Protocol (NTP) and Precision Time Protocol (PTP). Network devices mark packets that experienced queuing, allowing clocks to correct errors caused by varying delays. Our approach requires only simple changes at switches or routers, avoiding deep packet inspection or protocol modifications.

The method is backward compatible, using standard but currently unused fields in IP, PTP, or NTP headers.
We implement our method on a Tofino P4 target and demonstrate an improvement of over 80\% in synchronization performance over a single hop. 
Moreover, we show that the performance of traditional statistical filters, such as min-RTT and median-delay, is improved by 90\% over the one-hop hardware setup. 
We further demonstrate the effectiveness of our proposed method across multiple hops, both analytically and through simulation. 
Congestion marking improves the root-mean-squared clock offset estimation error by 30\% to 80\%, depending on network conditions and filtering techniques. 

\end{abstract}

\acresetall
\section{Introduction}
\label{sec:introduction}

Precise time synchronization is a cornerstone of modern distributed systems, enabling coordinated operations across various applications. 
In cellular networks~\cite{ITUstd}, accurate timing ensures seamless handovers and efficient spectrum utilization. 
Industrial automation systems rely on synchronized clocks for coordinated control and monitoring~\cite{Gore2020}, while transactional databases~\cite{spanner} require precise timestamps to maintain data consistency and integrity.

Two primary protocols facilitate network-based time synchronization: the \ac{NTP}~\cite{mills1992rfc} and the Precision Time Protocol (PTP)~\cite{ptp}. 
NTP is widely used across the internet, providing millisecond-level accuracy suitable for general-purpose applications. 
PTP, standardized as IEEE 1588, offers sub-microsecond precision, making it ideal for time-sensitive environments such as power grids and manufacturing systems.
PTP leverages hardware support from switches and end-hosts and is intended to be deployed in \ac{LAN}s where the operator controls the network and knows the topology. 
PTP can operate at both the Ethernet and IP layers, while NTP operates only at the IP layer. 
Both PTP and NTP work on the principle of sending time-stamped packets over the network~\cite{Banerjee2023}.

However, the accuracy of these protocols can be compromised by variable network delays, particularly those caused by queuing congestion in routers and switches~\cite{Banerjee2023}.
Both PTP and NTP assume that packet delays in both directions, from the client to the server and vice versa, are similar and that time-stamped packets take half of the \ac{RTT} to reach their peer.  
When synchronization packets experience \ac{PDV}, the resulting time estimates at the peer end can be significantly skewed, leading to errors in the application's operation.

Various filtering and advanced estimating techniques have been explored to mitigate these issues~\cite{kalmanFiltering, genghuygens, bayesian_optimization}. 
However, relying on filtering to estimate an unknown link delay has fundamental limitations~\cite{LimitationsClockSync}. 
A large proportion of the \ac{PDV} is caused by queuing delays in switches and routers due to congestion from other network traffic~\cite{zilbermanTime}, which can be highly dynamic.
Therefore, hardware implementations of a \ac{TC} or a \ac{BC} for PTP switches have been proposed and standardized. 
A \ac{TC} adds the resident time of the \ac{PTP} packet in the switches to the PTP header.  
A \ac{BC} synchronizes its clock to a source clock and then acts as a source to other downstream devices. 
However, implementing a \ac{TC} or a \ac{BC} in switches and routers is difficult and expensive~\cite{TransparentClockImpl, BoundaryClockImpl}. 
Moreover, \ac{NTP} does not support such intermediary correction mechanisms. 

\ac{ECN}~\cite{ramakrishnan2001addition} is a network feature that allows switches and routers to signal congestion to endpoints without dropping packets, enabling more efficient and reliable congestion control in TCP-like algorithms.
Its success lies in its easy implementation~\cite{ECNAdoption,ECNTuning}: the switch flips a single bit in the IP header to indicate whether the packet experienced congestion.  
Inspired by ECN's success, we explore the utility of similar congestion marking in PTP and NTP protocols to improve synchronization performance. 
Since ECN marking also indicates a congested packet, this information can be used to filter \ac{PDV} in clock synchronization packets.  
Specifically, our novel \textbf{contributions} in this paper are:
\begin{enumerate}
    \item We propose and evaluate a \textbf{novel clock offset estimation method} for networked clock synchronization mechanisms such as \ac{PTP} and \ac{NTP} using congestion-markings in packet headers.
    The method is \textbf{backward-compatible} with current versions of \ac{NTP} and \ac{PTP} and suitable with modern methods like Firefly~\cite{firefly}.
    \item We propose a \textbf{new switch mechanism} called \ac{CMC}, which can mark congestion in the PTP or NTP header with \textbf{more granularity} than ECN and mark packets \textbf{over many hops}.
    \item We \textbf{implement} the CMC and ECN marking on a P4 target (Tofino), and demonstrate that the congestion-marking indicates delay within 5\% accuracy. 
    Moreover, we show, via an analysis of compiler outputs, that the CMC requires fewer pipeline stages, uses less SRAM, and performs fewer ALU operations than the traditional \ac{TC}.  
    \item We \textbf{analytically} determine the conditions under which our proposed method improves the average \ac{MSE} and empirically \textbf{show} that these \textbf{conditions for improvement are slack}, allowing for the users to select from a \textbf{large parameter} set without degrading the system performance.  
    \item We propose a \textbf{Markovian Model and an Algorithm} to obtain the \textbf{theoretical MSE} after applying our method. The theoretical MSE can be used to determine the optimal system parameters.  
    \item We show that our proposed method also \textbf{improves} the performance of statistical clock synchronization low-pass \textbf{filters} or eliminates their need altogether. 
    \item Hardware measurements and OMNET++ packet-based simulations demonstrate an improvement of \textbf{up to 80\%} but \textbf{at least 30\%} over the current standard offset estimation and correction methods. 
\end{enumerate}

The rest of the paper is organized as follows: Sec. \ref{sec:background} provides the necessary background and the current state of the art in networked clock synchronization and congestion marking. 
Then Sec. \ref{sec:offset_estimation} introduces the main concept of the paper and shows how offset estimation accuracy and \ac{PDV} filtering relate to the estimation of the queuing delay of a synchronization packet. 
In Sec. \ref{sec:analysis}, we propose the Markovian queuing model and Algorithm \ref{alg1} to obtain expressions for the expected \ac{MSE} of offset estimation error and the improvement achieved by our proposed method. 
Sec. \ref{sec:methods} provides insights into the implementation of our method and proposes header fields in existing IP, PTP, and NTP packets for implementing the marking counter. 
We also implement our switch in P4~\cite{p4} and demonstrate its lower complexity as compared to a \ac{TC}. Moreover, we show an increase in consumed resources at the switch as we improve congestion marking capability.
Finally, we demonstrate the improvement of our scheme through OMNET++-based packet simulations in Sec.~\ref{sec:evaluation} before concluding the paper. 

\section{Background and Related Work}
\label{sec:background}

\subsection{Fundamentals of Network Clock Synchronization}
\label{subsec:ptp_and_ntp}

\begin{figure}
  \centering
  \begin{subcaptionbox}{\label{fig:ntp_message}}[0.48\columnwidth]
    {\includegraphics[width=\linewidth]{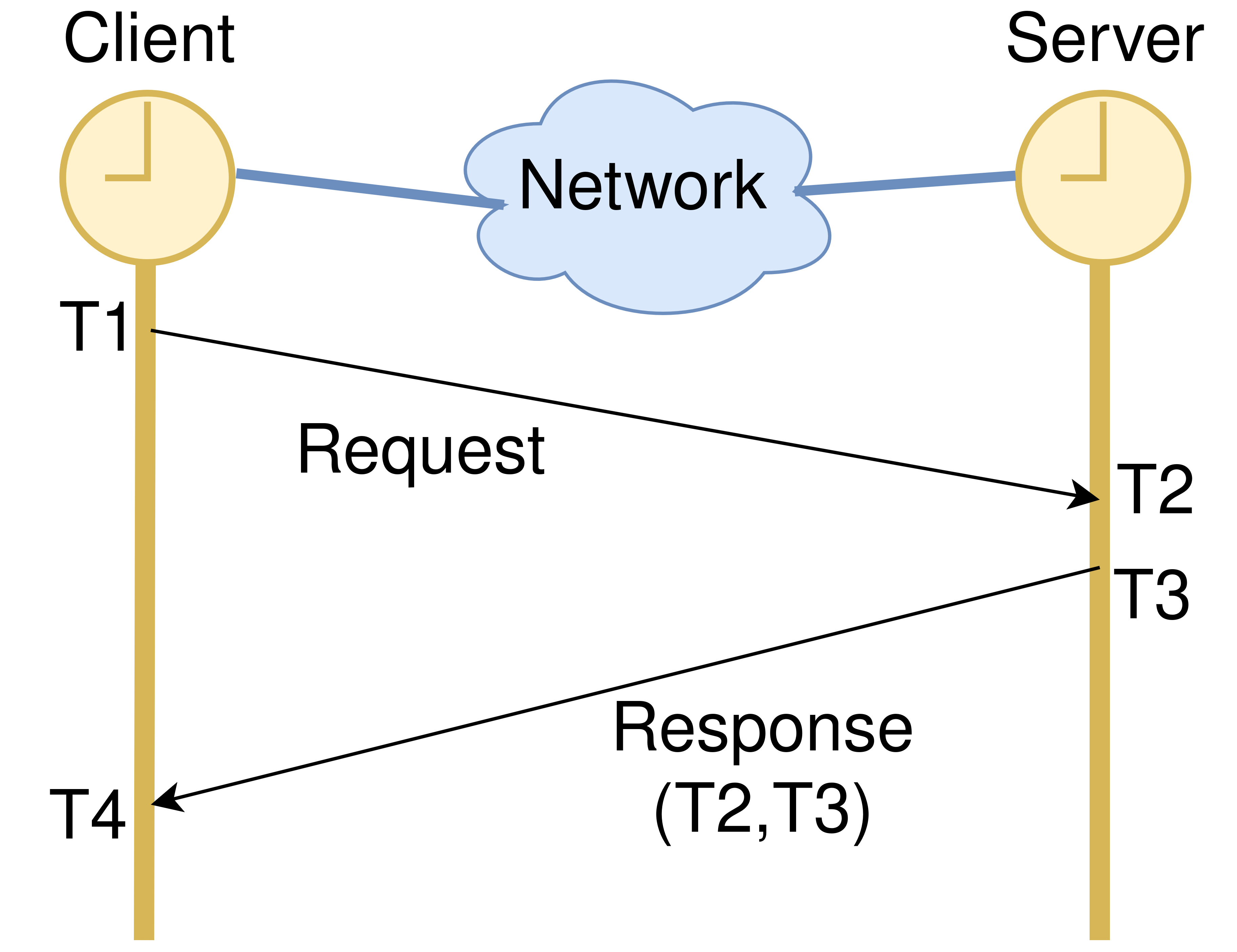}}
  \end{subcaptionbox}
  \begin{subcaptionbox}{\label{fig:ptp_message}}[0.48\columnwidth]
    {\includegraphics[width=\linewidth]{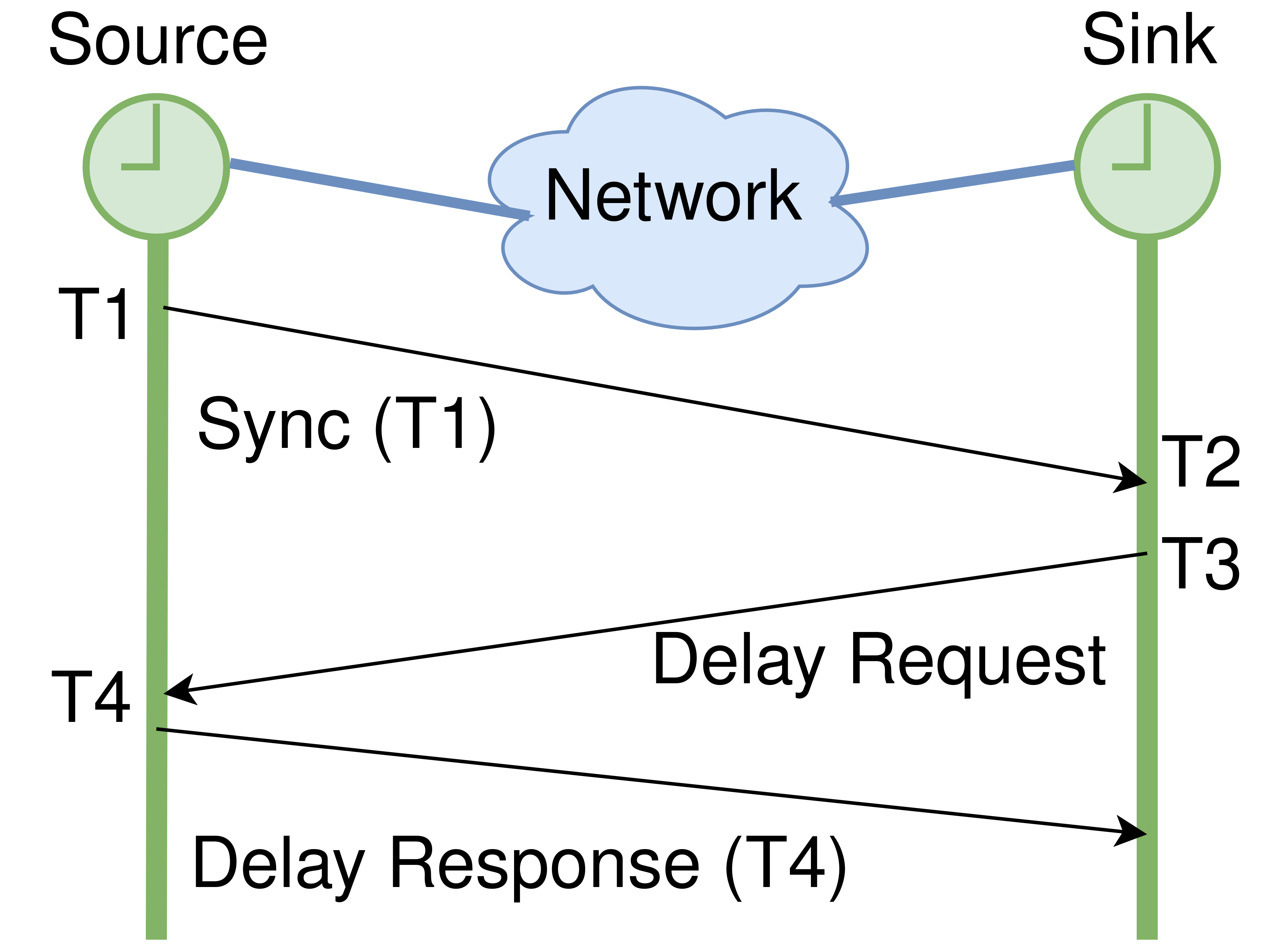}}
  \end{subcaptionbox}
  \caption{\small{Mechanism of (a) NTP: The client periodically initiates the synchronization procedure, exchanging two messages. (b) PTP has a very similar mechanism, but the source (server) periodically initiates the synchronization procedure, and three messages are exchanged. In both protocols, the client uses 4 time values to calculate its offset from the source. }}
  \label{fig:main}
\end{figure}

The fundamental operation of \ac{NTP} and \ac{PTP} is illustrated in Fig. \ref{fig:ntp_message} and \ref{fig:ptp_message}, respectively.
Both use a modified form of Cristian`s algorithm~\cite{cristian1989probabilistic}, where a client estimates its clock offset $\OffsetEst$ from a server. 
In NTP, the client sends a request message and records the time $T1$ at which it was sent. 
The server records the time at which this request message was received as $T2$, then it prepares the response message. 
The server puts the time  when this response is sent back to the client, $T3$, and $T2$, in the response message.
The client records the time this response message is received as $T4$.
Two equations represent the real offset $\Offset$, assuming the devices make no timestamping errors. 
\begin{equation}
    T2 = T1 + \Offset + \Delay{Req}{},
    \label{eq:forward_time}
\end{equation}
and 
\begin{equation}
    T4 = T3 - \Offset + \Delay{Resp}{}.
    \label{eq:reverse_time}
\end{equation}
Here, $\Delay{Req}{}$ and $\Delay{Resp}{}$ are the unknown network delays of the request and response messages, respectively. 
Since the devices' clocks are not accurately synchronized, there is no real way of knowing the exact value of this delay. 
Hence, Cristians' algorithm makes a probabilistic assumption to simplify the two equations with three unknowns: that the forward delay is the same as the reverse delay $\Delay{Req}{} \approx \Delay{Resp}{}$.
Solving for $\OffsetEst$ we get, 
\begin{equation}
    \OffsetEst = \frac{(T2-T1)+(T3-T4)}{2}
    \label{eq:offset_est}
\end{equation}

The above equations also hold for \ac{PTP}, albeit for different message names. 
The source starts the synchronization procedure, so three messages must be exchanged for the sink to acquire all four timestamps. 
Devices that support \ac{PTP} achieve higher timestamp accuracy through hardware support, enabling significantly improved synchronization performance. 
The precision of the devices mainly causes the timestamping error, and for \ac{COTS} \ac{NIC} devices, it was found to be maximum $\SI{8}{\nano\second}$~\cite{moongen}, which is negligible compared to the \ac{PDV} in the network~\cite{yashinfocom}.
Chrony~\cite{chrony} is an implementation of NTP that can also utilize hardware timestamping to improve performance. 


\subsection{Correcting Estimation Errors}
\label{subsec:correcting_est_errors}

Assuming perfect timestamping, the primary source of error from \ac{PDV} comes when the forward and reverse paths of the synchronization messages experience different delays. 
From Equations \ref{eq:forward_time}, \ref{eq:reverse_time} and \ref{eq:offset_est} we get the offset estimation error $\epsilon$, 
\begin{equation}
    \Offset - \OffsetEst = \epsilon = \frac{(\Delay{Resp}{}-\Delay{Req}{})}{2}.
    \label{eq:offset_error}
\end{equation}

The clock adjustment process can be viewed as a closed-loop control system in which a control signal is applied to the clock based on noisy offset measurements. 
Thus, a two-step control process is usually applied: a low-pass filter to reduce the effect of packets that experienced significant delays, and a \ac{PI} filter to incrementally apply the control signal in small steps.

The \textit{ptp4l}~\cite{linuxptp} is an IEEE 1588 Precision Time Protocol implementation from the Linux PTP project. 
Here, during the low-pass filtering, the offset estimation process is decoupled from the delay estimation process.  
The delay measurement is a \ac{RTT} value. 
The delay of the \textit{sync} packet is then assumed to be the median of the last $M$ measurements. 
The filter length $M$ is configurable. 
Huygens~\cite{genghuygens} suggests using coded probes and \ac{SVM}-like learning over coded probes to find packets that were not congested. 
However, this requires a large number of packets, which take up to 5 Mbps of the link bandwidth.
This can be negligible for data centers with high-capacity links, but is difficult to deploy for industrial networks with low-capacity links.
Similar coded probing methods have been proposed~\cite{levesqueDelayAssymetry, Murakami, lee}. However, they not only require twice the bandwidth of standard PTP, but also a change in the standard. 
\ac{NTP} uses a Huff-and-Puff filter~\cite{mills1992rfc} to pick packets that faced less delays while Chrony~\cite{chrony} and Firefly~\cite{firefly} use min-RTT filtering for the same purpose.
Clock drifts show a strong dependence on the temperature of the crystal oscillator driving the clock, which changes slowly.
Hence, Kalman Filtering~\cite{kalmanfilteringreview,kalmanFiltering, KalmanFiltering2} or Linear programming~\cite{LinearProgramming} is used to track these dependencies. 
Graham~\cite{graham} uses temperature sensors on the motherboard and \ac{ML} to compensate for the oscillator drift. 
These techniques treat \ac{PDV} as random Gaussian noise. 
However, competing network traffic is transient, and the distribution of link delay is not Gaussian~\cite{Timely}. 
These techniques do not meet the tight synchronization guarantees needed for many applications under heavy network traffic~\cite{LimitationsClockSync}.
The methods proposed in our work are backward compatible, and current network equipment will not be affected by the proposed changes. 
Moreover, the performance of all the low-pass filters improves with our proposed method. 

\ac{RTT} based filtering techniques are inherently limited in filtering the delay asymmetry as they must use statistical methods on the joint distribution of the forward and reverse delay. 
To achieve better clock synchronization performance, \ac{PTP} provides hardware support in intermediate switches via \ac{TC}, allowing delay asymmetry to be compensated individually in the forward and reverse paths. 
The \ac{TC} adds the residence time of the packet in the switch to a correction field. 
Then the above calculations for filtered delay and $\OffsetEst$ (Eq.~\ref{eq:offset_error}) remain the same with $T1$ being replaced by $T1+\Delay{ReqCorr}{}$ and $T4$ by $T4-\Delay{RespCorr}{}$. 
Here $\Delay{ReqCorr}{}$ and $\Delay{RespCorr}{}$ are the total residence times of the respective packets in all the intermediate switches.
However, this operation at the switch is complex~\cite{TransparentClockImpl}, as it requires two timestamping operations: extracting the correction field value from the PTP header, subtracting the ingress timestamp from the egress timestamp, and adding the result to the correction field. 
This difficult implementation hinders the adoption of \ac{TC} on generic mass-manufactured switch chipsets.
While some architectures for a full-hardware implementation of \ac{TC} have been proposed \cite{hwptptc}, the implementation remains complex and is supported on fewer ASICs \cite{NetTimeLogicPTPTransparentClock}. 
Fig.~\ref {fig:tc_comparison_processing} shows the difference between the processing time of a \ac{PTP} packet in the 1G Kontron D10 switch. 
The residence time of the packet in the switch increases by 3 orders of magnitude when using the \ac{TC} mode instead of the \ac{LS} mode. 
Moreover, we monitor the switch's CPU utilization while handling 1 source and 3 sinks simultaneously.  
Fig.~\ref{fig:tc_comparison_cpu} shows that as soon as the test starts, the utilization in the \ac{TC} mode shoots up, reaching nearly 20\%. 
The switch was not loaded with any other tasks during this test. 
Thus, the complex pipeline increases the processing delay of \ac{PTP} packets~\cite{tc_verifying_performance, substation_performance}, and some vendors have demonstrated significant errors in the accuracy of \ac{TC} residence time under network load~\cite{tc_verifying_performance}.
In Sec. \ref{subsec:impl_complexity}, we illustrate the required resources of a \ac{TC} implemented on P4~\cite{p4} by analyzing the compiler output and comparing it with our approach.  
\begin{figure}
    \centering
    \begin{subfigure}{0.48\columnwidth}
        \centering
        \input{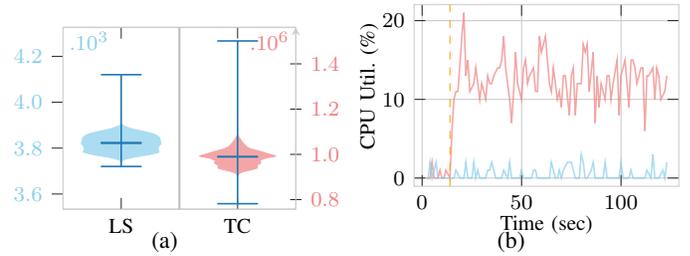}
        \vspace{-0.8cm}
        \caption{}
        \label{fig:tc_comparison_processing}
    \end{subfigure}
    \hfill
    \begin{subfigure}{0.48\columnwidth}
        \centering
\begin{tikzpicture}

\definecolor{darkslategray38}{RGB}{38,38,38}
\definecolor{gray}{RGB}{128,128,128}
\definecolor{lightcoral}{RGB}{240,128,128}
\definecolor{lightgray204}{RGB}{204,204,204}
\definecolor{skyblue}{RGB}{135,206,235}
\definecolor{steelblue31119180}{RGB}{31,119,180}
\definecolor{orange}{RGB}{255,165,0}
\begin{axis}[
width=1.2\textwidth,         
height=4cm, 
axis line style={lightgray204},
x grid style={lightgray204},
tick label style={font=\footnotesize}, %
xlabel={\footnotesize{Time (sec)}},
xlabel style={at={(axis description cs:0.5,0.01)}, anchor=center, color=black},
ylabel={\footnotesize{CPU Util. (\%)}},
ylabel style={at={(axis description cs:0.18,0.55)}, anchor=center, color=black},
xmajorgrids,
xmajorticks=true,
xtick style={color=darkslategray38, font=\tiny},
xmin=-4.05, xmax=129.05,
xtick style={color=darkslategray38},
y grid style={lightgray204},
ymajorgrids,
ymajorticks=true,
ymin=-1.05, ymax=22.05,
ytick style={color=darkslategray38, font=\tiny}
]
\addplot [semithick, lightcoral, opacity=0.7]
coordinates {
    (2,0)
    (3,0)
    (4,0)
    (5,2)
    (6,0)
    (8,0)
    (9,1)
    (10,0)
    (11,0)
    (12,1)
    (14,0)
    (15,6)
    (16,10)
    (17,11)
    (18,11)
    (20,18)
    (21,21)
    (22,13)
    (23,15)
    (24,11)
    (26,12)
    (27,14)
    (28,12)
    (29,10)
    (30,12)
    (32,10)
    (33,13)
    (34,12)
    (35,12)
    (36,13)
    (38,14)
    (39,14)
    (40,15)
    (41,18)
    (42,14)
    (44,10)
    (45,7)
    (46,11)
    (47,16)
    (48,12)
    (50,10)
    (51,14)
    (52,15)
    (53,12)
    (54,13)
    (56,14)
    (57,11)
    (58,11)
    (59,18)
    (60,16)
    (62,18)
    (63,15)
    (64,12)
    (65,11)
    (66,14)
    (68,10)
    (69,11)
    (70,9)
    (71,13)
    (72,13)
    (74,8)
    (75,13)
    (76,14)
    (77,8)
    (78,12)
    (80,12)
    (81,17)
    (82,16)
    (83,17)
    (84,11)
    (86,16)
    (87,7)
    (88,13)
    (89,9)
    (90,12)
    (92,10)
    (93,13)
    (94,13)
    (95,10)
    (96,12)
    (98,15)
    (99,8)
    (100,13)
    (101,14)
    (102,15)
    (104,9)
    (105,15)
    (106,15)
    (107,14)
    (108,12)
    (110,14)
    (111,14)
    (112,6)
    (113,13)
    (114,13)
    (116,14)
    (117,10)
    (118,14)
    (119,13)
    (120,10)
    (122,11)
    (123,13)
};
\addplot [semithick, skyblue, opacity=0.7]
coordinates {
    (2,0)
    (3,0)
    (4,2)
    (5,0)
    (7,2)
    (8,0)
    (9,0)
    (10,0)
    (11,0)
    (13,0)
    (14,0)
    (15,0)
    (16,0)
    (17,1)
    (19,0)
    (20,0)
    (21,0)
    (22,2)
    (23,0)
    (25,0)
    (26,2)
    (27,0)
    (28,1)
    (29,0)
    (31,0)
    (32,0)
    (33,1)
    (34,1)
    (35,2)
    (37,0)
    (38,0)
    (39,0)
    (40,0)
    (41,1)
    (43,0)
    (44,2)
    (45,0)
    (46,0)
    (47,0)
    (49,0)
    (50,1)
    (51,0)
    (52,0)
    (53,0)
    (55,0)
    (56,1)
    (57,1)
    (58,0)
    (59,0)
    (61,0)
    (62,0)
    (63,0)
    (64,0)
    (65,2)
    (67,0)
    (68,0)
    (69,1)
    (70,0)
    (71,1)
    (73,2)
    (74,2)
    (75,2)
    (76,1)
    (77,0)
    (79,0)
    (80,3)
    (81,2)
    (82,0)
    (83,0)
    (85,1)
    (86,2)
    (87,0)
    (88,0)
    (89,0)
    (91,0)
    (92,2)
    (93,0)
    (94,0)
    (95,0)
    (97,0)
    (98,0)
    (99,0)
    (100,2)
    (101,0)
    (103,0)
    (104,0)
    (105,1)
    (106,0)
    (107,0)
    (109,0)
    (110,0)
    (111,0)
    (112,0)
    (113,1)
    (115,0)
    (116,3)
    (117,0)
    (118,0)
    (119,0)
    (121,1)
    (122,0)
    (123,2)
};
\addplot [thin, dashed, orange]
coordinates {
    (14,-1.05)
    (14,22.05)
};
\end{axis}

\end{tikzpicture}
        \vspace{-0.8cm}
        \caption{}
        \label{fig:tc_comparison_cpu}
    \end{subfigure}
    \caption{\small{Complexity of Transparent Clocks: (a) Measurements of the packet residence time show an increase by 3 orders of magnitude when the switch is running in TC mode. (b) The inherent complexity of TC requires involving the CPU of the switch, limiting its scalability in larger networks.} }
    \label{fig:tc_comparison}
\end{figure}

Finally, to mitigate the impact of an erroneous offset estimation and the finite delay in clock adjustment, the client applies a \ac{PI} filter that incrementally adjusts its clock over time. 
The \ac{PI} filter is useful for stabilizing the clock synchronization control loop. 
The P and I coefficients can be finely tuned using Bayesian optimization methods~\cite {bayesian_optimization}. 
However, the \ac{PI} filter takes the offset estimation as an input, and a large error in the estimation procedure can only be partially corrected by such a filter. 

In this paper, we propose a scheme where the offset estimation error is improved by congestion marking - a scheme much simpler than a \ac{TC}. 
The proposed method is backward-compatible with widely used versions of PTP, NTP, and Chrony. 
It requires only a few changes at the endpoints and in the network switches for its basic adoption. 

Such marking methods are already used in congestion control mechanisms; in the next section, we provide some background on this topic. 

\subsection{Congestion Marking}
\label{subsec:congestion_marking}
\begin{figure}
    \centering
    \begin{subfigure}{\columnwidth}
        \centering
         \includegraphics[width=\textwidth]{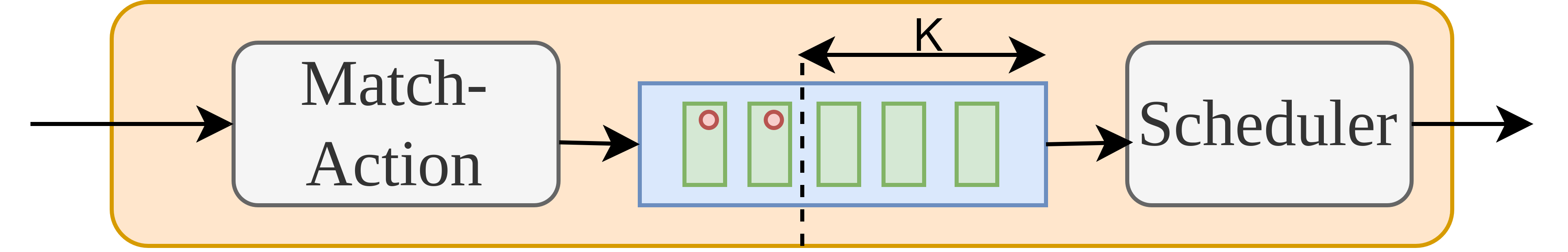}
        \caption{}
        \label{fig:ecn_egress}
    \end{subfigure}
    \hfill
    \begin{subfigure}{\columnwidth}
        \centering
        \input{Images/Tikz/scatter_plot_motivation}
        \caption{}
        \label{fig:scatter_plot_motivation}
    \end{subfigure}
    \caption{\small{ECN Marking at Egress Port: (a) The match-action pipeline, which puts the packets in different priority queues, also takes care of congestion marking. Before placing them in their respective queues, it sets the ECN flag to CE(11) if the queue length exceeds a threshold $K$. Thus, marked packets also indicate delayed packets due to queuing. (b) A delay measurement  of roughly 9000 packets. The red line shows the inferred time threshold from $K$. We see that most packets above the red line are orange (marked) and those below it are blue (unmarked), indicating ECN marking's ability to signal delay.}}
    \label{fig:ecn_background}
\end{figure}

The first proposal to mark congested packets was suggested in~\cite{ramakrishnan2001addition} in the form of \ac{ECN} to improve congestion control.  
It suggested that two bits, called the \ac{ECN} field in the IP header, be used by switches and routers to mark packets that are queued behind a threshold number of bytes, $K$.  
This enables congestion control algorithms to adjust their flows before a packet is dropped. 
The two bits allow for four code points, which indicate whether a packet is ECN-capable and whether congestion is experienced. 
This leaves one code point (10) free for future experimental use, according to the latest standard~\cite{ecnNew}.
The match-action pipeline that places packets in the respective egress queues also marks congested packets if they are already marked as ECN-capable and if the queue is filled above a threshold of $K$ bytes. 
Thus, congestion-marked packets also indicate delayed packets, with the delay amount dependent on the threshold $K$. 
The congestion marking mechanism at the egress port of a switch is illustrated in Fig.~\ref{fig:ecn_egress}. 
To determine the extent to which the congestion marking scheme of a commodity switch can indicate delay, we measure delay across the EdgeCore AS7726 switch at 10 Gbps. 
For ECN, we set $K=3600$ bytes and congest the switch's egress port with 1500-byte (MTU) packets from different sources at a combined rate of 4 Gbps.
The results are shown in Fig.~\ref{fig:scatter_plot_motivation}. 
The red line is obtained from $8(K+\text{MTU})/10^{10}$, i.e. the total bits before the threshold divided by the line rate of the egress port. 
We then send probe \ac{PTP} packets. 
We observe that the mechanism correctly classifies packets below and above the threshold, with most misclassifications near the threshold line due to queue transience. 

The success of \ac{ECN} stems from its simple operation - setting a single bit. 
It proposed using two bits in the IP header, and hence switches or routers do not have to inspect the packet any further than they would in the match-action pipeline before the introduction of ECN. 
It is backward-compatible, as the addition of ECN does not interfere with the network's operation; ECN bits are simply ignored by devices that do not support it.

Choosing the right threshold $K$ and marking strategy depends on the demand, capacity, topology, and TCP variants and is still actively investigated ~\cite{ECNTuning, KaanTCP}.
In this work, we evaluate the threshold as a function of the mean packet queuing time and provide an Algorithm to find the optimal threshold given an empirical per-hop queuing distribution.

\subsection{Motivating Example}
\label{subsec:motivating_example}

\begin{figure}
    \centering
    \begin{subfigure}{0.98\columnwidth}
        \centering        \includegraphics[width=0.8\linewidth]{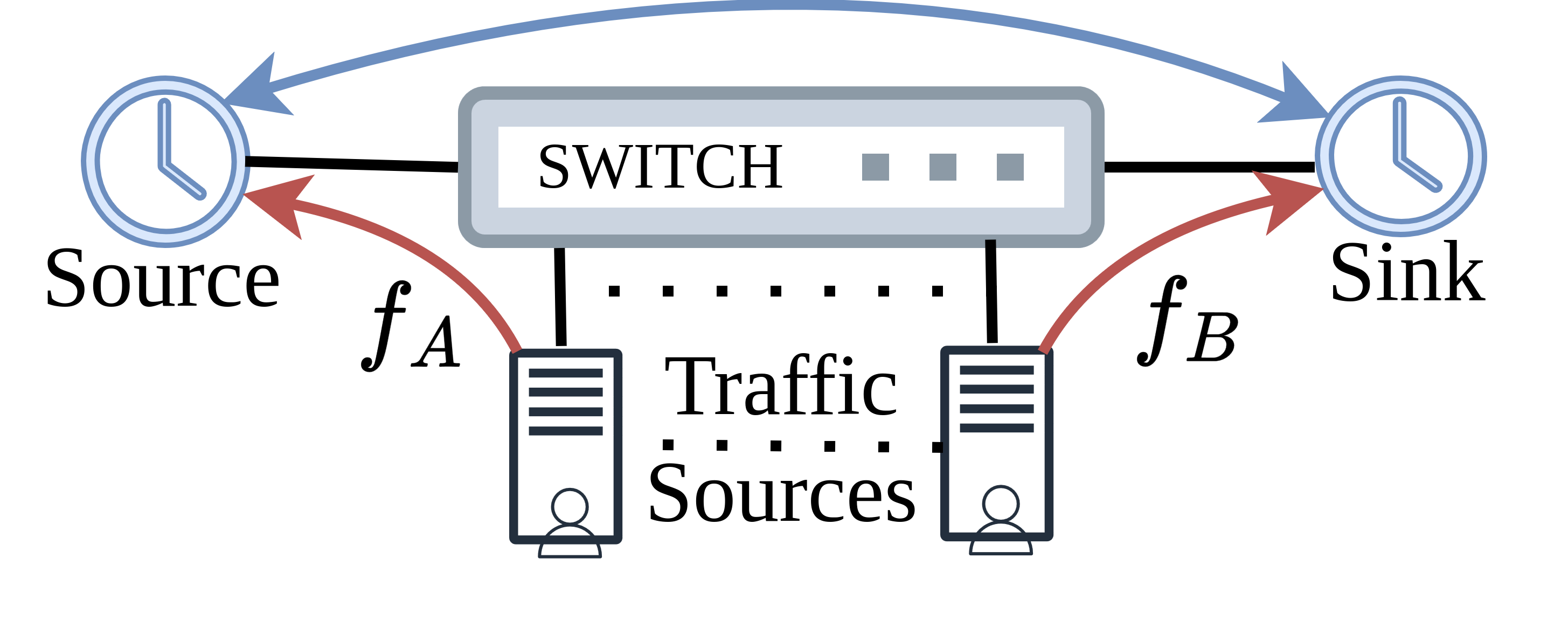}
        \vspace{-0.7cm}
        \caption{}
        \label{fig:omnet_setup}
    \end{subfigure}
    \hfill
    \begin{subfigure}{0.98\columnwidth}
        \centering
\begin{tikzpicture}

\definecolor{darkgray176}{RGB}{176,176,176}
\definecolor{darkorange25512714}{RGB}{255,127,14}
\definecolor{forestgreen4416044}{RGB}{44,160,44}
\definecolor{steelblue31119180}{RGB}{31,119,180}
\definecolor{green}{RGB}{0,128,0}

\begin{axis}[
width=9cm,
height=3.3cm,
tick align=outside,
tick pos=left,
label style={font=\footnotesize}, 
x grid style={darkgray176},
xlabel={$t$(s)},
xlabel style={at={(axis description cs:0.5,0.05)}, anchor=center, color=black},
xmajorgrids,
xmin=-0.099997035, xmax=10.350002965,
 tick label style={font=\footnotesize}, %
xtick style={color=black},
y grid style={darkgray176},
ylabel={$\epsilon$ ($\si{\micro\second}$)},
ymajorgrids,
ylabel style={at={(axis description cs:0.06,0.55)}, anchor=center, color=black},
ymin=-20.447149399995, ymax=56.422569399895,
ytick style={color=black},
legend style={
    at={(0.5,1.02)},
    anchor=south,
    legend columns=-1,
    font=\footnotesize,
    /tikz/every even column/.append style={column sep=0.2cm}
}
]
\path [draw=orange, fill=red, opacity=0.05]
(axis cs:2,0)
--(axis cs:2,100)
--(axis cs:8,100)
--(axis cs:8,-50)
--(axis cs:2,-50)
--cycle;
\addplot [semithick, steelblue31119180]
coordinates {
    (0.375002965,-0.04)
    (0.500002965,-0.04)
    (0.625002965,-0.04)
    (0.750002965,-0.04)
    (0.875002965,-0.04)
    (1.000002965,-0.04)
    (1.125002965,-0.04)
    (1.250002965,-0.04)
    (1.375002965,-0.04)
    (1.500002965,-0.04)
    (1.625002965,-0.04)
    (1.750002965,-0.04)
    (1.875002965,-0.04)
    (2.000002965,-0.04)
    (2.125002965,-0.04)
    (2.25001329557,10.5465699999)
    (2.375002965,-5.3332849999)
    (2.500030616823,27.867823)
    (2.625036986902,19.1709899999)
    (2.750002965,-16.6224509999)
    (2.875040945467,38.196467)
    (3.000009129305,-12.7379289999)
    (3.125023906479,17.947326)
    (3.250023874377,10.5266369999)
    (3.375028488296,15.1566069999)
    (3.500024427123,8.788475)
    (3.625032631179,19.0231169999)
    (3.750003879162,-13.8309279999)
    (3.875002965,-0.6250809999)
    (4.000047989687,45.2406869999)
    (4.125021662,-22.680344)
    (4.250011657111,8.908111)
    (4.375011037753,3.8146969999)
    (4.500062402141,52.6557639999)
    (4.62501446819,-16.7108809999)
    (4.750002965,-5.9195949999)
    (4.875016728771,13.9797709999)
    (5.000010072791,0.313905)
    (5.125024977647,18.5467509999)
    (5.25003119217,-1.2761539999)
    (5.375028342045,20.6439599999)
    (5.500003121838,-12.4436849999)
    (5.625026056989,23.10157)
    (5.750006116308,-8.306687)
    (5.875019183129,11.6414749999)
    (6.000026523031,17.0814659999)
    (6.125004558802,-10.097214)
    (6.250007111003,3.437102)
    (6.375002965,-2.2410019999)
    (6.500003547161,0.798161)
    (6.625034447984,31.2799029999)
    (6.750031347766,12.7292739999)
    (6.875005459851,-11.6085319999)
    (7.000016141911,12.017485)
    (7.125002965,-6.756456)
    (7.250008262208,5.513208)
    (7.375002965,-2.8166039999)
    (7.500021097821,18.348821)
    (7.62502764749,15.704079)
    (7.750052066135,33.9348899999)
    (7.875027725478,1.7544099999)
    (8.000005382575,-9.874664)
    (8.125002965,-1.3767879999)
    (8.250002965,-0.04)
    (8.375002965,-0.04)
    (8.500002965,-0.04)
    (8.625002965,-0.04)
    (8.750002965,-0.04)
    (8.875002965,-0.04)
    (9.000002965,-0.04)
    (9.125002965,-0.04)
    (9.250002965,-0.04)
    (9.375002965,-0.04)
    (9.500002965,-0.04)
    (9.625002965,-0.04)
    (9.750002965,-0.04)
    (9.875002965,-0.04)
};\addlegendentry{Raw:$14.3\mu$s}
\addplot [semithick, red]
coordinates {
    (0.375002965,-0.04)
    (0.500002965,-0.04)
    (0.625002965,-0.04)
    (0.750002965,-0.04)
    (0.875002965,-0.04)
    (1.000002965,-0.04)
    (1.125002965,-0.04)
    (1.250002965,-0.04)
    (1.375002965,-0.04)
    (1.500002965,-0.04)
    (1.625002965,-0.04)
    (1.750002965,-0.04)
    (1.875002965,-0.04)
    (2.000002965,-0.04)
    (2.125002965,-0.04)
    (2.25001329557,10.5465699999)
    (2.375002965,-0.04)
    (2.500030616823,27.867823)
    (2.625036986902,33.1249019999)
    (2.750002965,-0.04)
    (2.875040945467,38.196467)
    (3.000009129305,1.08702)
    (3.125023906479,15.8641939999)
    (3.250023874377,10.5266369999)
    (3.375028488296,15.1405559999)
    (3.500024427123,11.0793829999)
    (3.625032631179,19.2834389999)
    (3.750003879162,-9.7288999999)
    (3.875002965,-10.6387399999)
    (4.000047989687,34.6419469999)
    (4.125021662,-10.8990619999)
    (4.250011657111,-1.9509509999)
    (4.375011037753,3.8146969999)
    (4.500062402141,52.6557639999)
    (4.62501446819,7.5548129999)
    (4.750002965,-4.5140559999)
    (4.875016728771,9.505715)
    (5.000010072791,2.8497349999)
    (5.125024977647,17.7545909999)
    (5.25003119217,3.9785749999)
    (5.375028342045,19.7134499999)
    (5.500003121838,-5.506757)
    (5.625026056989,18.3589039999)
    (5.750006116308,-3.6425779999)
    (5.875019183129,8.3960439999)
    (6.000026523031,17.0814659999)
    (6.125004558802,-4.8827629999)
    (6.250007111003,-2.3305619999)
    (6.375002965,-2.2410019999)
    (6.500003547161,-1.402841)
    (6.625034447984,29.9953299999)
    (6.750031347766,26.397764)
    (6.875005459851,0.5098489999)
    (7.000016141911,12.017485)
    (7.125002965,-2.2410019999)
    (7.250008262208,4.137782)
    (7.375002965,-2.8166039999)
    (7.500021097821,15.5722169999)
    (7.62502764749,22.1218859999)
    (7.750052066135,43.6275309999)
    (7.875027725478,18.2600219999)
    (8.000005382575,-6.560836)
    (8.125002965,-9.2344109999)
    (8.250002965,-9.2344109999)
    (8.375002965,-9.2344109999)
    (8.500002965,-1.3767879999)
    (8.625002965,-0.04)
    (8.750002965,-0.04)
    (8.875002965,-0.04)
    (9.000002965,-0.04)
    (9.125002965,-0.04)
    (9.250002965,-0.04)
    (9.375002965,-0.04)
    (9.500002965,-0.04)
    (9.625002965,-0.04)
    (9.750002965,-0.04)
    (9.875002965,-0.04)
};\addlegendentry{Filtered:$14.1\mu$s}
\addplot [thick, green]
coordinates {
    (0.375002965,-0.04)
    (0.500002965,-0.04)
    (0.625002965,-0.04)
    (0.750002965,-0.04)
    (0.875002965,-0.04)
    (1.000002965,-0.04)
    (1.125002965,-0.04)
    (1.250002965,-0.04)
    (1.375002965,-0.04)
    (1.500002965,-0.04)
    (1.625002965,-0.04)
    (1.750002965,-0.04)
    (1.875002965,-0.04)
    (2.000002965,-0.04)
    (2.125002965,-0.04)
    (2.25001329557,0.0465699999)
    (2.375002965,-0.04)
    (2.500030616823,-0.1321769999)
    (2.625036986902,5.1249019999)
    (2.750002965,-0.04)
    (2.875040945467,10.1964669999)
    (3.000009129305,-0.619695)
    (3.125023906479,0.1574789999)
    (3.250023874377,0.0266369999)
    (3.375028488296,1.156607)
    (3.500024427123,0.5793829999)
    (3.625032631179,1.7834389999)
    (3.750003879162,0.7710999999)
    (3.875002965,-0.6250809999)
    (4.000047989687,16.6556059999)
    (4.125021662,-0.6250809999)
    (4.250011657111,1.3230299999)
    (4.375011037753,0.703672)
    (4.500062402141,28.155764)
    (4.62501446819,0.2451339999)
    (4.750002965,-0.7043769999)
    (4.875016728771,-0.6846059999)
    (5.000010072791,-0.305804)
    (5.125024977647,1.0467509999)
    (5.25003119217,-0.8237259999)
    (5.375028342045,1.0831589999)
    (5.500003121838,0.1909419999)
    (5.625026056989,2.1015699999)
    (5.750006116308,-0.699215)
    (5.875019183129,-0.8612899999)
    (6.000026523031,-0.932388)
    (6.125004558802,1.8561479999)
    (6.250007111003,0.9083489999)
    (6.375002965,0.0063459999)
    (6.500003547161,0.798161)
    (6.625034447984,3.2799029999)
    (6.750031347766,0.1477639999)
    (6.875005459851,-1.2082299999)
    (7.000016141911,-0.926472)
    (7.125002965,-0.3593829999)
    (7.250008262208,2.0132079999)
    (7.375002965,-0.04)
    (7.500021097821,0.8488209999)
    (7.62502764749,0.39849)
    (7.750052066135,18.1848899999)
    (7.875027725478,0.0320669999)
    (8.000005382575,2.375336)
    (8.125002965,-0.4844109999)
    (8.250002965,-0.2982389999)
    (8.375002965,-0.2982389999)
    (8.500002965,-0.04)
    (8.625002965,-0.04)
    (8.750002965,-0.04)
    (8.875002965,-0.04)
    (9.000002965,-0.04)
    (9.125002965,-0.04)
    (9.250002965,-0.04)
    (9.375002965,-0.04)
    (9.500002965,-0.04)
    (9.625002965,-0.04)
    (9.750002965,-0.04)
    (9.875002965,-0.04)
};\addlegendentry{Marked:$4.5\mu$s}
\end{axis}

\end{tikzpicture}
        \vspace{-0.7cm}
        \caption{}
        \label{fig:omnet_results}
    \end{subfigure}
    \caption{\small{Motivating Example: (a) A simulation setup of PTP on OMNET++ where a short-lived moderate UDP flow causes asymmetric queing delay leading to an offset estimation error. (b) The raw offset estimation error (blue) can only be partially corrected by a median delay filter (red), but further improved (70\%) by marking packets (green).} }
    \label{fig:motivating_example}
\end{figure}

Fig. \ref{fig:omnet_setup} shows a simulation setup where a short-lived UDP flow $f_{B}$ causes delay asymmetry in the PTP synchronization procedure, leading to offset estimation error; in this case, $f_{A}$ is absent. 
The simulation is set up in OMNET++ using the INET framework, with modifications to the gPTP application.
In such a case, the offset estimation error during the duration of flow increases dramatically as seen in Fig.~\ref{fig:omnet_results}. 
Such a flow can only be partially corrected by a median-delay filter, as most packets experience some queuing delay, resulting in asymmetry between the forward and reverse paths of PTP packets. 
However, if the packets that experience queuing delay are marked by the switch, the asymmetry can be partially compensated by the sink, thereby greatly reducing the offset estimation error. 
The values in the legend are the \ac{RMS} errors between the two clocks in $\mu$s, and we can see that our approach reduces them by $\approx 70\%$. 

Our approach compensates for part of the queuing delay, reduces \ac{PDV}, and offers lower implementation complexity than \ac{TC}.
In the next section, we demonstrate how the offset estimation error $\epsilon$ is related to queuing estimation and introduce our congestion marking approach.

\section{Offset Estimation with Congestion Marking}
\label{sec:offset_estimation}
\subsection{Generic Expressions for Queuing Delay Correction}
\label{subsec:queuing_delay_corr}

The delays in the forward and reverse directions of the network are usually very similar when packets do not experience congestion~\cite{yashinfocom}, hence Cristian's algorithm assumes that the statistical properties of the forward and reverse delays are the same. 
Usually, the forward and reverse paths of a packet are the same, as networks use a loop-free spanning tree to establish a unique path from the source to the destination and back.
In cases where they are not the same, a static compensation for the delay asymmetry can be made~\cite{levesqueDelayAssymetry}. 

Thus, we split the delay of message $m$ into a base delay and an additional queueing delay, 
\begin{equation}
    \Delay{m}{} = \Delay{m}{\text{base}} + \Delay{m}{q}, \textrm{ for } m \in \{\text{Resp}, \text{Req}\}.  
\end{equation}

Putting these in equations~\ref{eq:forward_time} and \ref{eq:reverse_time} and assuming that $\Delay{\text{Req}}{\text{base}}=\Delay{\text{Resp}}{\text{base}}$, we get that the offset estimation error is due to different queuing delays in the forward and reverse paths, 
\begin{equation}
    \epsilon = \frac{(\Delay{Resp}{q}-\Delay{Req}{q})}{2}.
    \label{eq:error_queueing}
\end{equation}
If one estimates the queuing delay, it can be compensated for in equations \ref{eq:forward_time} and \ref{eq:reverse_time}. 
These compensations percolate to the offset estimation error in the form of correction terms,
\begin{equation}
    \epsilon_{\text{comp}} =  \frac{\Delay{Resp}{q}-\Delay{Req}{q}-\DelayEst{Resp}{q}+\DelayEst{Req}{q}}{2}.
\end{equation}
If $\Delay{m}{E}=\Delay{m}{q}-\DelayEst{m}{q}$ is the error in the estimation of the queing delay, then the compensated offset estimation error is a function of only the error in the estimation of the queing delays, 

\begin{equation}
    \epsilon_{\text{comp}} = \frac{\Delay{Resp}{E}-\Delay{Req}{E}}{2}.
    \label{eq:error_delay_est}
\end{equation}

Taking the absolute value on both sides of the equation, as long as the difference between the queuing delay estimation error and the actual queuing delay is smaller, we can reduce the offset estimation error. 
Note that a \ac{TC} does exactly that but at a much higher implementation cost. 
It aims to minimize the offset estimation error by accounting for packet queuing and processing delays. 

\subsection{Correction of Marked Packets}
\label{subsec:correction_of_marked_packets}

As explained in Sec. \ref{subsec:congestion_marking}, a marked packet is an indication of a delayed packet.
Similar to ECN, we assume that the congestion marking indication is done in the  header of the packet. 
Depending on the level of congestion and the number of available bits in the header, the indication can take an integer value up to $N$; hence, we call this header value a marking counter. 
We discuss different available bits in the PTP, NTP, and IP headers for the marking counter in Sec. ~\ref{subsec:available_packet_headers} and the implementation of the marking counter itself in Sec.~\ref{subsec:marking_coutner}.
In the following Subsec., we discuss the correction method based on the counter value and its effect on the delay estimation error. 
Each time a switch places a packet in an egress queue with a queue length greater than a threshold $K$, the marking counter is incremented.
We also present the general case, where at each hop, $R$ thresholds are available at the switch, each a multiple of $K$.  
The counter can have a maximum value of $N$, depending on the number of bits available in the header and the marking strategy.
Let $\mathcal{L}_{m}=(L_{m}^{0}, L_{m}^{1},...,L_{m}^{s})$ be an ordered set of links traversed by message $m$ and $\DelayHop{m}{q}{l} \text{for } l \in \mathcal{L}_{m}$, the queuing delay experienced per hop such that the total queuing delay is the sum of individual queueing delays across all hops.   
For the ease of notation, when concerning only one message $m$, we enumerate a hop only with its number, i.e., we write $L_{m}^{a}$ as simply $a$.  
We assume $\Delay{K}{*}$ as the estimate of the minimum queuing delay that a marked packet experienced when the threshold was set to $K$. 
At each hop, the switch compares the current queue length against $R$ thresholds and increments the marking counter accordingly. 
The client or the server of the synchronization protocol then checks the number of congestion marks $n$ and corrects the delay by $n\Delay{K}{*}$, assuming that a mark corresponds to the packet facing a delay of at least $\Delay{K}{*}$.
Thus, an \textit{increment} in the counter corresponds to a \textit{decrement} in the delay estimation error by $\Delay{K}{*}$. 
Since only $N$ bits are available, we have a constraint that $n,R\leq N$.
The delay estimation error at each hop $l$ conditioned on it facing any queuing delay is, 
\begin{equation}
    \DelayHop{m}{E}{l} =
    \begin{cases}
    \DelayHop{m}{q}{l} - R\Delay{K}{*} & \text{if } \DelayHop{m}{q}{l} > R\Delay{K}{*} \\
    \DelayHop{m}{q}{l} - (R-1)\Delay{K}{*} & \text{if } R\Delay{K}{*} \geq \DelayHop{m}{q}{l} > (R-1)\Delay{K}{*} \\
    \vdots & \vdots \\
    \DelayHop{m}{q}{l} - \Delay{K}{*} & \text{if } 2\Delay{K}{*} \geq \DelayHop{m}{q}{l} > \Delay{K}{*} \\
    \DelayHop{m}{q}{l} & \text{otherwise}. 
    \end{cases}. 
    \label{eq:performed_correction}
\end{equation}
$\DelayHop{m}{q}{l} = \DelayHop{m}{E}{l} = 0$ if there is no queuing delay at hop $l$. 
The total delay estimation error is the sum of the delay estimation errors at each hop.
\begin{equation}
    \Delay{m}{E} = \sum_{j \in \mathcal{L}_{m}} \DelayHop{m}{E}{l}
    \label{eq:delay_est_sum}
\end{equation}

\subsection{Measurement of Marking Accuracy}
\label{subsec:measurement_of_marking_accuracy}

\begin{figure}
    \centering
    \begin{subfigure}{0.50\columnwidth}
        \centering
         \includegraphics[width=\textwidth]{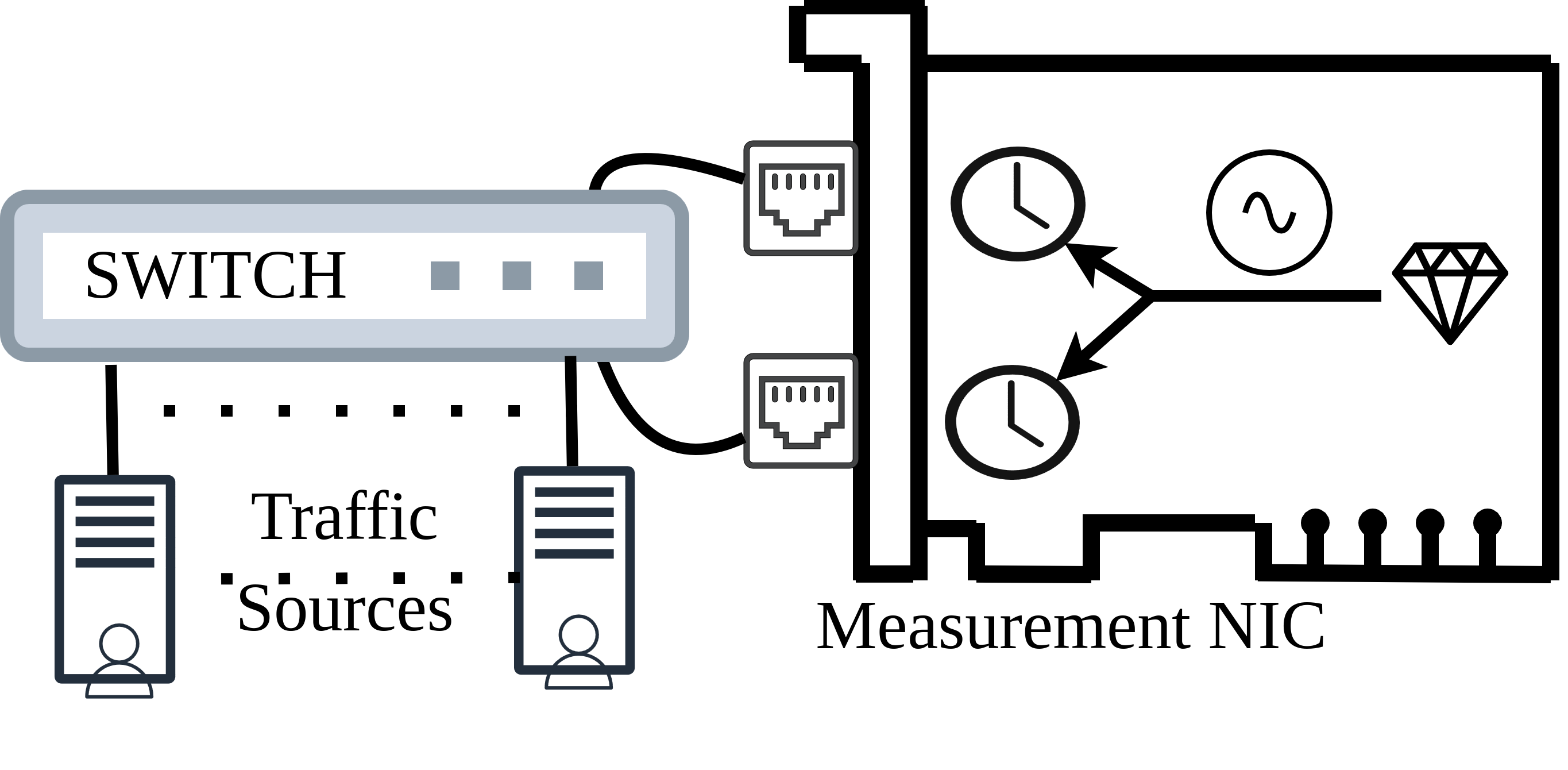}
        \caption{}
        \label{fig:measurement_setup}
    \end{subfigure}
    \hfill
    \begin{subfigure}{0.46\columnwidth}
        \centering
        \includegraphics[width=\textwidth]{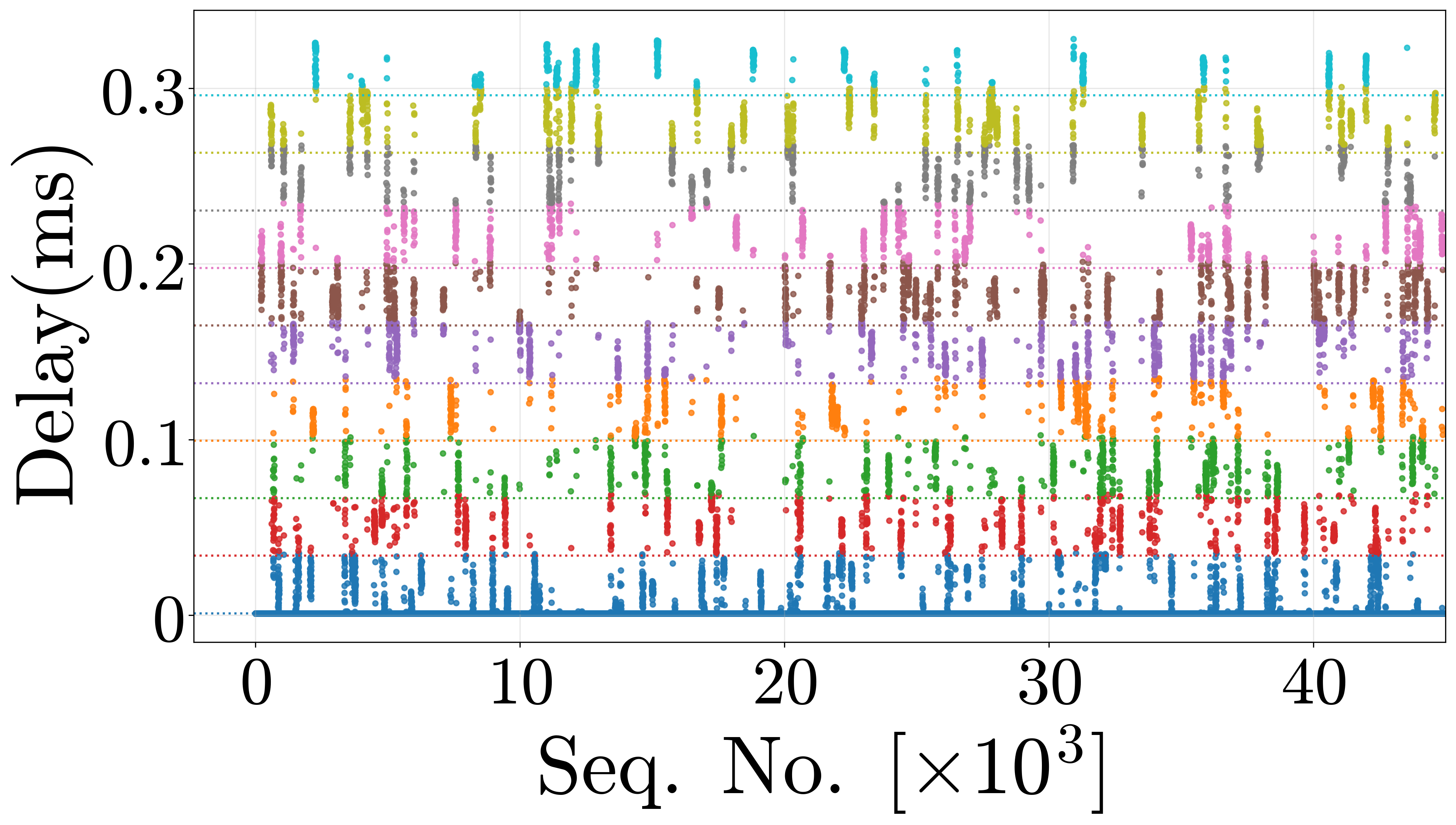}
        \caption{}
        \label{fig:delay_marking_tofino}
    \end{subfigure}
    \caption{\small{Measurement of Marking Accuracy: (a) A two-port NIC where the clocks are driven by the same oscillator is used to accurately measure the delay of packets. Separate sources of traffic congest the switch's outgoing port buffers, causing delays. (b) Example of marking results for one run of Tofino, where packets are colored according to the value of the marking counter. }}
    \label{fig:marking_accuracy}
\end{figure}

An \ac{ECN}-marked packet indicates congestion. 
We wish to use it to obtain information about the packet's delay. 
Thus, the question arises: to what extent are congestion signals an indication of delay, and how effectively can we map a set congestion threshold $K$-bytes to some delay $\Delay{K}{*}$. 
A simple way to map $K$ to $\Delay{K}{*}$ is to estimate the total delay for $K$ bits according to the line-rate of the egress port.
We will consider more complex mappings between $K$ and $\Delay{K}{*}$ under multiple queues and egress schedulers in future work.
In this work, we use the simple equation that $\Delay{K}{*}$ = $(K\cdot8)/\text{LR}$. 
Where LR is the Line Rate in bits per second. 
This corresponds to the time required to serialize $K$ bytes up to the threshold. 
To validate this assumption, we use a loopback testbed similar to~\cite{yashinfocom}, as illustrated in Fig.~\ref{fig:measurement_setup}.
The two ports on the Intel X520 NIC are driven by the same crystal oscillator; therefore, their relative drift is zero. 
Then we modify the \textit{ptp4l} program to report the one-way delay (T2-T1) and the marking counter value in either the ECN field or the currently unused \textit{Reserved3} PTP field. 
The program uses hardware timestamping and, therefore, any jitter caused by CPU scheduling, interrupt handling, context switching, or operating system latency is effectively eliminated from the measurement process. 
Such a setup provides one-way delay values with an accuracy of $\SI{\pm 8}{\nano\second}$.
We then generate TCP and UDP traffic from four other sources at different rates using \textit{iperf} or \textit{nuttcp} to saturate the switch's outgoing port. 
For the \ac{CMC}, we use the Intel Tofino, and for the ECN measurements, we use the EdgeCore AS7726-32X (EC) in addition to the Tofino to demonstrate the ability of \ac{COTS} \ac{ECN} marking. 
The exemplary results for \ac {ECN} and \ac{CMC} are shown in Fig.~\ref{fig:scatter_plot_motivation} and \ref{fig:delay_marking_tofino} respectively, where different colors represent packets with a different value of ECN or marking counter, and the dashed horizontal line represents the theoretical threshold based on the line rate.
Each run covers around 46,500 measurements, and for each traffic type and rate, we conduct 10 such runs. 

The misclassification rate increases much more as the rate of interfering traffic increases because the queues can fill up beyond the threshold very quickly.
The worst-case results were obtained with a total aggregated rate of 9.4 Gbps on a link with a capacity of 10 Gbps. 
Classical ECN tests for both switches were conducted only with UDP traffic, as TCP cross-traffic adjusts its congestion window after receiving a CE-marked packet.
The worst-case \ac{CMC} results are obtained with TCP traffic, which performs worse due to its inherent burstiness from the congestion window. 
All the accuracy results are shown in Table~\ref{tab:worst-case_marking_acc}. 
Between the two switches, the Tofino shows much better results than EC, with a maximum misclassification rate of 2.6\% compared to 8.5\% for EC.  
The maximum error for the Tofino is just 2.2\% of the threshold value compared to 5.8\% for the EC. 
In the case of \ac{CMC}, the percentage of misclassified packets is much higher, with 32 thresholds.  
The worst-case classification error was 11952~\si{ns} away from the threshold. 
While this absolute error  may seem large, the threshold was set to 918589~\si{ns}, so the error is quite low as a percentage. 
The worst-case percentage error occurred at the first few thresholds, at 5.3\% of the threshold, corresponding to 32768~\si{ns} in this case. 
The overall results indicate that ECN-capable switches and the Tofino implementation of the CMC can accurately classify marked packets based on their experienced queuing delay.


\begin{table}[t]
\centering
\caption{\small{Worst-Case Marking Accuracy}}
\label{tab:worst-case_marking_acc}

\resizebox{\columnwidth}{!}{%
\begin{tabular}{|c|ccc|ccc|}
\hline
\multirow{2}{*}{} & \multicolumn{3}{c|}{\textbf{ECN}} & \multicolumn{3}{c|}{\textbf{CMC}} \\ \cline{2-7}
 & \multicolumn{1}{c|}{\% Misclassified}
 & \multicolumn{1}{c|}{Max Error}
 & Max Error \%
 & \multicolumn{1}{c|}{\% Misclassified}
 & \multicolumn{1}{c|}{Max Error}
 & Max Error \% \\ \hline

\textbf{Tofino}   & \multicolumn{1}{c|}{2.6} & \multicolumn{1}{c|}{1497} & 2.2
                  & \multicolumn{1}{c|}{9.8} & \multicolumn{1}{c|}{11952} & 5.2 \\ \hline

\textbf{EdgeCore} & \multicolumn{1}{c|}{8.5} & \multicolumn{1}{c|}{3951} & 5.8
                  & \multicolumn{1}{c|}{-}    & \multicolumn{1}{c|}{-}      & -     \\ \hline
\end{tabular}%
}
\end{table}

\section{Methods of Implementation}
\label{sec:methods}

\subsection{Marking Counter}
\label{subsec:marking_coutner}

\begin{figure}
    \centering
    \includegraphics[width=0.9\columnwidth]{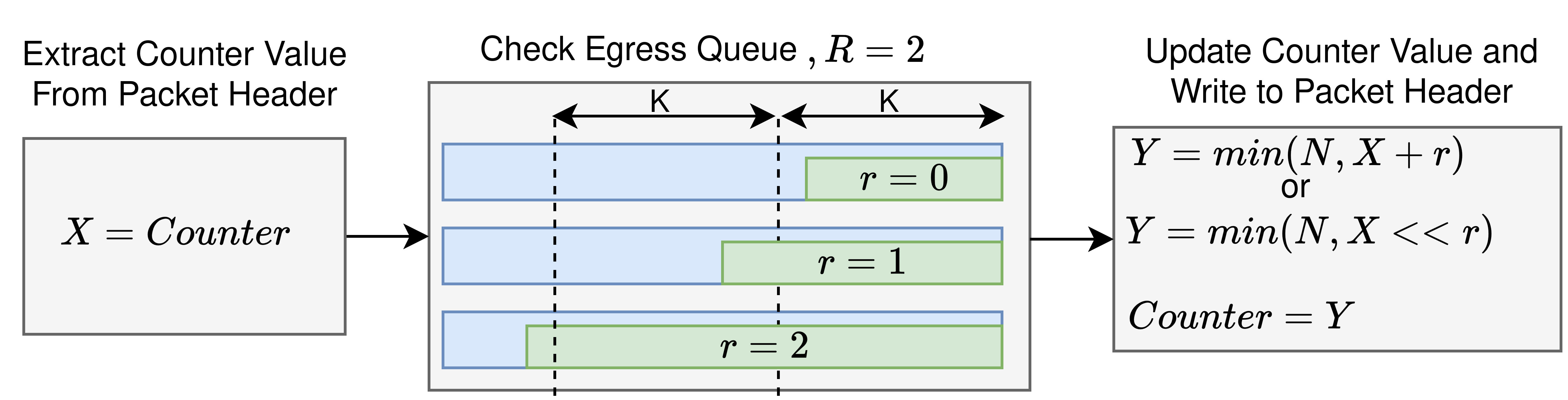}
    \caption{\small{Example working of the Marking Counter: The left side indicates the number $r$ thresholds crossed by the current buffer utilization at the switch egress. The right-hand side indicates the congestion marking operation and the resulting value of counter $Y$.}}
    \label{fig:marking_counter_example}
\end{figure}

A counter that can take a maximum value of $N$ somewhere in the packet header is available to the \ac{CMC} for manipulation.
Its resource limitations, depending on the PTP, NTP, or IP header, are discussed in Sec.~\ref{subsec:available_packet_headers}. 
In this Subsec., we concentrate on its functionality and manipulation by the \ac{CMC}. 
The counter is incremented by $r \in \{0,1,2,...,R\}$) if the current queue at the egress port has between $K\cdot r$ and $K\cdot(r+1)$ bits.
If the current queue at the egress port has more than $K\cdot R$ bits, then the counter is increased by $R$.
This process is illustrated in Fig. \ref{fig:marking_counter_example}, where, depending on the number of thresholds crossed by the current buffer, a value of $r$ is chosen. 
If $X$ and $Y$ are the values of the marking counter before and after passing the congestion marking step, respectively, then the marking equation can be expressed by the equation $Y=X+\min(r,N-X)$. 
This is done at every switch (hop) in the switch until the counter reaches its maximum value or the packet reaches its destination. 
At the end-points, this count is multiplied by a fixed known delay, $\Delay{K}{*}$. This value is then subtracted from the receiving timestamps of that packet as explained in Sec~\ref{subsec:correction_of_marked_packets}. 
A switch can implement this counter in two ways: as a bit shift or an integer counter. 
The former requires much fewer resources than the latter in classical ASIC designs and P4, as seen in Sec. \ref{subsec:impl_complexity}. 
For $b$ bits available in the packet header, the bit shift allows for $N=b$, while the counter operation allows for $N=2^{b-1}$ markings in total.
Generally, available bits in the header are difficult to find, especially to maintain backward compatibility. 
Different proposals to advance the protocol compete for the precious bits in packet headers. 
Thus, the community might need to balance using all available bits for congestion marking or allocating header resources to other purposes. 
We briefly discuss the available headers in Sec.~\ref{subsec:available_packet_headers} without explicitly claiming that all of them must be used for our purpose. 

\subsection{Compensation at Client and Server}
\label{subsec:compensation_mechanism}

The server sends $T2-n\Delay{K}{*}$ instead of $T2$ in the \textit{Response} packet in case the \textit{Request} packet congestion marking counter is $n$ congestions, and the client subtracts $n\Delay{K}{*}$ from $T4$ if the \textit{Response} packet comes with $n$ congestions in the header. 
Thus, the client has no information about the congestion in the \textit{Request} packet as the server already sends a compensated timestamp in the \textit{Response} packet. 
However, packet timestamps are usually very large (64-80 bits) integer values.
Checking a congestion marking and then subtracting a value from the receiving timestamp can be expensive for the server, especially if it must serve a large number of clients simultaneously. 
This is also why the \ac{TC} requires more CPU resources in Fig.~\ref{fig:tc_comparison_cpu}. 
Hence, we propose a secondary \ac{FR} marking. 
Here, the $b$ bits in the marking counter can be split between the forward and reverse paths, i.e, the \textit{Request} and \textit{Response} message.
Thus, the server responds with $T2$ without compensation, and the \ac{CMC} includes both the forward and reverse path marks in the response packet. 
The client then performs the compensation from both the forward and reverse paths.
The implementation requires the switch to additionally identify the packet type to decide which part of the header to mark. 
Hence, we get $N=b/2$ for the bit-shift methods and $N=2^{b-1}/2$ for the integer counter method. 
Note that, in most cases (PTP header or ECN), $b$ is a multiple of 2; hence, both paths should have equal allocation for the size of the marking counter. 

\subsection{Available Packet Headers}
\label{subsec:available_packet_headers}

\subsubsection{ECN}
\label{subsubsec:ECN}

When using \ac{NTP} over the internet, one does not have access to the network switches between the client and the server. 
Moreover, one cannot change the underlying \ac{ECN} protocol, and the remaining experimental bit of \ac{ECN} is not usable; thus, we have to use \textbf{classical ECN}. 
Thus, the first change that must be implemented in the client and server is that they must send all synchronization packets as ECN-capable (01).
This scenario corresponds to the special case of $\mathbf{b=1, N=1}$.
In networks where the experimental ECN bit is available for use, we can use \textbf{modified ECN}. 
Here, the network operator must ensure that other entities do not use the experimental bit for special applications \cite{ecnNew}. 
Then, we have the case of $\mathbf{b=2}$ where both types of marking counters and \ac{FR} marking can be used.  

\subsubsection{PTP Header}
\label{subsubsec:using_ptp_header}
Due to its widespread use, there is no space left in the IP header for anything other than the ECN bits.
An advantage of using the IP header for the marking counter is that the switch does not need to inspect the PTP/NTP header. 
However, our solution improves with increasing $N$ (See result \textbf{R1} in Sec.\ref{subsec:improving_mse_over_many_hops}), and hence more bits can be made available from the PTP header. 
Moreover, PTP can operate at Layer 2 without an IP header; hence, a solution that uses the \ac{PTP} header will be more universal. 
We propose a \textbf{backward-compatible} method for congestion marking so that currently deployed solutions can be gradually replaced, and until then, switches and endpoints can simply ignore the marking counter.

Here, the \textit{Reserved 3} field of 4 Bytes, which immediately follows the \textit{CorrectionField} in the PTPv2 header, can be used for our purpose. 
Two of the 32 bits encode the marking configuration: whether the marking counter is enabled, whether FR marking is applied, and whether the counter is implemented as a bit-shift or integer counter. The remaining 
$\mathbf{b=30}$ bits are used for the marking counter itself.
Note that our solution can work in conjunction with a \ac{TC} as the correction field remains unchanged, and both the marking compensation and the \ac{TC} correction can be applied simultaneously. 

\subsubsection{NTP Header}
\label{subsubsec:using_ntp_header}
The NTP header allows for a custom extension field(s) aligned to 4-byte boundaries. 
This allows extensions to the NTP protocol to be applied to a self-managed network, unlike Sec.~\ref{subsubsec:ECN}.
In theory, the field can be of an arbitrarily large length, but a realistic constraint is set by an \ac{MTU} of 1500 Bytes for the whole NTP packet. 
This leaves up to 500 Bytes for the marking counter. 
To have any meaningful help by increasing the size of $N$, we either need many more hops, or to reduce the granularity of marking by decreasing the threshold $\Delay{K}{*}$ size. 
As seen in Sec.~\ref{subsec:measurement_of_marking_accuracy}, marking-delay mapping has a precision limitation.  
Thus, especially used as an integer counter, we believe $\mathbf{b=30}$, such as the one with PTP, is sufficient.

\subsection{Implementation Complexity}
\label{subsec:impl_complexity}

\begin{table}[t]
    \centering
    \caption{\small{Comparison of resource utilization for different implementation approaches on Tofino. }}
    \label{tab:tofino_resource_comparison}

    \footnotesize
    \setlength{\tabcolsep}{3pt}
    \renewcommand{\arraystretch}{1.1}

    \resizebox{\columnwidth}{!}{%
        \begin{tabular}{lccc}
            \hline
            \textbf{Approach} &
            \textbf{Pipeline Stages} &
            \textbf{SRAM Usage} &
            \textbf{ALU Operations} \\
            \hline
             TC      & \makecell{High  \\ (3-5 stages)}          & \makecell{Moderate--High \\ (to store ingress timestamps)} & \makecell{High \\ (64-bit arithmetic)}     \\
            CMC (Integer)   & \makecell{Low--Moderate \\ (1-2 per marking)}  & \makecell{Moderate \\ (count registers)} & \makecell{Moderate \\ (8-bit arithmetic)}  \\
            CMC (Bit Shift) & \makecell{Low \\ (1 stage per shift)} & \makecell{Negligible \\ (in-header)}    & \makecell{Very Low \\ (bitmask and shift)} \\
            \hline
        \end{tabular}%
    }
\end{table}

To evaluate practical deployment feasibility, we implement two approaches using P4 on Tofino switches: a classical \ac{TC} and our proposed \ac{CMC}, which can be realized either via bit shifts or integer counters in the PTP header.
The insights from the compiler output are summarized in Table \ref{tab:tofino_resource_comparison}. 
The \ac{TC} implementation requires ingress and egress timestamping, parsing and updating the correction field, and performing arithmetic operations on 64-bit timestamps across the pipeline stages. 
This results in increased pipeline depth, higher ALU utilization, and limited scalability in multi-hop topologies. 
In contrast, the marking approach introduces minimal overhead and scales more efficiently. 

While the bit-shift version provides fewer congestion levels, it is easier to implement on pipeline-constrained targets like Tofino and incurs negligible resource consumption.
Our evaluation shows that increasing the number of thresholds $R$ from $2$ to $64$ only requires a few additional match-action stages or register operations.
Ultimately, \ac{CMC} requires fewer pipeline stages, less SRAM usage, and fewer ALU operations than \ac{TC}, offering an efficient and low-complexity alternative to TCs.

To determine which congestion level a packet experiences, the Tofino exposes a metadata field called \textit{enq\_qdepth}, which is the egress queue depth in 80-byte cells when the packet is enqueued. 
The most ASIC-friendly way to compare this across different levels is to restrict $K$ to the cell size times a power of 2, i.e., $K \in \{\, 80 \cdot 2^n \mid n \in \mathbb{Z} \,\}$.
Thus, one can obtain the congestion levels by simply right-shifting the \textit{enq\_qdepth} of every packet by $2^n$ and extracting the significant bits that would make $R$.

\section{Analysis}
\label{sec:analysis}


To analyze the performance of the proposed method, we must assume a probability distribution of the end-to-end packet delay in the network. 
The network topology and the spread, size, and duration of traffic are not a priori known.
Hence, we build a general model for the end-to-end queuing delay of packet $m$. 
This model can be applied to any queuing distribution and is more general than the often used $M/M/1$ models with exponential waiting times. 
Such an analytical model is necessary to rigorously demonstrate performance improvements and to identify the conditions and parameter regimes under which they occur.

We assume that the request and response packets take the same path in the network in opposite directions, hence $L_{\text{Req}}^{a}=L_{\text{Resp}}^{s-a}$, where $s$ is the number of hops. 
Let $\mathcal{Q}_{m,l}, \mathcal{D}_{m,l} \in [0,\infty)$ be the distributions of the queing delay at hop $l$ of message $m$ and its corrected counterpart after applying congestion marking, respectively. We do not restrict ourselves to continuous distributions, coming from a PDF. In practice, there is a non-zero $1-\rho_{l}$ probability that the delay is equal to zero, and a continuous part (a PDF) for the case that there is a non-zero delay, where $\rho_{l}$ is the utilization of link $l$. Since our framework is completely general, we do not subdivide the probability distribution into these two parts.
The total queuing delay\footnote{Slighlty abusing the notation where the $\Sigma$ here is the convolution operator as PDFs cannot be summed. } $\Delay{m}{q}\sim \mathcal{Q}_{m}=\sum_{l\in\mathcal{L}_{m}}\mathcal{Q}_{m,l}$ and its estimation error $\Delay{m}{E}\sim\mathcal{D}_{m}=\sum_{l\in\mathcal{L}_{m}}\mathcal{D}_{m,l}$ are therefore RVs sampled from the sum of per-hop delay distributions. 

We consider the \ac{MSE} of the offset estimation error $\epsilon$ or $\epsilon_{\mathrm{comp}}$ as our metric of interest in line with other works evaluating \ac{PTP} and \ac{NTP}~\cite{levesqueDelayAssymetry, Murakami,lee}. 
The expected \ac{MSE} of $\epsilon_{\mathrm{comp}}$ is, 
\begin{equation}
    \mathcal{E}_{\mathrm{comp}} = \E\left[\frac{(\mathcal{D}_{\text{Req}}-\mathcal{D}_{\text{Resp}})^{2}}{4}\right]
    \label{eq:modified_expected_errror}.
\end{equation}

The same can be applied to the distribution of total queuing delay to obtain the \ac{MSE} without correction for any given value,
\begin{equation}
    \mathcal{E} = \E\left[\frac{(\mathcal{Q}_{\text{Req}}-\mathcal{Q}_{\text{Resp}})^{2}}{4}\right]
    \label{eq:rmse_no_correction}.
\end{equation}

Since the forward and reverse delays are independent, the \ac{MSE} can be decomposed into, 
\begin{equation}
\mathcal{E}
= \frac{1}{4}
\left\{
\underbrace{\mathrm{Var}(\mathcal{Q}_{\text{Req}})+\mathrm{Var}(\mathcal{Q}_{\text{Resp}})}_{\text{sum of variance}}
\;+\;
\underbrace{\left(\E[\mathcal{Q}_{\text{Req}}-\mathcal{Q}_{\text{Resp}}]\right)^{2}}_{\text{bias}}
\right\}
.
\label{eq:decomposed_mse}
\end{equation}
The first term indicates that even if the mean queuing delay in the forward and reverse paths is the same (under equal load), their individual variances still contribute to the \ac{MSE}. 
The second term (bias) indicates that the \ac{MSE} is higher under an asymmetric load in the network, i.e., when the total mean queuing delay in one direction exceeds that in the other. 
As mentioned in Sec.~\ref{subsec:queuing_delay_corr}, 
Cristian's algorithm and all subsequent filtering algorithms presented in the literature assume that the forward and reverse delays have the same statistics, therefore reducing the bias term to zero.
Consequently, the filtering techniques presented so far demonstrate a reduction in the variance~\cite{kalmanFiltering, kalmanfilteringreview, KalmanFiltering2, genghuygens, variance1}.
Throughout the rest of the paper, we demonstrate conditions and proofs with and without the assumption of equal means, which we refer to as Assumption 1 (\textbf{A1}). 
In the next subsection, we show that congestion marking improves both the variance and the bias term for any general queuing distribution.


\subsection{Conditions for Threshold Selection}
\label{subsec:improving_mse_over_one_hop}

The queuing distribution of a packet at a given switch depends on several factors, including the load, packet size, and the burstiness of the interfering traffic. 
Generally, the per-hop and the end-to-end delay distributions show a long tail \cite{Whitt1990HeavyTail, Garetto2003LongTail, Liebeherr2009DelayTail, Wang2020HeavyTailDelay}. 
These long tails also contribute largely to the \ac{MSE} due to the squaring factor.
Such long-tailed distributions make it relatively easy to select a threshold that improves the MSE. 

\begin{proposition}[Conditions for MSE Improvement over one hop]
\label{prop:conditions}
Consider the per-hop queuing distributions $\mathcal{Q}_{\text{Req},l}$ and $\mathcal{Q}_{\text{Resp},l}$ and a threshold $\Delay{K}{*}$ and number of thresholds $R$. The  MSE over a hop is improved if all of the following conditions hold:

\begin{enumerate}
    \item \textbf{Variance Improvement Region (C1):}  
    The threshold lies in the region
    \[
    \Delay{K}{*} > \frac{2\E[\mathcal{Q}_{m,l}]}{1 + (2R - 1)\P(\DelayHop{m}{q}{l} > R\Delay{K}{*})},
    \]
    referred to as the \ac{IR}.

    \item \textbf{Stochastic Dominance (C2):}  The distribution with the larger mean also has a larger CCDF within the \ac{IR}.
    If
    \[
    \E[\mathcal{Q}_{\text{Req},l}] > \E[\mathcal{Q}_{\text{Resp},l}],
    \]
    then
    \[
    \P(\DelayHop{\text{Req}}{q}{l} \ge r\Delay{K}{*}) 
    >
    \P(\DelayHop{\text{Resp}}{q}{l} \ge r\Delay{K}{*}),
    \quad \forall r \in \{1,2,\dots,R\}.
    \]

    \item \textbf{Upper Bound on the Threshold (C3):}  
    The threshold satisfies
    \[
    \Delay{K}{*} 
    <
    2 \cdot 
    \frac{\E[\mathcal{Q}_{\text{Req},l}] - \E[\mathcal{Q}_{\text{Resp},l}]}
    {\sum_{r=1}^{R} \left(
    \P(\DelayHop{\text{Req}}{q}{l} \ge r\Delay{K}{*}) 
    -
    \P(\DelayHop{\text{Resp}}{q}{l} \ge r\Delay{K}{*})
    \right)}.
    \]
\end{enumerate}
\end{proposition}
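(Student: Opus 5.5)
The plan is to use the decomposition \eqref{eq:decomposed_mse}, applied at a single hop to both the uncorrected delays $\mathcal{Q}_{m,l}$ and the corrected delays $\mathcal{D}_{m,l}$. It then suffices to show two things: (i) each variance term does not increase, $\var(\mathcal{D}_{m,l}) \le \var(\mathcal{Q}_{m,l})$ for $m\in\{\text{Req},\text{Resp}\}$; and (ii) the squared bias does not increase. Forward and reverse delays are independent, so the MSE is exactly $\frac14$ of (sum of variances + squared mean difference). Improving each part separately therefore improves the MSE.

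\textbf{Correction as a deterministic function.} By \eqref{eq:performed_correction}, the correction is $\DelayHop{m}{E}{l}=\DelayHop{m}{q}{l}-\Delay{K}{*}\,C$, where $C=\min\bigl(\lfloor \DelayHop{m}{q}{l}/\Delay{K}{*}\rfloor, R\bigr)$. Up to boundary conventions, $C=\sum_{r=1}^{R}\1\{\DelayHop{m}{q}{l}\ge r\Delay{K}{*}\}$. Taking expectations gives
\[
\E[\mathcal{D}_{m,l}]=\E[\mathcal{Q}_{m,l}]-\Delay{K}{*}\sum_{r=1}^{R}\P(\DelayHop{m}{q}{l}\ge r\Delay{K}{*}).
\]

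\textbf{Bias step.} Write $\Delta=\E[\mathcal{Q}_{\text{Req},l}]-\E[\mathcal{Q}_{\text{Resp},l}]$, assumed positive without loss of generality, and $S=\sum_r\bigl(\P(\DelayHop{\text{Req}}{q}{l}\ge r\Delay{K}{*})-\P(\DelayHop{\text{Resp}}{q}{l}\ge r\Delay{K}{*})\bigr)$. The corrected mean difference is $\Delta-\Delay{K}{*}S$. Condition (C2) gives $S>0$, so the difference moves toward zero. Condition (C3), $\Delay{K}{*}S<2\Delta$, ensures $|\Delta-\Delay{K}{*}S|<\Delta$, so it does not overshoot past $-\Delta$. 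Together these give $(\Delta-\Delay{K}{*}S)^2<\Delta^2$. Under \textbf{A1} the bias is already zero, and only the variance step matters.

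\textbf{Variance step (main obstacle).} Write $X=\DelayHop{m}{q}{l}$, $c=\Delay{K}{*}$ and $Y=X-cC$. Then
\[
\var(Y)=\var(X)-2c\,\cov(X,C)+c^2\var(C),
\]
so we need $2\cov(X,C)\ge c\var(C)$. A clean route is to bound $\E[Y^2]\le\E[X^2]-(\E[X]^2-\E[Y]^2)$ directly, comparing $X$ and $Y$ pointwise:
\begin{itemize}
\item On $\{X<c\}$ nothing changes.
\item On $\{rc\le X<(r+1)c\}$ with $r<R$, we have $X^2-Y^2=rc(2X-rc)\ge r^2c^2$.
\item On $\{X\ge Rc\}$, we have $X^2-Y^2=Rc(2X-Rc)$.
\end{itemize}
Summing these and using $\E[X]$ gives $\E[X^2]-\E[Y^2]\ge 2c\E[XC]-c^2\E[C^2]$.

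What remains is to check this against the squared-mean change $\E[X]^2-\E[Y]^2=cE_C(2\E[X]-cE_C)$, where $E_C=\E[C]=\sum_r\P(X\ge rc)$. I would first lower-bound $\cov(X,C)$ by $c\var(C)/2$ plus a tail contribution $\E[(X-Rc)\1\{X\ge Rc\}]\cdot(\cdots)$. The hard part is obtaining exactly the stated form $c(1+(2R-1)\P(X>Rc))>2\E[X]$. I expect this comes from a cruder, sufficient bound: the saturated tail term is linear with coefficient $2R-1$, while the interior terms contribute at least $c$ per mark. One then verifies that the resulting inequality $\var(Y)\le\var(X)$ reduces to (C1).

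If the exact algebra fails, I would instead prove monotonicity in $R$ by induction. Starting from the one-threshold case ($R=1$), where (C1) reads $c(1+\P(X>c))>2\E[X]$, each additional threshold adds a marking level with the same sign of improvement.
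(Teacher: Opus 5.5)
Your bias step is essentially the paper's argument and is fine. The gap is the variance step, which is the whole content of \textbf{C1}. After writing $\var(X)-\var(Y)=2c\,\cov(X,C)-c^2\var(C)$, your pointwise computation only reproduces this identity: $\E[X^2]-\E[Y^2]=2c\,\E[XC]-c^2\E[C^2]$ holds with equality. You then say you ``expect'' a crude bound to give $c\bigl(1+(2R-1)\P(X>Rc)\bigr)>2\E[X]$, but you never state an inequality linking $\cov(X,C)$ to $\E[X]$ and the tail probabilities.

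Two ingredients are missing. Write $\mu=\E[X]$ and $p_r=\P(X>rc)$. The first is the pointwise bound $X\ge cC$; the paper uses it in the form $m_{Rc}=\E[X\mid X>Rc]\ge Rc$. The second is the nesting $\mathbf{1}\{X>ic\}\mathbf{1}\{X>jc\}=\mathbf{1}\{X>\max(i,j)c\}$, together with the monotonicity $p_i\ge p_R$.

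The paper works incrementally from $X_R=X_{R-1}-c\,\mathbf{1}\{X>Rc\}$. Using $\cov(X,\mathbf{1}\{X>Rc\})=p_R(m_{Rc}-\mu)$ and $\cov(\mathbf{1}\{X>ic\},\mathbf{1}\{X>Rc\})=p_R(1-p_i)$ for $i<R$, one gets $\var(X_R)\le\var(X_{R-1})$ if and only if $2(m_{Rc}-\mu)-2c\sum_{i<R}(1-p_i)\ge c(1-p_R)$. Then $m_{Rc}\ge Rc$ and $1-p_i\le 1-p_R$ turn this into exactly \textbf{C1}. (The appendix's $(R-1)p_{Rx}$ and $x(1-p_x)$ should read $(R-1)(1-p_{Rx})$ and $x(1-p_{Rx})$.)

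Your own direct route can also be closed, as follows.
\begin{itemize}
\item $XC\ge cC^2$ gives $\var(X)-\var(Y)\ge c^2\bigl(\E[C^2]+\E[C]^2\bigr)-2c\mu\,\E[C]$.
\item $\E[C^2]=\sum_r(2r-1)p_r$; this is where the coefficient $2R-1$ comes from.
\item $\E[C^2]+\E[C]^2\ge\E[C]\bigl(1+(2R-1)p_R\bigr)$ follows from $(j-1)p_j\ge\sum_{i<j}p_ip_j$, $p_j^2\ge p_Rp_j$ and $p_ip_j\ge p_R(p_i+p_j)/2$.
\end{itemize}
Hence $\var(X)-\var(Y)\ge c\,\E[C]\bigl[c\bigl(1+(2R-1)p_R\bigr)-2\mu\bigr]$, which is positive under \textbf{C1} whenever $\P(X>c)>0$.

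Your fallback induction does not rescue the argument. \textbf{C1} is assumed only at the given $R$, and it does not imply its $R=1$ form. Take $c=1$, $0<\eta<1/2$, and $X\in\{0,2+\eta\}$ equiprobable. Then $1+3p_2=2.5>2+\eta=2\mu$, but $1+p_1=1.5<2\mu$, so your base case is not available from the hypothesis. Moreover, ``each additional threshold adds a marking level with the same sign of improvement'' is precisely the incremental inequality above; it has to be proved, not assumed.

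Note also that the incremental computation by itself only controls the last step $\var(X_{R-1})\to\var(X_R)$. Telescoping down to $\var(X)$ therefore needs the condition at every level $r\le R$. The direct bound through $\E[C^2]$ avoids this and yields $\var(X_R)<\var(X)$ from \textbf{C1} at $R$ alone.
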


\begin{proof}
The proofs are presented in  Appendix~\ref{sec:conditions_for_improvement}.
\end{proof}
Note that, \textbf{C1} is enough to prove improvment under \textbf{A1}, i.e, it improves the variance. 
\textbf{C2} and \textbf{C3} are needed without \textbf{A1}, to improve the bias terms. 

While it may seem that the conditions on the threshold and distribution are restrictive, in practice, a user has considerable freedom in selecting the threshold $\Delay{K}{*}$. 
The queuing distributions observed in practice for any given hop in a network exhibit very large IRs and always show stochastic dominance. 
Moreover, as we increase $R$, the \ac{IR} also increases, allowing for more room to select the threshold. 
It can also be seen that the \ac{IR} for variance reduction begins from 2 times the mean for \textit{any} distribution, but much earlier for some distributions. Furthermore, analytical distributions are commonly used to model queuing time (e.g., exponential). 
It can be analytically proven that the \ac{IR} of an exponential distribution is its entire range. 

We use datasets collected from the measurement setup described in Sec.~\ref{subsec:measurement_of_marking_accuracy} to demonstrate the compatibility of congestion marking for general queuing distributions. 
The CCDFs from four datasets with the aggregated cross-link traffic of 1G, 4G, 4.4G, and 9.2G are shown in Fig.~\ref{fig:ccdf_plot}. 
The dotted lines represent the RHS of \textbf{C1} for $R=1$, i.e, the most restrictive case; therefore, the \ac{IR} can be assumed to extend to the right from the intersection of the solid and dotted lines. 
The \ac{IR} size as a ratio of the total data range is 95, 81, 78, and 65\%, respectively. 
With variance reduction as the main goal due to \textbf{A1}, it is prudent to target reducing the variance of the higher queuing delays caused by higher link utilization. 
Visually, one can see stochastic dominance \textbf{C2} even when comparing the 4G and 4.4G datasets. 
We confirmed this condition across all the datasets captured with the measurement setup. 
Fig.~\ref{fig:d_upper_bound} plots the ratio of the RHS of \textbf{C3} and $\Delay{K}{*}$, comparing 4G and 4.4G datasets as `Close' distributions and 1G and 9.2G datasets as `Distant' for 8 different values of $\Delay{K}{*}$ as a multiple of the mean of the distribution with the higher mean. 
For the UB to be violated, this ratio must be below 1, which is never the case. 
The minimum value of this ratio is 2.7 and increases further as the threshold value increases. 

\begin{figure}
    \centering
    \begin{subfigure}{0.48\columnwidth}
        \centering
         \input{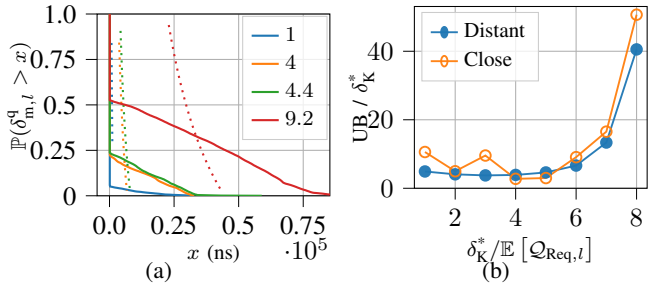}
         \vspace{-0.8cm}
        \caption{}
        \label{fig:ccdf_plot}
    \end{subfigure}
    \hfill
    \begin{subfigure}{0.48\columnwidth}
        \centering
\begin{tikzpicture}

\definecolor{darkgray176}{RGB}{176,176,176}
\definecolor{darkorange25512714}{RGB}{255,127,14}
\definecolor{lightgray204}{RGB}{204,204,204}
\definecolor{steelblue31119180}{RGB}{31,119,180}

\begin{axis}[
width=1.1\columnwidth,
height=4cm, 
legend style={
    font=\footnotesize,            
    fill opacity=0.8,
    draw opacity=1,
    text opacity=1,
    at={(0.03,0.97)},
    anchor=north west,
    draw=lightgray204
  },
legend image post style={
      xscale=0.6,
  },
legend cell align={left},
tick align=outside,
tick pos=left,
x grid style={darkgray176},
xlabel={\footnotesize{$\Delay{K}{*}/\E\ek{\mathcal{Q}_{\text{Req},l}}$}},
xmajorgrids,
xmin=0.65, xmax=8.35,
xtick style={color=black},
y grid style={darkgray176},
ylabel={\footnotesize{UB / $\Delay{K}{*}$}},
ylabel style={at={(axis description cs:0.17,0.45)}, anchor=center},
ymajorgrids,
ymin=0, ymax=53.0677741793726,
ytick style={color=black}
]
\addplot [line width=0.72pt, steelblue31119180, opacity=0.95, mark=*, mark size=2]
coordinates {
    (1,4.90498333542528)
    (2,4.07147681398404)
    (3,3.75877299072778)
    (4,3.89481991980868)
    (5,4.61939738313527)
    (6,6.62681000883443)
    (7,13.352818449157)
    (8,40.5400873816547)
};
\addlegendentry{Distant}
\addplot [line width=0.72pt, darkorange25512714, opacity=0.95, mark=o, mark size=2]
coordinates {
    (1,10.5872686432322)
    (2,4.96922441589385)
    (3,9.55948058681703)
    (4,2.75619365119964)
    (5,2.9844624944401)
    (6,9.08538077883247)
    (7,16.5352125805406)
    (8,50.6719846304119)
};
\addlegendentry{Close}
\end{axis}

\end{tikzpicture}
        \vspace{-0.8cm}
        \caption{}
        \label{fig:d_upper_bound}
    \end{subfigure}
    \vspace{-0.2cm}
    \caption{\small{(a) The CCDF of the queuing distribution for four different cross-traffic rates on a 10G Link. The queuing distributions exhibit a long tail, extending from where the dotted line crosses the solid one. (b) The headroom from the upper bound to  the selected threshold for two pairs of distributions from Fig.(a). The UB is at least 3 times the selected threshold and usually much larger.}}
    \label{fig:one_hop_mse_conditions}
\end{figure}

To further demonstrate the validity of our assumptions, we generate per-hop queuing delay from the Sim2HW~\cite{sim2hw} dataset. 
The dataset was originally produced to predict end-to-end delay distributions using \ac{GNN}. 
We rerun the OMNET++ simulations for this dataset while recording per-hop delay distributions, as it provides a wide variety of network topologies and flow distributions. 
It has 100 networks, with an average network size of 11.78 nodes. 
The average node degree is 2.38, and the maximum node degree is 9. 
On average, each network has roughly 8 switches. 
The distribution of flows in the network also exhibits the same variety, resulting in an average link utilization of 54\%, with a maximum utilization of 99.5\% on certain links. 
The per-hop delay distributions are collected by running a 60-second simulation.
We check stochastic dominance \textbf{C1} by ranking all distributions in a network with at least 100,000 packets by their mean. 
503 such distributions emerged, with an average of 408,000 packets per distribution recorded.
Thus, we checked stochastic dominance for 990 pairwise relationships, and only one did not show it. 
Moreover, for a single switch, all pairwise relationships analyzed showed stochastic dominance. 
Finally, we use the data to analyze the tail size as a fraction of the total range of queuing delays. 
For all 503 distributions, the fraction of the IR ranged from 66\% 9999\%, with an average of 95\%, indicating that the \ac{IR} accounts for most of the distribution and that a threshold that improves performance is readily available to the user. 

Fig. \ref{fig:tail_improvement} plots the percentage of the range of data where \textbf{C1} is satisfied for different values of $R$ on the hardware-collected dataset. 
Here, we also see that increasing $R$ increases the \ac{IR}, allowing the user greater flexibility in choosing the threshold without degrading performance. 

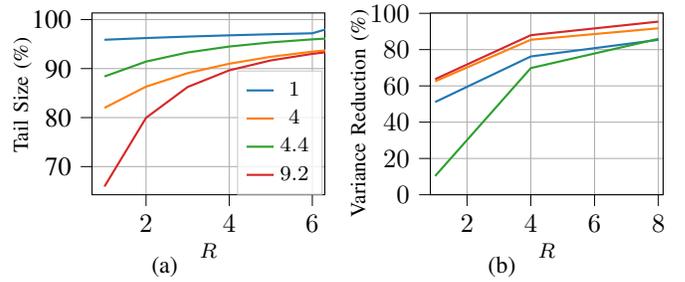
\begin{figure}
    \centering
    \begin{subfigure}{0.48\columnwidth}
        \centering
\begin{tikzpicture}

\definecolor{crimson}{RGB}{214,39,40}
\definecolor{darkgray}{RGB}{176,176,176}
\definecolor{darkorange}{RGB}{255,127,14}
\definecolor{forestgreen}{RGB}{44,160,44}
\definecolor{lightgray}{RGB}{204,204,204}
\definecolor{steelblue}{RGB}{31,119,180}
\begin{axis}[
width=1.1\columnwidth,
height=4cm,
legend style={
    font=\footnotesize,            
    fill opacity=0.8,
    draw opacity=1,
    text opacity=1,
    at={(0.99,0.00)},
    anchor=south east,
    draw=lightgray
  },
  legend image post style={
      xscale=0.6,
  },
tick align=outside,
tick pos=left,
x grid style={darkgray},
xlabel={\footnotesize{$R$}},
xmajorgrids,
xmin=0.7, xmax=6.3,
xtick style={color=black},
y grid style={darkgray},
ylabel={\footnotesize{Tail Size (\%)}},
ylabel style={at={(axis description cs:0.1,0.55)}, anchor=center},
ymajorgrids,
ymin=0.642550121445524, ymax=1.01347394237526,
ytick style={color=black},
ytick={0.70,0.80,0.90,1.00},
yticklabels={70,80,90,100}
]
\addplot [thick, steelblue]
coordinates {
    (1,0.958817079365104)
    (2,0.96231457325173)
    (3,0.965287912366007)
    (4,0.967850619946143)
    (5,0.970065911286593)
    (6,0.972000191005146)
    (7,0.996613768696638)
};
\addlegendentry{$1$}
\addplot [thick, darkorange]
coordinates {
    (1,0.819809286707207)
    (2,0.863089648024121)
    (3,0.890654154161841)
    (4,0.909904084664167)
    (5,0.924045746564294)
    (6,0.934556492882356)
    (7,0.942423365713895)
};
\addlegendentry{$4$}
\addplot [thick, forestgreen]
coordinates {
    (1,0.883862872087595)
    (2,0.914243741645844)
    (3,0.932859720760884)
    (4,0.944972711319517)
    (5,0.953444465443224)
    (6,0.959645448445621)
    (7,0.964415415318294)
};
\addlegendentry{$4.4$}
\addplot [thick, crimson]
coordinates {
    (1,0.659410295124149)
    (2,0.799516386288881)
    (3,0.862358881969153)
    (4,0.896354962425361)
    (5,0.916644599464274)
    (6,0.930042309049294)
    (7,0.939789344820155)
};
\addlegendentry{$9.2$}
\end{axis}

\end{tikzpicture}
         \vspace{-0.8cm}
        \caption{}
        \label{fig:tail_improvement}
    \end{subfigure}
    \hfill
    \begin{subfigure}{0.48\columnwidth}
        \centering
\begin{tikzpicture}

\definecolor{crimson}{RGB}{214,39,40}
\definecolor{darkgray}{RGB}{176,176,176}
\definecolor{darkorange}{RGB}{255,127,14}
\definecolor{forestgreen}{RGB}{44,160,44}
\definecolor{lightgray}{RGB}{204,204,204}
\definecolor{steelblue}{RGB}{31,119,180}

\begin{axis}[
width=1.1\columnwidth,
height=4cm,
tick align=outside,
tick pos=left,
x grid style={darkgray},
xlabel={\footnotesize{$R$}},
xmajorgrids,
xmin=0.85, xmax=8.15,
xtick style={color=black},
y grid style={darkgray},
ylabel={\footnotesize{Variance Reduction (\%)}},
ylabel style={at={(axis description cs:0.095,0.45)}, anchor=center},
ymajorgrids,
ymin=0, ymax=100.193514571979,
ytick style={color=black}
]
\addplot [thick, steelblue]
coordinates {
    (1,51.1014635134455)
    (4,76.2160698956027)
    (8,85.3768974806828)
};
\addplot [thick, darkorange]
coordinates {
    (1,62.5020708012591)
    (4,85.4167374362886)
    (8,91.7400619356712)
};
\addplot [thick, forestgreen]
coordinates {
    (1,10.3183238643432)
    (4,69.8072536287061)
    (8,85.9193991430062)
};
\addplot [thick, crimson]
coordinates {
    (1,63.7162906969736)
    (4,87.9508751939519)
    (8,95.4223948304564)
};
\end{axis}

\end{tikzpicture}
        \vspace{-0.8cm}
        \caption{}
        \label{fig:variance_improvement}
    \end{subfigure}
    \vspace{-0.2cm}
    \caption{\small{Improvement with $R$: (a) The proportion of the IR as a percentage of the total range of data increases with $R$, allowing more space for the user to select the threshold. (b) The reduction in variance with increasing $R$. We can achieve up to 60\% reduction with just one threshold and more than 90\% with $R=8$, even without optimally selecting the thresholds. }}
    \label{fig:improvement_with_R}
\end{figure}

\subsection{Improvement in MSE}
\label{subsec:improving_mse_over_many_hops}

First, we state general results on improving the MSE via congestion marking for any distribution. 
As a synchronization packet traverses many switches in a network, it may encounter congestion at multiple points. 
Fig.~\ref {fig:multi_hop_example} shows how the delay distribution accumulates over hops. 
As seen in Eq.~\ref{eq:decomposed_mse}, for the end-to-end MSE to improve, the sum of variances and the biases should decrease across the overall end-to-end delay distributions. 
\begin{figure}
    \centering
    \includegraphics[width=\linewidth]{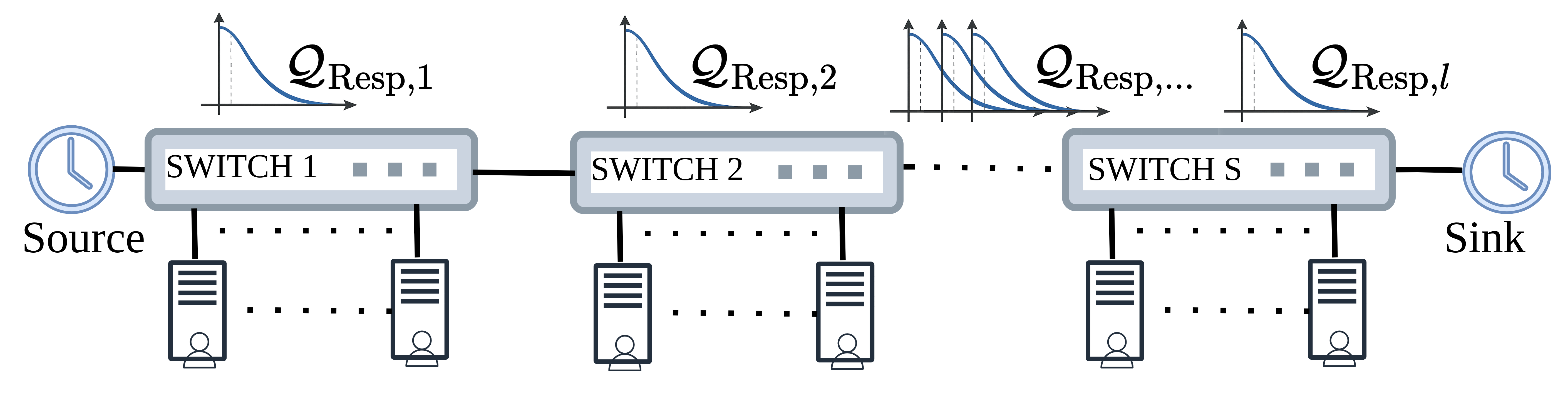}
    \vspace{-0.4cm}
    \caption{\small{Sum of individual per-hop queuing distributions between source and sink leads to the overall end-to-end delay distribution.}}
    \label{fig:multi_hop_example}
\end{figure}
\begin{corollary}[MSE Improvement Results]
\label{cor:mse_improvement}
Assume that the conditions in Proposition~\ref{prop:conditions} are satisfied and that the per-hop delay distributions are independent (\textbf{A2}). Then, the following results hold:

\begin{enumerate}
    \item \textbf{Improvement with $R$ (R1):}  
    For a single hop, increasing the number of thresholds $R$ reduces the variance.

    \item \textbf{One of Many (R2):}  
    In most cases, congestion marking at even a single hop improves the end-to-end MSE.

    \item \textbf{Improvement with Hops (R3):}  
    Assuming that all $R$ bits are always available, increasing the number of hops performing congestion marking improves the overall end-to-end MSE.
\end{enumerate}
\end{corollary}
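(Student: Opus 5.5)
The proof rests on the decomposition in Eq.~\ref{eq:decomposed_mse}, applied to $\mathcal{D}_m$ and $\mathcal{Q}_m$. Under \textbf{A2} and the independence of the request and response paths, the end-to-end variance is the sum of the per-hop variances, and the end-to-end bias is the square of the sum of the per-hop mean differences:
\[
\var(\mathcal{D}_m)=\sum_{l\in\mathcal{L}_m}\var(\mathcal{D}_{m,l}),\qquad
\E[\mathcal{D}_{\text{Req}}-\mathcal{D}_{\text{Resp}}]=\sum_{l}\bigl(\E[\mathcal{D}_{\text{Req},l}]-\E[\mathcal{D}_{\text{Resp},l}]\bigr).
\]
Every claim therefore reduces to per-hop statements, and the per-hop analysis uses the representation from Eq.~\ref{eq:performed_correction}. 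The corrected delay is $\mathcal{D}_{m,l}=\mathcal{Q}_{m,l}-\Delay{K}{*}\,M_R$, where $M_R=\sum_{r=1}^{R}\1\{\DelayHop{m}{q}{l}>r\Delay{K}{*}\}$. Hence $\E[\mathcal{D}_{m,l}]=\E[\mathcal{Q}_{m,l}]-\Delay{K}{*}\sum_{r\le R}\P(\DelayHop{m}{q}{l}>r\Delay{K}{*})$, and
\[
\var(\mathcal{D}_{m,l})=\var(\mathcal{Q}_{m,l})-2\Delay{K}{*}\cov(\mathcal{Q}_{m,l},M_R)+(\Delay{K}{*})^2\var(M_R).
\]
The proof of Proposition~\ref{prop:conditions} (condition \textbf{C1}) establishes that this variance is smaller than $\var(\mathcal{Q}_{m,l})$, and I take that as given.

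\textbf{R1.} I compare $R$ with $R+1$. Passing from $M_R$ to $M_{R+1}$ adds the indicator $I=\1\{\DelayHop{m}{q}{l}>(R+1)\Delay{K}{*}\}$, so $\mathcal{D}^{(R+1)}=\mathcal{D}^{(R)}-\Delay{K}{*}I$. Writing $p=\P(I=1)$, this gives
\[
\var(\mathcal{D}^{(R+1)})-\var(\mathcal{D}^{(R)})=\Delay{K}{*}\bigl(\Delay{K}{*}p(1-p)-2\cov(\mathcal{D}^{(R)},I)\bigr).
\]
The covariance term expands as $\cov(\mathcal{D}^{(R)},I)=p(1-p)\bigl(\E[\mathcal{D}^{(R)}\mid I=1]-\E[\mathcal{D}^{(R)}\mid I=0]\bigr)$. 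On the event $I=1$, the corrected value is $\mathcal{D}^{(R)}=\mathcal{Q}-R\Delay{K}{*}>\Delay{K}{*}$. So it suffices to show that $\E[\mathcal{D}^{(R)}\mid I=0]<\E[\mathcal{D}^{(R)}\mid I=1]-\Delay{K}{*}/2$. On $I=0$ the corrected value is a sawtooth residue bounded by $\Delay{K}{*}$ except on the top band, and I expect to close the remaining gap using the \textbf{C1} inequality at level $R+1$. This is the main obstacle: the sawtooth residue makes the conditional mean on $I=0$ delicate, and if a clean argument fails I would state \textbf{R1} under \textbf{C1} holding for $R+1$, which is already the regime the IR is defined for.

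\textbf{R2} and \textbf{R3} follow from additivity. If marking is applied only at a hop $l^*$, the end-to-end variance changes by exactly $\var(\mathcal{D}_{\cdot,l^*})-\var(\mathcal{Q}_{\cdot,l^*})<0$, since all other summands are untouched; this is the variance part of \textbf{R2}. For the bias, \textbf{C2} and \textbf{C3} give $0\le\E[\mathcal{D}_{\text{Req},l^*}-\mathcal{D}_{\text{Resp},l^*}]<\E[\mathcal{Q}_{\text{Req},l^*}-\mathcal{Q}_{\text{Resp},l^*}]$, so the per-hop mean difference shrinks without changing sign. The end-to-end bias $\bigl(\sum_l\Delta_l\bigr)^2$ then decreases whenever $\Delta_{l^*}$ has the same sign as the total sum, i.e.\ when the dominant direction at $l^*$ matches the end-to-end dominant direction. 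That sign-agreement hypothesis is what ``in most cases'' means, and I would state it explicitly. \textbf{R3} is induction on the number of marking hops using the same step, with the stipulation that all $R$ marks are available at each hop so that counter saturation $n\le N$ never couples the hops. If the per-hop mean differences all share one sign, each added marking hop strictly lowers both the variance sum and the squared bias.
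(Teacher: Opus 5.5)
Your overall architecture is the paper's: the incremental identity $\var(\mathcal{D}^{(R+1)})=\var(\mathcal{D}^{(R)})-2\Delay{K}{*}\cov(\mathcal{D}^{(R)},I)+(\Delay{K}{*})^2p(1-p)$ for \textbf{R1}, and for \textbf{R2}/\textbf{R3} additivity of per-hop variances under \textbf{A2} plus the square of the summed per-hop mean differences. However, two steps fail as written.

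In \textbf{R1} you plan to use two bounds: $\E[\mathcal{D}^{(R)}\mid I=1]>\Delay{K}{*}$, and $\mathcal{D}^{(R)}\le\Delay{K}{*}$ on $\{I=0\}$ (this holds on the top band too). These can only give a conditional-mean gap $>0$, never the $>\Delay{K}{*}/2$ you need. A two-point law with mass at $\Delay{K}{*}$ and mass slightly above $(R+1)\Delay{K}{*}$ satisfies both bounds, yet its variance increases when the extra threshold is added. The missing observation is that \textbf{C1}, as the paper derives it, is not a comparison with $\var(\mathcal{Q}_{m,l})$. It is precisely the sufficient condition for this incremental step. The appendix starts from the same identity at level $R$ and bounds the covariance using $\P(X>ix)\ge\P(X>Rx)$ and $\E[X\mid X>Rx]\ge Rx$, with $X=\mathcal{Q}_{m,l}$ and $x=\Delay{K}{*}$. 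In your notation, eliminate $\E[\mathcal{D}^{(R)}\mid I=0]$ via the unconditional mean you already wrote down. The gap then equals $(\E[\mathcal{D}^{(R)}\mid I=1]-\E[\mathcal{D}^{(R)}])/(1-p)\ge(\Delay{K}{*}(1+Rp)-\E[\mathcal{Q}_{m,l}])/(1-p)$. Requiring this to be at least $\Delay{K}{*}/2$ is exactly \textbf{C1} at level $R+1$. So \textbf{R1} is immediate once \textbf{C1} holds at every level, and that is the paper's proof: each variance decrement is nonnegative, plus a remark on strictness. Your fallback is that proof. Read this way, the single-hop reduction you take as given in \textbf{R2} also rests on \textbf{C1} at each level $1,\dots,R$, as the paper tacitly assumes.

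The bias part of \textbf{R2}/\textbf{R3} is where the argument actually breaks. First, \textbf{C2} and \textbf{C3} do not give $0\le\E[\mathcal{D}_{\text{Req},l^*}-\mathcal{D}_{\text{Resp},l^*}]$. \textbf{C3} only caps the expected per-hop correction at twice the uncorrected mean difference, so overcorrection is allowed. You therefore only get $|\Delta'_{l^*}|<|\Delta_{l^*}|$, possibly with a sign flip. The paper also asserts sign preservation, but it does not follow from \textbf{C3}.

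Second, and more seriously, ``$\Delta_{l^*}$ has the same sign as $\sum_l\Delta_l$'' is not sufficient even when the sign is kept. Take $\Delta_{l^*}=1$ and $\sum_{k\ne l^*}\Delta_k=-0.9$, so the total $0.1$ agrees in sign. Lowering $\Delta_{l^*}$ to $0.5$ moves the total to $-0.4$ and the squared bias from $0.01$ to $0.16$, which a small variance gain need not offset. The hypothesis that works is the one the paper isolates by writing the MSE as $\sum_i\mathrm{MSE}_i+\frac{1}{2}\sum_{i<k}\Delta_i\Delta_k$ and factoring out $\Delta_g\sum_{k\ne g}\Delta_k$. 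If $\Delta_{l^*}$ and $\sum_{k\ne l^*}\Delta_k$ agree in sign (or the latter vanishes), then $|\Delta'_{l^*}|<|\Delta_{l^*}|$ alone gives $|\sum_l\Delta'_l|<|\sum_l\Delta_l|$, sign flips included.

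For \textbf{R3}, your induction invariant that all $\Delta_l$ share one sign is not preserved once a marked hop can overcorrect. You must either impose the other-hops sign condition afresh for each added hop, or strengthen \textbf{C3} by dropping the factor $2$ so that marked hops keep their sign. Stating a sign hypothesis explicitly is a fair way to make ``in most cases'' precise (the paper treats the biased case only heuristically), but it must be this one.
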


\begin{proof}
The proofs are presented in Appendix \ref{sec:app_improvement_results}.
\end{proof}
While result \textbf{R1} shows the utility of \ac{CMC} and multiple bits, the other two results (\textbf{R2} and \textbf{R3}) demonstrate that having more switches capable of congestion marking improves clock synchronization performance more. 
However, the isolatory and additive nature of the MSE improvement over many hops, along with the backward-compatible design, show that one can also swap or upgrade to \ac{CMC}-compatible switches over time.
As more \ac{CMC}s are added, the overall clock synchronization performance will improve further. 
Fig.~\ref{fig:variance_improvement} shows the reduction in variance across the four datasets and the real-world marking on a Tofino with most of the ASIC-friendly P4 program from Sec.~\ref{subsec:impl_complexity}. 
Hence, it includes the accuracy error in mapping the delay to the threshold specified in Sec.~\ref{subsec:measurement_of_marking_accuracy}. 
Moreover, the threshold is simply set to $ R$ divided by the range of delays in the particular dataset. 
We see that even with $R=1$, we have a significant reduction in variance (from 18 to 62\%), and increasing $R$ to 8 can achieve reductions of 80\%-95\%.
Thus, the Fig.~\ref{fig:variance_improvement} quantifies the improvement of \textbf{R1} in a real-world testbed.  
We see that even a simple threshold-selection strategy improves the variance of delay over one hop. 

However, for an \textit{optimal} improvement of MSE, one needs information about all the queuing delays and the available number of bits and thresholds $R$. 
For example, looking at Fig.~\ref{fig:multi_hop_example}, if the threshold is set too small, then all the available bits could be exhausted in the first hop. 
Thus, to address per-hop dependencies and model the marking counter's state, we use a Markov chain.

\subsection{Markovian Model}
\label{subsec:markovian_model}



We can model the counter value $T_{m,l} \in \gk{0,\ldots, N}$ and the \ac{PDF} of delay estimation error at each hop as $\mathcal{D}_{m,l} \in [0,\infty)$ after applying the congestion marking method as a time-inhomogenous Markov Chain~\cite{norris1998markov} $(\mathcal{D}_{m,l},T_{m,l})_l$, where the probability of  $\mathcal{D}_{m,l+1},T_{m,l+1}$ depends on $T_{m,l}$ and $l$. 
The state transition probability density at the next hop is given by two equations,

A random queuing delay $x\sim \mathcal{Q}_{m,l}$ occurs. 
Let $f_{m,l+1}(x-r\Delay{K}{*},n+r|n)$ be the PDF of observing the compensated delay estimation error of $x-r\Delay{K}{*}$, a counter value of $n+r$ in hop $l+1$, conditional on a counter of $n$, i.e., 
\begin{multline}
    f_{m,l+1}(x-r\Delay{K}{*},n+r|n)=\\\P\rk{(\mathcal{D}_{m, l+1},T_{m,l+1})=(\d x-r\Delay{K}{*}, n+r)|(T_{m,l})=n}.
\end{multline}
We then have
\begin{equation} \label{eq:markov_model}
    \begin{split}
     f_{m,l+1}&(x-r\Delay{K}{*},n+r|n)=\\
    &\,f_{l+1}(x)\chi\gk{x\in \Delay{K}{*}\cdot\ek{r,r+1},n+r<N, r\le R}\\
    &\,+f_{l+1}(x)\chi\gk{x > \Delay{K}{*}\cdot(N-n),n+r=N, r\le R}) , 
    \end{split}
\end{equation}
where $f_{l+1}(x)$ is the delay-distribution at hop $l+1$ (unmodified) and $\chi\rk{a}$ is an indicator function that returns $1$ if the condition $a$ is satisfied and $0$ otherwise. 
We have three scenarios encompassed by Eq:~\ref{eq:markov_model}: 
\begin{itemize}
    \item Counter not saturated $(n+r < N)$: then the counter is incremented, and the delay is compensated $T_{m,l+1}=n+r, \mathcal{D}_{m,l+1}=x-r\Delay{K}{*}$.
    \item Counter saturates $(n+r = N)$: then the counter becomes $N$ and delay is compensated only to the the extent to which the counter could be incremented $T_{m,l+1}=N, \mathcal{D}_{m,l+1}=x-(N-n)\Delay{K}{*}$.
    \item Counter already saturated $(n=N)$: we have $r=0$ with probability 1. We cannot make any further corrections, and the \ac{PDF} of the delay estimation error will simply be the \ac{PDF} of the queuing delay.
\end{itemize}
    
Given a family of weights $\gk{\mathcal{Q}_{m,l}}_{l\in\mathcal{L}_{m}}$, i.e., the \ac{PDF} of the queuing at every hop, we can numerically compute the resultant PDF of each hop.
Summing the \ac{PDF}s from each hop using the stochastic version of  Equation \ref{eq:delay_est_sum},
then gives us the distribuion of the total end-to-end delay $\mathcal{D}_{m}=\sum_{l\in\mathcal{L}_{m}}\mathcal{D}_{m,l}$. 

Algorithm \ref{alg1} performs a forward probability propagation on the finite-state, time-inhomogeneous Markov Chain and returns the PDF of the delay estimation error $\gk{\mathcal{D}_{i}}_{i=1}^L$ after applying the correction.
\begin{algorithm}
\caption{Calculation of the distribution of $\mathcal{D}_{m}$}\label{alg1}
\begin{algorithmic}[1]
\State \textbf{Input}: Number of thresholds $R$, maximum counter value $N$, threshold $\Delay{K}{*}$, delay distributions along the path $\gk{\mathcal{Q}_{m,l}}_{l\in\mathcal{L}_{m}}$.
\State \textbf{Output}: Probability distribution of $\mathcal{D}_{m}$ after applying correction. 
\State Initialize tree $\mathfrak{T}$ with root labelled $0$, generation $0$
\Function{ind}{$x,r,n,M$}
  \If{$r+n<N$}
    \State \Return $\chi\gk{x\in \Delay{K}{*}[r,r+1]}$
  \Else
    \State \Return $\chi\gk{x>M\Delay{K}{*}}$
  \EndIf
\EndFunction
\For{$l$ in $\mathcal{L}_{m}$}
    \ForAll{leaves $p$ in generation $l-1$ with label $n$}
        \State \label{step:check_counter} $M \gets \min\{R, N - n\}$
        \For{$r = 0$ to $M$}
            \State \label{step:create_leaf} Create leaf $l$ with label $r$
            \State \label{step:leaf_assignment} $w_{p,l}\gets \P\big(\text{IND}(x_{l}, r, n, M) = 1\big)$
            \State Assign weight $w_{p,l}$ to the edge $p, l$ 
        \EndFor
    \EndFor
\EndFor
\State $B\gets \text{leaves in }\mathfrak{T}$
\State \label{step:final_sum}\Return $\sum_{l\in B}\prod_{i=1}^{\mid\mathcal{L}_{m}\mid} w_{l_{i-1},l_i}$, where $l_0=l$ and $l_{i+1}$ is the parent of $l_i$
\end{algorithmic}
\end{algorithm}

Algorithm \ref{alg1} constructs a weighted tree $\mathfrak{T}$ to compute the distribution of the delay estimation error $\mathcal{D}_{m}$ along path $\mathcal{L}_{m}$.  
The tree depth corresponds to $\mid\mathcal{L}_{m}\mid$, where each generation corresponds to one hop. 
Some important steps of the Algorithm are:
\begin{itemize}
    \item \textbf{Step \ref{step:check_counter}}:At hop $l$, for every leaf $p$ from generation $l-1$ with counter value $n$, the number of admissible increments is $M=\min\{R, N - n\}$.
    \item \textbf{Step \ref{step:create_leaf}}: For each admissible increment $r\in\{0,...,M\}$, a new leaf is created.
    \item \textbf{Step \ref{step:leaf_assignment}}: Each leaf is then assigned a weight $w_{p,l}$, which is the probability of that distribution being in that range. The indicator function IND makes sure the correct $r$ is applied. 
    \item \textbf{Step \ref{step:final_sum}}: Each root-to-leaf path in $\mathfrak{T}$ corresponds to one feasible realization of counter increments along the path. Thus, the final output distribution is obtained by summing over each generation, the product of weights within each generation.  
\end{itemize}
In short, the tree is the unfolded representation of the Markov chain over all the hops, and the final sum marginalizes over all feasible counter trajectories. 

\subsection{Remarks on Choosing the Optimal parameters}
\label{subsec:choosign_optimal_parameters}

Based on the design choices mentioned in Sec.~\ref{sec:methods}, one obtains $N$ and some candidate values on $R$ and $\Delay{K}{*}$, the two tunable parameters in our proposed scheme.
If the per-hop queuing delays in the entire network (we denote it by set $\mathcal{Z}$) are known, one can compute the MSE for all client-server pairs  using Algorithm \ref{alg1} and then search over the feasible space of $R$ and $\Delay{K}{*}$ to obtain the optimal parameters. 
The selection may be done to satisfy different system objectives, such as minimizing the average MSE or reducing the maximum MSE in the network. 
Alternatively, the selection may also be used to protect the \ac{MSE} from traffic surges or to maintain a \ac{TUB}~\cite{sundial,yashinfocom,spanner}. 
Secondly, knowledge of $\mathcal{Z}$ may not always be readily available to operators across all networks.
Some switches expose the queuing metrics via counters, which can be used to approximate these values. 
In other scenarios, analytical models (e.g., exponential or gamma distributions) or simulations can be used to generate $\mathcal{Z}$. 
\ac{GNN}s are shown to predict delay distributions in a network with good accuracy~\cite{sim2hw}.
Once $\mathcal{Z}$ is known, there is no single universal method to map individual links to the paths of packets between all synchronizing participants.   
While traditional \ac{PTP} and \ac{NTP} run over a spanning-tree structure, some implementations support multiple routes and multipath~\cite{multipathPTP,PTPsec}, or replace the client-server architecture with a consensus-based, decentralized one~\cite{genghuygens,firefly}.

Considering the variety of tactics emerging from: (1) constraints from the design choices, (2) application level objective of the system, (3) methods to obtain the input parameters, and (4) mappings between $\mathcal{Z}$ and $\gk{\mathcal{Q}_{m,l}}_{l\in\mathcal{L}_{m}}$, we refrain from suggesting a singlular approach in this paper. 
The analysis in this work shows that the method is robust to parameter choices: selecting them within a broad range does not degrade performance. 
The framework we present in Algorithm~\ref{alg1} is intentionally flexible, allowing different approaches to parameter determination and opening avenues for future work. 
Indeed, even 25 years after the introduction of ECN~\cite{ramakrishnan2001addition}, there is still active research on the selection of the threshold value~\cite{ZHUGE2026112151}.
Determining the optimal parameters could be done using either (or both) of the two approaches observed in the progress of ECN threshold determination: the first is a planned deployment.
Here, the entire network is planned, and the delays are either simulated or modeled, after which the optimal thresholds are identified based on the given objective~\cite{ziegler2001quantitative, RED_Selection}. 
Methods can also provide a rule of thumb for certain types of networks, saving operators from the modeling and simulation tasks~\cite{KaanTCP, virtulaisedTCP, dctcp}. 
The rule-of-thumb approach also aligns well with the IEEE 1588 (PTP) standard for providing \textit{profiles}. 
The profiles, e.g., telecom, power utility, industrial automation, video broadcast, etc., tailor the \ac{PTP} parameters to different requirements and network conditions. 
The organizations proposing these profiles have the resources and technical know-how to perform extensive modeling and simulation across different scenarios within the profiles and to suggest rule-of-thumb settings.
In Sec.~\ref{subsec:improved_filtering} and \ref{subsec:multihop}, we use a simulation approach to obtain the empirical delay distribution for a small network with just one client-server pair. 
We show in Sec.~\ref{subsec:app_accuracy_mm1} the accuracy of the $M/M/1$ model compared to OMNET++ simulations and the empirical approach. 
The results here show that although the expected improvement is not optimal, the parameters can still yield good results, especially for moderate link utilization. 

Recent works for ECN threshold adjustment have shown the effectiveness of online adjustment based on some posterior estimation method of the link characteristics~\cite {ZHUGE2026112151}, sometimes employing machine learning ~\cite{cheng2024pet, jung2023rlecn, acc_paper, heirachichaltuning}. 
A point of departure for our use case from these works is that TCP packets are sent at a much higher rate than synchronization packets, and therefore, the parameters can be adjusted relatively quickly. 
We note the promise of this approach and explore it in future work.


\section{Evaluation}
\label{sec:evaluation}
\subsection{Accuracy of the Model-Based Method}
\label{subsec:app_accuracy_mm1}

In this Section, we evaluate the effects of obtaining the threshold $\Delay{K}{*}$ for a 1-hop network represented in Figure~\ref{fig:omnet_setup} using the model-based approach.
The simulation setup is built in OMNET++ using the INET library. 
Another advantage of using OMNET++ simulations is that they allow us to collect statistics, such as per-packet queuing delay, which is much more difficult in a real setup. 
We model the two egress queues at the switch connected to the source and sink as an $M/M/1$ queue.
We consider four different flow patterns that span different utilization ranges of the link, as described in Table \ref{tab:flow-patterns}.
We call them \ac{SF}, \ac{LM}, \ac{SM} and \ac{SS} respectively. 
Each flow is defined by the exponentially arriving packets with a mean interarrival time of $\lambda\,\si{\micro\second}$. 
The packet sizes are exponentially distributed with mean $\mu$ Bytes. 
These parameters are selected to approximate the behavior of the switch egress queue using an $M/M/1$ queueing model. 
All links have a line rate of 1 Gbps.
Transforming these values to the $M/M/1$ model, we assume that the service time of a packet with size $B$ Bytes is $(B\times8)\si{\nano\second}$, which then gives us the mean utilization $\rho$ and waiting time $\lambda^{*}$.
For an $M/M/1$ queue, the distribution of the queuing delay is also exponentially distributed, and hence we have $\mathcal{Q} \text{ and } \rho$ for the single hop. 
Putting these values in Algorithm~\ref{alg1}, we search in the interval $(\lambda^{*},3\lambda^{*})$ and find the optimal value of $\Delay{K}{*}$. 
To measure the value of $\epsilon$ and calculate the \ac{RMS}, we assign an \textit{Ideal Oscillator} to both the source and the sink, i.e., the clocks do not drift from each other at all. 
Then we let the \ac{PTP} algorithm run except for the step where the clocks are actually adjusted. 
Then the calculated offset from the estimation procedure is simply the offset estimation error~$\epsilon$, since the real error between the two ideal clocks must be zero. 
We collect roughly 800 samples for every test and repeat the test with a different seed 8 times.

\begin{table}
    \centering
    \scriptsize
    \renewcommand{\arraystretch}{1.1}
    \setlength{\tabcolsep}{6pt}
    \begin{tabular}{|c||c|c|c|c|c|c|}
        \hline
        \textbf{Flow} & $\mu$ & $\lambda$ & $\rho$ & $\lambda^{*}$ & $\rho^{o}$ & $\lambda^{o}$  \\
        \hline\hline
        SF & 850  & 8    & 0.85 & 38.5 & 0.86 & 26.7 \\
        \hline
        LM & 1000 & 12   & 0.66 & 16 & 0.70 & 20.4  \\
        \hline
        SM & 750  & 12   & 0.50 & 6 & 0.54 & 12.5  \\
        \hline
        SS & 600  & 14   & 0.34 & 2.50 & 0.37 & 8.6  \\
        \hline
    \end{tabular}
    \caption{\small{Four types of flows are considered in the evaluation of the model-based approach, illustrating different levels of utilization of the link. We choose the $M/M/1$ mode for comparison with the simulations. The last four columns compare the expected vs. observed utilization and mean queuing time.}}
    \label{tab:flow-patterns}
\end{table}

\subsubsection{Observed Queuing Behaviour}
\label{subsec:Observed_queuing_behaviour}

First, we observe the mean queuing delay $\lambda^{o}$ and utilization $\rho^{o}$ reported by the switch at its egress port, and compare them with the $M/M/1$ model in Table \ref{tab:flow-patterns}. 
We see slightly higher utilization across all four flow types. 
The mean waiting time is also much higher except for the SF flow.
This is because of the padding added to small packets to make all packets at least 46 Bytes, the Ethernet and PHY headers of 54 Bytes added to all data packets, and the interpacket gap of 12 Bytes enforced at endpoints and switches.
The lower waiting time for the SF flow than expected is due to the transient behavior of the queue when $\rho$ momentarily exceeds 1; even when the queue size is set to a very large value, some packets are dropped.

Fig.\ref{fig:ks_test} shows the Kolmogorov–Smirnov (KS) statistic from the goodness-of-fit test on the collected samples of queuing delay across all the tests.
This value is compared against an analytical exponential distribution with the sample mean $\lambda^{o}$ and the theoretical mean $\lambda^{*}$ from the $M/M/1$ model.
While the former shows the extent to which the distribution is exponential, the latter shows the extent to which it resembles the expected model-based distribution. 
Interestingly, when compared to the sample mean, the distribution with lower utilization performs better, whereas when compared to the theoretical mean, the distribution with higher utilization performs better. 
Not surprisingly, the sample-mean results are far better; however, no test provided a result where the KS-statistic was below 0.05, the usually considered threshold for acceptance.  

\subsubsection{MSE Improvement}
\label{subsubsec:eval_mse_improvement_model}

We now see the effect of selecting the threshold using the model-based approach in comparison to the empirical-based approach on actual MSE performance. 
The application runs with a modified gPTP client and server. 
We use the same \textit{free-running} logic as for the real hardware measurement in Sec. ~\ref{subsec:measurement_of_marking_accuracy}, i.e, the clients just stop short of adjusting their clocks and report the calculated offset. 
Then, the oscillators for the gPTP client and server devices are set to \textit{ideal} so they do not drift.
The timestamping is accurate, and hence the offset reported by the error is the offset estimation error occurring due to congestion. 

We introduce a new metric, the \textit{Expected Improvement} in the Root \ac{MSE}, which is then given by $\mathcal{I}= 1 - \sqrt{\frac{\mathcal{E}_{\mathrm{comp}}}{\mathcal{E}}}$. 
Using the RMSE scales the error ratios linearly instead of quadratically in MSE.

Fig. \ref{fig:simple_topology_Eval} shows the observed results, where the red markers show the maximum observed improvement when the threshold is set from the empirically collected queuing delay distributions. 
The green triangular markers show the improvement when the threshold $\Delay{K}{*}$ is set optimally according to the $M/M/1$ model, and the blue lines show the theoretical improvement for that threshold obtained from Algorithm~\ref{alg1}.  
Interestingly, the best observed improvement were all obtained for $1.1\lambda^{o}<\Delay{K}{*}<1.2\lambda^{o}$ from Table \ref{tab:flow-patterns}. 
However, in the two extreme cases of $\rho$, we observe a lower value of improvement than the expected one when the threshold is set according to the $M/M/1$ model. 
The reasons are different - in the SF case, packet drops cause the optimal threshold to be lower than the expected one.
The optimal threshold of roughly 33\si{\micro\second} is far lower than the theoretical optimum of 53.7\si{\micro\second}. 
Many packets are simply dropped before they experience such high delays. 
For the SS case, the improvement from the model-based method is lower because the empirical distribution is much farther from the theoretical distribution, as seen in Fig. \ref{fig:ks_test} (red markers).  
Nonetheless, the improvement exceeds 20\% in all cases, showing one can still achieve significant improvement with the model-based approach. 
We will explore other model-based methods in future work. 
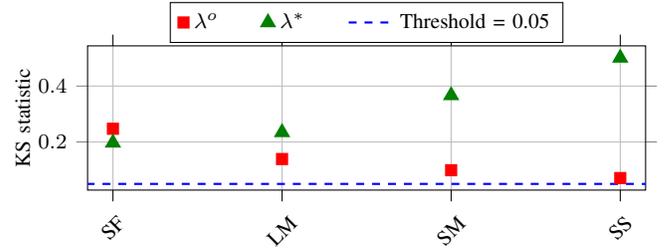
\begin{figure}
    \centering
    \begin{tikzpicture}
\definecolor{green}{RGB}{0,128,0}
\begin{axis}[
    width=9cm,
    height=3.5cm,
    tick align=outside,
    label style={font=\footnotesize}, 
    xlabel={},
    tick label style={font=\footnotesize}, %
    xtick style={color=black},
    xtick={0,1,2,3},
    xmin=-0.15, xmax=3.15,
    xticklabel style={rotate=45.0},
    xticklabels={SF,LM,SM,SS},
    ylabel={KS statistic},
    ylabel style={at={(axis description cs:0.05,0.5)}, anchor=center, color=black},
    grid=both,
    legend style={
    at={(0.5,1.02)},
    anchor=south,
    legend columns=-1,
    font=\footnotesize,
    /tikz/every even column/.append style={column sep=0.5cm}
}
]
\addplot+[semithick, red, mark=square*, mark size=2, mark options={solid}, only marks] coordinates {(0,0.247251) (1,0.138956) (2,0.098950) (3,0.070921)};
\addlegendentry{$\lambda^{o}$}
\addplot+[semithick, green, mark=triangle*, mark size=3, mark options={solid}, only marks] coordinates {(0,0.196944) (1,0.234296) (2,0.366519) (3,0.500686)};
\addlegendentry{$\lambda^{*}$}
\addplot+[blue, dashed, thick, no marks] coordinates {(-1,0.05) (4,0.05)};
\addlegendentry{Threshold = 0.05}
\end{axis}
\end{tikzpicture}
    \vspace{-0.6cm}
    \caption{\small{KS-statistic for the empirical queing delay distribution using the sample mean and the theoretical mean from the $M/M/1$ model. This value should be below the threshold of 0.05, but none of the tests meet that criterion. }}
    \label{fig:ks_test}
\end{figure}
\begin{figure}
    \centering
\begin{tikzpicture}

\definecolor{darkgray176}{RGB}{176,176,176}
\definecolor{green}{RGB}{0,128,0}
\begin{axis}[
width=9cm,
height=3.5cm,
tick align=outside,
label style={font=\footnotesize}, 
tick pos=left,
x grid style={darkgray176},
xlabel={},
xmajorgrids,
xmin=-0.15, xmax=3.15,
tick label style={font=\footnotesize}, %
xtick style={color=black},
xtick={0,1,2,3},
xticklabel style={rotate=45.0},
xticklabels={SF,LM,SM,SS},
y grid style={darkgray176},
ylabel={$\mathcal{I}$},
ymajorgrids,
ylabel style={at={(axis description cs:0.05,0.5)}, anchor=center, color=black},
ymin=0.203434739287724, ymax=0.458736819150679,
ytick style={color=black},
legend style={
    at={(0.5,1.02)},
    anchor=south,
    legend columns=-1,
    font=\footnotesize,
    /tikz/every even column/.append style={column sep=0.5cm}
}
]
\addplot [semithick, red, mark=square*, mark size=2, mark options={solid}, only marks]
coordinates {
    (0,0.405505066165327)
    (1,0.423380411246393)
    (2,0.402320939775056)
    (3,0.388130708077611)
};\addlegendentry{Empirical}
\addplot [semithick, green, mark=triangle*, mark size=3, mark options={solid}, only marks]
coordinates {
    (0,0.224404906429636)
    (1,0.381162074970979)
    (2,0.362247688726324)
    (3,0.2766652008767)
}; \addlegendentry{Model}
\addplot [semithick, blue, mark=*, mark size=1, mark options={solid}]
coordinates {
    (0,0.37677039197482565)
    (1,0.3683806897320602)
    (2,0.3639057041028628)
    (3,0.36088964110797395)
}; \addlegendentry{Model-Expected}
\end{axis}

\end{tikzpicture}
    \vspace{-0.3cm}
    \caption{\small{The model-based approach for threshold selection shows worse results for all the flows compared to the empirical approach. However, this difference is smaller for medium utilization (LM and SM). Even in the worst-case, the model-based approach demonstrates a 20\% improvement.}}
    \label{fig:simple_topology_Eval}
\end{figure}
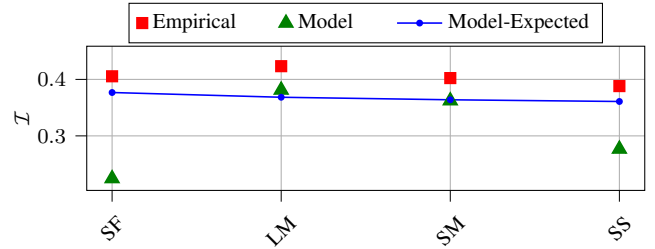


\subsection{Improved Filter Performance}
\label{subsec:improved_filtering}

So far, we have considered the distributions $\mathcal{Q}_{m}$ arising from unfiltered delay measurements. 
Clock synchronization methods use statistical filtering methods to reduce the variance of these distributions and, therefore, the MSE. 
The most common are \ac{RTT} filtering methods using median delay \cite{linuxptp,ptp}, min-RTT \cite{mills1992rfc, chrony, firefly}, or moving average \cite{ptp}. 
We define a filter operator $S_{M}(A,B)$ which takes $M$ samples of the forward and reverse delay distributions $A$ and $B$ and outputs a new delay value distribuion $\mathcal{S}$ based on the corresponding \ac{RTT} samples $\frac{a{i}+b{i}}{2}$ where $a_{i}$ and $b_{i}$ are the $i$-th forward and reverse delay estimation measurements from $A$ and $B$.
Note that the client performs, $S : \mathbb{R}^{2} \rightarrow \mathbb{R}$, a reduction in dimensionality that cannot differentiate between an excess in the forward or reverse path. 
Thus, we propose using $\mathcal{D}_{m}$ rather than $\mathcal{Q}_{m}$ as the filter input, since our method applies the correction to the forward and reverse paths individually. 
While the use of a filter reduces variance by eliminating packets that experienced queuing delay, it also entails a fundamental trade-off: reduced reactivity. 
As we increase the filter length $M$, we increase the likelihood of selecting samples that did not experience queuing delay. 
However, we also increase the temporal inertia - the system reacts more slowly or requires more frequent messaging when there is a shift in the distribution itself.
Thus, the performance of an applied filter has two opposing aspects: the filter length and the reduction in variance. 

For evaluation of congestion marking with filters, we use the same `Close' and `Distant' datasets from Sec.~\ref{subsec:improving_mse_over_one_hop}. 
For a given $R$, we determine the optimal threshold for each dataset using Algorithm~\ref{alg1}. 
Then we find the nearest possible threshold value given the Tofino constraint of 80-byte cells, apply $R$ and $K$ to the P4 program, and re-run the modified \textit{ptp4l} tests. 
We then proceed to collect the forward $T2-T1$ and reverse delays $T4-T3$ using the measurement setup illustrated in Fig.~\ref{fig:measurement_setup} as well as the corresponding value of the marking counter received in the packet header. 
We cyclically turn the interfering background traffic on and off to mimic a shift in the distribution with 60-second on and 20-second off cycles. 

We investigate the effects of applying two filters, median delay and minRTT, which span popular and upcoming clock synchronization deployments (NTP, Chrony, PTP, and Firefly). 
The results are presented in Fig.~\ref{fig:filtering_results} using the same color scheme as Fig.~\ref{fig:d_upper_bound}. 
Fig. ~\ref{fig:filter_variance_reduction} shows the variance reduction in the output for a filter length $M=8$ with an increasing $R$.  
The curves show a similar reduction in variance as Fig.~\ref{fig:variance_improvement}.
In the best case, for $R=7$, we observed a 99\% reduction in variance compared to the same filter without packet-marking compensation. 
The minRTT filter always performs better than the median delay.

Fig.~\ref{fig:filter_length} shows the variance of the filter output with increasing $M$ for baseline (without correction) and $R=1$ and $2$. 
Obviously, as the filter length $M$ increases, the variance reduces for all the curves. 
We observe a much more drastic reduction in congestion marking with each increase in $R$. 
Most notably, the corrected variance for $M=1$ (no filtering) is lower than that of the uncorrected variance with $M=12$. 
Thus, one can achieve better performance while removing the need for filtering altogether, allowing the clock synchronization to be more reactive to network changes. 
 
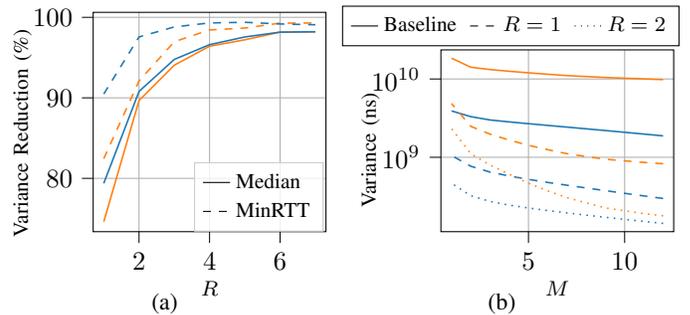
\begin{figure}
    \centering
    \begin{subfigure}{0.48\columnwidth}
        \centering
\begin{tikzpicture}

\definecolor{crimson2143940}{RGB}{214,39,40}
\definecolor{darkgray176}{RGB}{176,176,176}
\definecolor{darkorange25512714}{RGB}{255,127,14}
\definecolor{forestgreen4416044}{RGB}{44,160,44}
\definecolor{lightgray204}{RGB}{204,204,204}
\definecolor{steelblue31119180}{RGB}{31,119,180}

\begin{axis}[
width=1.1\columnwidth,
height=4.5cm,
legend cell align={left},
legend style={
    font=\footnotesize,            
    fill opacity=0.8,
    draw opacity=1,
    text opacity=1,
    at={(0.99,0.00)},
    anchor=south east,
    draw=lightgray
  },
  legend image post style={
      xscale=0.6,
  },
tick align=outside,
tick pos=left,
x grid style={darkgray176},
xlabel={\footnotesize{$R$}},
xmajorgrids,
xmin=0.7, xmax=7.3,
xtick style={color=black},
y grid style={darkgray176},
ylabel={\footnotesize{Variance Reduction (\%)}},
ylabel style={at={(axis description cs:0.095,0.45)}, anchor=center},
ymajorgrids,
ymin=73.3744481232599, ymax=100.619105038162,
ytick style={color=black}
]
\addlegendimage{solid, black}
\addlegendentry{Median}
\addlegendimage{dashed, black}
\addlegendentry{MinRTT}
\addplot [semithick, darkorange25512714]
coordinates {
    (1,74.6128416193919)
    (2,89.6621236477004)
    (3,94.0569924445478)
    (4,96.4086668526731)
    (5,97.2059483087807)
    (6,98.1617756900935)
    (7,98.1933227455797)
};
\addplot [semithick,  steelblue31119180]
coordinates {
    (1,79.3792830361005)
    (2,90.7942811367089)
    (3,94.7715008621739)
    (4,96.6039264434877)
    (5,97.5601712683251)
    (6,98.1585130551451)
    (7,98.2105217097945)
};
\addplot [semithick,  darkorange25512714, dashed]
coordinates {
    (1,82.4414618024699)
    (2,92.0978543885155)
    (3,96.9716259325289)
    (4,98.4311469172536)
    (5,98.6681672798482)
    (6,99.2775018667076)
    (7,99.2959602168672)
};
\addplot [semithick,  steelblue31119180, dashed]
coordinates {
    (1,90.4764265163841)
    (2,97.5619101320052)
    (3,98.8074842388585)
    (4,99.3032641282571)
    (5,99.3807115420304)
    (6,99.175409066232)
    (7,99.0742451438064)
};
\end{axis}

\end{tikzpicture}
         \vspace{-0.8cm}
        \caption{}
        \label{fig:filter_variance_reduction}
    \end{subfigure}
    \hfill
    \begin{subfigure}{0.48\columnwidth}
        \centering
\begin{tikzpicture}

\definecolor{darkgray176}{RGB}{176,176,176}
\definecolor{darkorange25512714}{RGB}{255,127,14}
\definecolor{lightgray204}{RGB}{204,204,204}
\definecolor{steelblue31119180}{RGB}{31,119,180}

\begin{axis}[
width=1.1\columnwidth,
height=4cm,
 legend style={
    at={(0.3,1.02)},
    anchor=south,
    legend columns=-1,
    font=\footnotesize,
    /tikz/every even column/.append style={column sep=0.1cm}
},
  legend image post style={
      xscale=0.6,
  },
log basis y={10},
tick align=outside,
tick pos=left,
x grid style={darkgray176},
xlabel={\footnotesize{$M$}},
xmajorgrids,
xmin=0.45, xmax=12.55,
xtick style={color=black},
y grid style={darkgray176},
ylabel={\footnotesize{Variance (ns)}},
ymajorgrids,
ymin=112206564.77347, ymax=23511535619.9448,
ymode=log,
ytick style={color=black},
ylabel style={at={(axis description cs:0.095,0.45)}, anchor=center},
ytick={10000000,100000000,1000000000,10000000000,100000000000,1000000000000},
yticklabels={
  \(\displaystyle {10^{7}}\),
  \(\displaystyle {10^{8}}\),
  \(\displaystyle {10^{9}}\),
  \(\displaystyle {10^{10}}\),
  \(\displaystyle {10^{11}}\),
  \(\displaystyle {10^{12}}\)
}
]
\addlegendimage{solid, black}
\addlegendentry{Baseline}
\addlegendimage{dashed, black}
\addlegendentry{$R=1$}
\addlegendimage{dotted, black}
\addlegendentry{$R=2$}
\addplot [semithick, steelblue31119180,dashed]
coordinates {
    (12,297909619.685403)
    (11,321170879.646429)
    (10,346615079.09891)
    (9,374843756.094528)
    (8,405642283.66881)
    (7,439809038.832607)
    (6,477435422.27481)
    (5,522638165.10389)
    (4,576753553.982931)
    (3,647793685.221736)
    (2,769079512.35854)
    (1,1054566204.05391)
};
\addplot [semithick, steelblue31119180]
coordinates {
    (12,1877383536.03767)
    (11,1979327869.69788)
    (10,2084886084.41569)
    (9,2195581712.15224)
    (8,2310222888.95033)
    (7,2429780355.20955)
    (6,2553498309.93232)
    (5,2686080887.22038)
    (4,2829559001.34431)
    (3,2997184785.94219)
    (2,3305063846.79397)
    (1,3912610989.40404)
};
\addplot [semithick, darkorange25512714, dashed]
coordinates {
    (12,827636139.054878)
    (11,859868299.147045)
    (10,901001830.572243)
    (9,944621865.993527)
    (8,1023521581.54992)
    (7,1112157742.15479)
    (6,1251074307.06207)
    (5,1422422655.59755)
    (4,1659453588.60725)
    (3,1975926558.23101)
    (2,2484668195.77032)
    (1,4864011488.61141)
};
\addplot [semithick, darkorange25512714]
coordinates {
    (12,9817154363.45118)
    (11,10016051296.5781)
    (10,10231406648.6691)
    (9,10447995410.2143)
    (8,10747131070.6018)
    (7,11050457869.9997)
    (6,11498743738.818)
    (5,11981321433.252)
    (4,12545466237.1799)
    (3,13195909811.5611)
    (2,14131584542.119)
    (1,18440345471.9115)
};
\addplot [semithick, steelblue31119180, dotted]
coordinates {
    (12,143063949.017742)
    (11,151784725.202965)
    (10,161009557.393452)
    (9,171259051.407247)
    (8,182557776.165859)
    (7,195097884.703809)
    (6,208866239.144154)
    (5,225183696.28074)
    (4,244980692.245977)
    (3,273024054.903956)
    (2,323457497.292758)
    (1,465465234.839557)
};
\addplot [semithick, darkorange25512714, dotted]
coordinates {
    (12,178975664.063163)
    (11,191049288.101756)
    (10,207632186.482065)
    (9,225588353.865466)
    (8,262027244.188083)
    (7,305068063.506877)
    (6,372136166.555108)
    (5,467136966.371997)
    (4,606953086.717846)
    (3,797832581.121707)
    (2,1108538900.65844)
    (1,2329191531.59431)
};
\end{axis}

\end{tikzpicture}
        \vspace{-0.8cm}
        \caption{}
        \label{fig:filter_length}
    \end{subfigure}
    \vspace{-0.2cm}
    \caption{\small{Improved Filter Performance: (a) The reduction of variance by congestion marking for the same filter size as we increase $R$. (b) The variance of the filter output as we increase the filter size for no correction and with $R=1$ and $2$. The color scheme is the same as for Fig.~\ref{fig:d_upper_bound}, orange for Close and blue for Distant.}}
    \label{fig:filtering_results}
\end{figure}

\subsection{Multihop Topology}
\label{subsec:multihop}

To simulate a more realistic scenario, we extend the OMNET++ simulation from Sec. \ref{subsec:app_accuracy_mm1} to three hops  (switches) and independent traffic sources and sinks for each hop. and 
We consider the same 4 four traffic patterns from Table \ref{tab:flow-patterns}. 
In addition, MI refers to a mixed traffic scenario in which each switch has SS, SM, and LM flows, respectively. 
Thus, the Request packet experiences increasing queuing delay, while the Response packet experiences decreasing queuing delay along its path. 

\subsubsection{Optimal Threshold and MSE}

We collect the per-packet queuing delay for all 6 links (3 forward and 3 reverse) for a 10s simulation repeated 8 times. 
Each per-link dataset has at least 570,000 samples.  
Fig.~\ref{fig:multihop_analysis} shows the obtained values for the optimal $\Delay{K}{*}$ and the corresponding MSE.  
Here, we apply Algorithm~\ref {alg1} with $N=16$ on the per-hop distributions for the forward and reverse paths. 
The MSE is then calculated using both the resultant distributions. 
We sweep over all possible threshold values to find the setting with the least MSE for each value of $R$.
In general, as $R$ increases, the MSE improves, and we need a smaller corresponding threshold. 
This is expected, as increasing $R$ and decreasing the threshold is akin to increasing the granularity of the congestion correction mechanism. 
However, the marginal improvement in the MSE  decreases as $R$ increases. 
Especially for $R>8$, the improvement is minimal, showing that for three hop network ($L=3$), there is no gain for $R>8$. 
However, this does not mean that there is a fundamental limit to the MSE improvement from our algorithm; a further improvement can be achieved by increasing $N$, as can be seen from Fig.~\ref{fig:mse_changes_N}. 
The SS and SF curves for the two values of $N$ coincide until $R=8$. 
This is expected as the 16 bits of $N$ will only start saturating at $R>5$. 
The MSE for $N=32$ continues to improve and only saturates at a much higher $R$. 
Thus, we see that increasing granularity and allowing congestion to be expressed in more bits can significantly reduce the MSE across all traffic types. 
As seen in Sec.~\ref{subsec:impl_complexity}, the marginal increase in implementation complexity for $N$ and $R$ is relatively low. 
Thus, the only fundamental limit to our method is the number of bits in a typical PTP or NTP header, or the classification error of the CMC described in Sec.~\ref {subsec:measurement_of_marking_accuracy}.
An improvement in congestion marking must also be noted relative to its baseline. 
Fig.~\ref{fig:omnet_impr} shows the improvement in the root MSE $\mathcal{I}$ with $R$ for each of the flow patterns. 
While we can achieve over 80\% improvement, it is also notable that the shapes of the curves are similar across all flow patterns, meaning that the optimal relative improvement with $R$ is consistent. 

\begin{figure}
    \centering
    \pgfplotslegendfromname{sharedlegend}
    \begin{subfigure}{0.48\columnwidth}
        \centering
\begin{tikzpicture}
\begin{axis}[
width=1.1\columnwidth,
height=4cm,
tick align=outside,
tick pos=left,
x grid style={darkgray},
xlabel={\footnotesize{$R$}},
xmajorgrids,
xmin=0.25, xmax=16.75,
xtick style={color=black},
y grid style={darkgray},
ylabel={\footnotesize{Optimal $\Delay{K}{*} (\mu s)$}},
ylabel style={at={(axis description cs:0.11,0.40)}, anchor=center},
ymajorgrids,
ymin=1.61595012816253, ymax=26.4944785653256,
ytick style={color=black}
]
\addplot [thick, steelblue]
coordinates {
    (1,21.3030303030303)
    (2,16.0939700030609)
    (3,12.9139790930425)
    (4,10.8381670354846)
    (5,9.48999858100408)
    (6,8.41717071384127)
    (7,7.6586055552413)
    (8,7.20970701436497)
    (9,6.88776967696882)
    (10,6.73655976506823)
    (11,6.62658892004962)
    (12,6.54661012367245)
    (13,6.56467211139484)
    (14,6.50157971737806)
    (15,6.53146787241009)
    (16,6.55350661298926)
};
\addplot [thick, darkorange]
coordinates {
    (1,19.2727272727273)
    (2,13.900826446281)
    (3,11.0843976959679)
    (4,9.42245900022008)
    (5,8.15886072742389)
    (6,7.29093464105884)
    (7,6.52728976070589)
    (8,6.01819317380393)
    (9,5.60790460906318)
    (10,5.27177671697014)
    (11,5.0544819180413)
    (12,4.85285212966265)
    (13,4.78366198281222)
    (14,4.62209134277129)
    (15,4.63448329229659)
    (16,4.58558027691601)
};
\addplot [thick, forestgreen]
coordinates {
    (1,25.3636363636364)
    (2,17.7768595041322)
    (3,14.0869020786376)
    (4,11.8198994705381)
    (5,10.0645753832087)
    (6,8.93520052735173)
    (7,8.32595594441087)
    (8,7.97091656450914)
    (9,7.79614891250692)
    (10,7.75259916515683)
    (11,7.71916703587797)
    (12,7.69350196693662)
    (13,7.67379948976953)
    (14,7.74628849204297)
    (15,7.71432247874005)
    (16,7.68978291297216)
};
\addplot [thick, crimson]
coordinates {
    (1,15.2121212121212)
    (2,10.6617692072238)
    (3,8.5389215077002)
    (4,7.26292826263145)
    (5,6.34169099806477)
    (6,5.67198518058667)
    (7,5.11102117187664)
    (8,4.76537668166136)
    (9,4.49416768585537)
    (10,4.3297985974881)
    (11,4.23018096817461)
    (12,4.16980664737855)
    (13,4.13321614992639)
    (14,4.11104009086448)
    (15,4.04597338748489)
    (16,4.05816568938478)
};
\addplot [thick, mediumpurple]
coordinates {
    (1,11.1515151515152)
    (2,7.75084175084175)
    (3,6.30354045505561)
    (4,5.27884188275985)
    (5,4.61509078098294)
    (6,4.11949487832386)
    (7,3.74073968233499)
    (8,3.49008548971131)
    (9,3.26771994429864)
    (10,3.06919754875086)
    (11,2.97294426606102)
    (12,2.88614525762493)
    (13,2.84493764755931)
    (14,2.78653668472453)
    (15,2.7594008838114)
    (16,2.74679232985176)
};
\end{axis}

\end{tikzpicture}
         \vspace{-0.8cm}
        \caption{}
        \label{fig:optimal_threshold_multihop}
    \end{subfigure}
    \hfill
    \begin{subfigure}{0.48\columnwidth}
        \centering
\begin{tikzpicture}
\begin{axis}[
width=1.1\columnwidth,
height=4cm,
log basis y={10},
tick align=outside,
tick pos=left,
legend to name=sharedlegend,
x grid style={darkgray},
xlabel={\footnotesize{$R$}},
xmajorgrids,
xmin=-0.8, xmax=16.8,
xtick style={color=black},
y grid style={darkgray},
ylabel={\footnotesize{Optimal MSE$(\mu s)$}},
ylabel style={at={(axis description cs:0.095,0.45)}, anchor=center},
ymajorgrids,
ymin=1.36285860875727, ymax=615.922990439231,
ymode=log,
ytick style={color=black},
ytick={0.1,1,10,100,1000,10000},
yticklabels={
  $10^{-1}$,
  $10^{0}$,
  $10^{1}$,
  $10^{2}$,
  $10^{3}$,
  $10^{4}$
},
legend style={
    legend columns=-1,
    font=\footnotesize,
    /tikz/every even column/.append style={column sep=0.1cm}
}
]
\addplot [thick, steelblue]
coordinates {
    (0,438.324064978416)
    (1,148.321218594919)
    (2,77.2383372920791)
    (3,48.5705599950895)
    (4,33.8914401990389)
    (5,25.2219674603517)
    (6,19.7277018753167)
    (7,16.1118410916898)
    (8,13.8059033896376)
    (9,12.4560973930666)
    (10,11.7169307214868)
    (11,11.348018792513)
    (12,11.1675870491147)
    (13,11.0965055634468)
    (14,11.05142617275)
    (15,11.0659334358712)
    (16,11.0406521604866)
};
\addlegendentry{\footnotesize{LM}}
\addplot [thick, darkorange]
coordinates {
    (0,320.583680794525)
    (1,110.080637616094)
    (2,58.52262957933)
    (3,37.3984909334325)
    (4,26.3563219158254)
    (5,19.8066802935415)
    (6,15.5392485500077)
    (7,12.5789711488292)
    (8,10.4699462577998)
    (9,8.91652134722269)
    (10,7.79667800124534)
    (11,7.01466692956219)
    (12,6.49448794264075)
    (13,6.13392384822533)
    (14,5.90799847588737)
    (15,5.7914072490212)
    (16,5.73409055813933)
};
\addlegendentry{\footnotesize{MI}}
\addplot [thick, forestgreen]
coordinates {
    (0,466.488734369711)
    (1,158.521388170101)
    (2,79.3050953010648)
    (3,48.3882356722735)
    (4,33.0085416442797)
    (5,24.135623791749)
    (6,18.6472748121579)
    (7,15.4100164245666)
    (8,13.8404272702987)
    (9,13.2480377383112)
    (10,13.0506841130512)
    (11,12.9905594979409)
    (12,12.9843798667724)
    (13,12.9590259727376)
    (14,12.9844982261326)
    (15,12.9605255489269)
    (16,12.9797183492602)
};
\addlegendentry{\footnotesize{SF}}
\addplot [thick, crimson]
coordinates {
    (0,177.539698790624)
    (1,62.9466150918434)
    (2,33.4854164368919)
    (3,21.4519135298156)
    (4,15.1275524141639)
    (5,11.4117238789195)
    (6,8.9552919456152)
    (7,7.31652092756301)
    (8,6.19789896366777)
    (9,5.44259914278603)
    (10,4.98220866189609)
    (11,4.6742275280254)
    (12,4.51840471550204)
    (13,4.42927437605739)
    (14,4.38363755712346)
    (15,4.3592273299068)
    (16,4.35307614827538)
};
\addlegendentry{\footnotesize{SM}}
\addplot [thick, mediumpurple]
coordinates {
    (0,76.0536988432979)
    (1,28.1217207095961)
    (2,15.2701391573151)
    (3,9.92783046295827)
    (4,7.10619214108468)
    (5,5.37542616367281)
    (6,4.24332761767274)
    (7,3.48271705667181)
    (8,2.93590940448615)
    (9,2.54530429410551)
    (10,2.27175584101298)
    (11,2.08663866678139)
    (12,1.96640779067324)
    (13,1.88264117809347)
    (14,1.83366683552284)
    (15,1.80461212567374)
    (16,1.79943455866258)
};
\addlegendentry{\footnotesize{SS}}
\end{axis}

\end{tikzpicture}
        \vspace{-0.8cm}
        \caption{}
        \label{fig:improved_mse_multihop}
    \end{subfigure}
    \hfill
    \begin{subfigure}{0.48\columnwidth}
        \centering
\begin{tikzpicture}

\begin{axis}[
width=1.1\columnwidth,
height=4cm,
legend cell align={left},
legend style={
    font=\footnotesize,            
    fill opacity=0.8,
    draw opacity=1,
    text opacity=1,
    at={(0.99,0.99)},
    anchor=north east,
    draw=lightgray
  },
  legend image post style={
      xscale=0.6,
},
log basis y={10},
tick align=outside,
tick pos=left,
x grid style={darkgray},
xlabel={\footnotesize{$R$}},
xmajorgrids,
xmin=-0.35, xmax=29.35,
xtick style={color=black},
y grid style={darkgray},
ylabel={\footnotesize{Optimal MSE$(\mu s)$}},
ylabel style={at={(axis description cs:0.095,0.45)}}, 
ymajorgrids,
ymin=0.490997766907809, ymax=208.043229428552,
ymode=log,
ytick style={color=black},
ytick={0.01,0.1,1,10,100,1000,10000},
yticklabels={
  $10^{-2}$,
  $10^{-1}$,
  $10^{0}$,
  $10^{1}$,
  $10^{2}$,
  $10^{3}$,
  $10^{4}$
}
]
\addlegendimage{solid, black}
\addlegendentry{$N=16$}
\addlegendimage{dashed, black}
\addlegendentry{$N=32$}
\addplot [line width=0.72pt, forestgreen]
coordinates {
    (1,158.030578141947)
    (2,99.2891506288563)
    (3,80.6588762773884)
    (4,65.9555635780476)
    (5,52.691050362046)
    (6,40.9879395961656)
    (7,30.7103923877323)
    (8,21.8707106650629)
    (9,14.816482512197)
    (10,13.0473733152915)
    (11,12.9933836271229)
    (12,12.9824324173101)
    (13,12.95325143568)
    (14,12.9841171815311)
    (15,12.9610219458488)
    (16,12.9802165712337)
};

\addplot [line width=0.72pt, mediumpurple]
coordinates {
    (1,27.8823916556373)
    (2,15.3500472189346)
    (3,9.93135597870939)
    (4,7.10539801285449)
    (5,5.37113599803278)
    (6,4.25087409795669)
    (7,3.47895389357414)
    (8,2.93287435576218)
    (9,2.54271639901218)
    (10,2.27697672201173)
    (11,2.08642122005481)
    (12,1.9583445980958)
    (13,1.88373628659363)
    (14,1.83252242161798)
    (15,1.80957474748564)
    (16,1.78700890644105)
};
\addplot [line width=0.72pt, forestgreen, dashed]
coordinates {
    (1,158.030578141947)
    (2,99.2891506288563)
    (3,80.6588762773884)
    (4,65.9555635780476)
    (5,52.691050362046)
    (6,40.9879395961656)
    (7,30.7103763255374)
    (8,21.8672797303686)
    (9,14.5726196787624)
    (10,9.14866402709728)
    (11,7.30695774287161)
    (12,6.36379969267718)
    (13,5.62884704475473)
    (14,5.07420255773359)
    (15,4.67556141068981)
    (16,4.43370535176906)
    (17,4.27527506536064)
    (18,4.20557802861935)
    (19,4.16352774266131)
    (20,4.14175935801118)
    (21,4.13114902975844)
    (22,4.12711450356137)
    (23,4.12561774309216)
    (24,4.12192863705001)
    (25,4.12555338858271)
    (26,4.1311762704676)
    (27,4.11274508610225)
    (28,4.12195148400365)
};

\addplot [line width=0.72pt, mediumpurple, dashed]
coordinates {
    (1,27.8823916556373)
    (2,15.3500472189346)
    (3,9.93135339731414)
    (4,7.1052677621752)
    (5,5.36987861654376)
    (6,4.24586468392478)
    (7,3.45932981154874)
    (8,2.88908816357733)
    (9,2.45027793880172)
    (10,2.12307902456616)
    (11,1.85240731358947)
    (12,1.63253933769732)
    (13,1.45637092590734)
    (14,1.30645321080276)
    (15,1.18217164748103)
    (16,1.08000863483393)
    (17,0.992388169421555)
    (18,0.919564539166621)
    (19,0.859191437165196)
    (20,0.811494100924253)
    (21,0.768346763945471)
    (22,0.737893144421193)
    (23,0.709589239270696)
    (24,0.689687404470797)
    (25,0.673748515236052)
    (26,0.666005137658229)
    (27,0.651733130602129)
    (28,0.646386049274308)
};
\end{axis}

\end{tikzpicture}
        \vspace{-0.8cm}
        \caption{}
        \label{fig:mse_changes_N}
    \end{subfigure}
    \hfill
    \begin{subfigure}{0.48\columnwidth}
        \centering
\begin{tikzpicture}

\begin{axis}[
width=1.1\columnwidth,
height=4cm,
tick align=outside,
tick pos=left,
x grid style={darkgray},
xlabel={\footnotesize{$R$}},
xmajorgrids,
xmin=-0.8, xmax=16.8,
xtick style={color=black},
y grid style={darkgray},
ylabel={\footnotesize{$\mathcal{I}$}},
ylabel style={at={(axis description cs:0.095,0.45)}}, 
ymajorgrids,
ymin=-0.0433122062356776, ymax=0.90955633094923,
ytick style={color=black}
]
\addplot [thick, steelblue]
coordinates {
    (0,0)
    (1,0.418293408511844)
    (2,0.580097927167921)
    (3,0.66707052939014)
    (4,0.72184119776944)
    (5,0.760072436294672)
    (6,0.787781747045423)
    (7,0.808228209871113)
    (8,0.822454215759015)
    (9,0.831431728042803)
    (10,0.836447235277577)
    (11,0.839040018341628)
    (12,0.840335789520409)
    (13,0.840852658706849)
    (14,0.841163203623404)
    (15,0.841068690514027)
    (16,0.841210873623735)
};
\addplot [thick, darkorange]
coordinates {
    (0,0)
    (1,0.414017342861157)
    (2,0.572737112818169)
    (3,0.658522641937583)
    (4,0.713299587654995)
    (5,0.751377789043385)
    (6,0.779738950800101)
    (7,0.801904914838277)
    (8,0.819202442954988)
    (9,0.833253343777285)
    (10,0.844021258927721)
    (11,0.85208978173316)
    (12,0.857668948120156)
    (13,0.861668611526574)
    (14,0.864244392074926)
    (15,0.865577867907739)
    (16,0.866244124713552)
};
\addplot [thick, forestgreen]
coordinates {
    (0,0)
    (1,0.417060675517823)
    (2,0.587827352090666)
    (3,0.677950699050139)
    (4,0.734034485182614)
    (5,0.772559877121405)
    (6,0.800126611223707)
    (7,0.818274279060209)
    (8,0.827805237486066)
    (9,0.831521701032001)
    (10,0.83277177555732)
    (11,0.833135735610775)
    (12,0.833203904584491)
    (13,0.833357372391048)
    (14,0.833198686610352)
    (15,0.833352832584494)
    (16,0.833207385283333)
};
\addplot [thick, crimson]
coordinates {
    (0,0)
    (1,0.404559389536091)
    (2,0.565754971995643)
    (3,0.652517269066662)
    (4,0.708038984532668)
    (5,0.746544717832296)
    (6,0.775367373845386)
    (7,0.796962639343705)
    (8,0.813148652666461)
    (9,0.824903346109165)
    (10,0.83248511789949)
    (11,0.837716893172515)
    (12,0.840474575278838)
    (13,0.842059775449714)
    (14,0.842855169592059)
    (15,0.843282911266066)
    (16,0.843422635058164)
};
\addplot [thick, mediumpurple]
coordinates {
    (0,0)
    (1,0.391919922266729)
    (2,0.552105889227524)
    (3,0.638749728235023)
    (4,0.69443256820204)
    (5,0.73408747876873)
    (6,0.763763229100441)
    (7,0.786054120408043)
    (8,0.803544193357276)
    (9,0.817018700640584)
    (10,0.827175760415574)
    (11,0.834417604437251)
    (12,0.839243441277085)
    (13,0.842674266687236)
    (14,0.84475713167967)
    (15,0.846009848101792)
    (16,0.846206389505159)
};
\end{axis}

\end{tikzpicture}
        \vspace{-0.8cm}
        \caption{}
        \label{fig:omnet_impr}
    \end{subfigure}
    \vspace{-0.2cm}
    \caption{\small{Algorithm \ref{alg1} over multiple hops: The (a) optimal threshold $\Delay{K}{*}$ and (b) the best MSE obtained at that $\Delay{K}{*}$ decrease with an increase in $R$ slowly saturating for $R>8$. (c) A further decrease in MSE can be achieved by increasing $N$. (d) The improvement from baseline is similar for all the flows for a given $R$.}}
    \label{fig:multihop_analysis}
\end{figure}
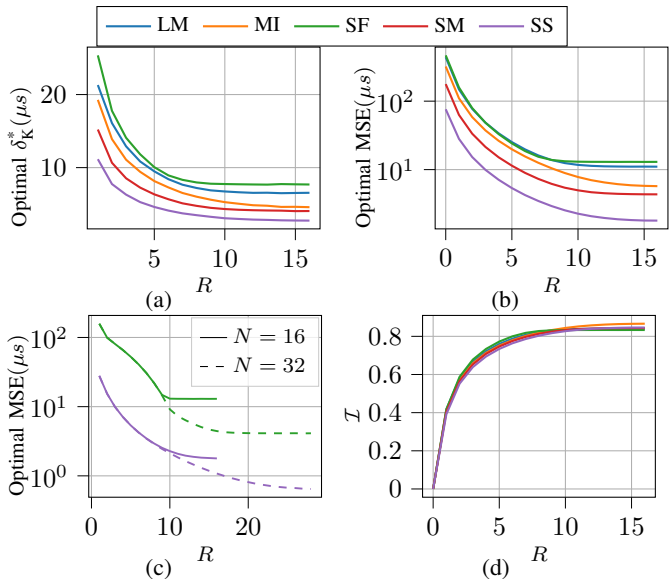

\subsubsection{MSE Improvement}

Now we turn to the OMNET++ simulation with the modified gPTP (Sec. \ref{subsec:motivating_example},\ref{subsubsec:eval_mse_improvement_model}) for different values of $N,R$ and $\Delay{K}{*}$.
We run 60s simulations with PTP messages running 4 times a second. 
Each simulation for each setting is repeated 8 times with a different random seed. 
In general, we observed that the simulation results closely matched (within 5\%) the expected values of $\mathcal{I}$ from Algorithm \ref{alg1}. 
Hence, we proceed to evaluate performance under different pragmatic parameter settings, and the results are presented in Fig. \ref{fig:multihop_results}. 
First, we show that the improvement with just a median delay filter ($M=5$) is the poorest among all approaches utilizing \ac{CMC}, confirming our results from Sec.\ref{subsec:improved_filtering}. 
For $N=1,R=1$ - i.e, for classical ECN, we evaluate at $\Delay{K}{*}\in\{12,24\}$, the two best settings for the SS and SF flows from Fig.~\ref{fig:optimal_threshold_multihop} for $R=1$ respectively.    
Here, we see that when $\Delay{K}{*}=12$, the flow with higher queueing delay (SF) suffers, and when $\Delay{K}{*}=24$, the flow with lower queueing delay (SS) shows little improvement. 
However, this might be useful, as we see in Fig. \ref{fig:omnet_impr}, since SS has a lower baseline \ac{RMS} to begin with. 
For example, looking at the results of $\Delay{K}{*}=24$, a 40\% improvement in the RMSE of the SF flow from the baseline gives an MSE of 63, while a 20\% improvement for the SS flow still gives a lower MSE of 48. 
If one doesn't know the network's flows but wishes to minimize the maximum possible MSE, protecting against heavier traffic by setting a higher threshold could be a pragmatic approach. 
Finally, the improvement with increasing $R$ and $N=32$ can be seen for the remaining 3 curves, where $R_{o}1,2$ are when the threshold is set according to the optimum $o$ from Fig.~\ref{fig:optimal_threshold_multihop}.
Here, the results closely match our expected improvement from Fig.~\ref{fig:omnet_impr}. 
An interesting case arises with $R=5$ and $\Delay{K}{*}$ set statically to $5$.  
As we increase $R$, the improvement is less sensitive to changes in the threshold value. 
One can also compensate for both the lower queuing delays of the SS flow and the higher queuing delays of the SF flows with greater granularity. 
Here, we obtain a maximum improvement of 70\%, but SS flow does not seem to suffer as drastically as in the case of $\Delay{K}{*}= 24$. 
Note that a 70\% improvement in the RMSE is a 10 times reduction in the MSE. 

\begin{figure}
    \centering
\begin{tikzpicture}

\begin{axis}[
width=9cm,
height=4cm,
tick align=outside,
tick pos=left,
x grid style={darkgray},
label style={font=\footnotesize}, 
xlabel={},
xmajorgrids,
xmin=-0.15, xmax=3.15,
tick label style={font=\footnotesize}, %
xtick style={color=black},
xtick={0,1,2,3},
xticklabel style={rotate=45.0},
xticklabels={SF,LM,SM,SS},
y grid style={darkgray},
ylabel={$\mathcal{I}$},
ymajorgrids,
ymin=0, ymax=0.78,
ylabel style={at={(axis description cs:0.05,0.5)}, anchor=center, color=black},
ytick style={color=black},
legend style={
    at={(0.42,1.02)},
    anchor=south,
    legend columns=-1,
    font=\footnotesize,
    /tikz/every even column/.append style={column sep=0.1cm}
}
]
\addplot [thick, steelblue, mark=*, mark size=3, mark options={solid}]
coordinates {
    (0,0.132496594892153)
    (1,0.124798053474321)
    (2,0.138967284309389)
    (3,0.116926162781815)
};\addlegendentry{$M5$}
\addplot [thick, darkorange, mark=square*, mark size=3, mark options={solid}]
coordinates {
    (0,0.161080068591416)
    (1,0.247697195724364)
    (2,0.377481676522345)
    (3,0.416677183658681)
};\addlegendentry{$\Delay{K}{*}12$}
\addplot [thick, forestgreen, mark=diamond*, mark size=3, mark options={solid}]
coordinates {
    (0,0.395069815918226)
    (1,0.412395220858707)
    (2,0.345486889541654)
    (3,0.19638261624081)
};\addlegendentry{$\Delay{K}{*}24$}
\addplot [thick, crimson, mark=triangle*, mark size=3, mark options={solid}]
coordinates {
    (0,0.400124641310118)
    (1,0.420620829539281)
    (2,0.40312398581986)
    (3,0.412024372320498)
};\addlegendentry{$R_{o}1$}
\addplot [thick, mediumpurple, mark=triangle*, mark size=3, mark options={solid,rotate=180}]
coordinates {
    (0,0.591933265654459)
    (1,0.578206061920803)
    (2,0.560911087348256)
    (3,0.570698463042697)
};\addlegendentry{$R_{o}2$}
\addplot [thick, teal, mark=pentagon*, mark size=3, mark options={solid}]
coordinates {
    (0,0.732496594892153)
    (1,0.724798053474321)
    (2,0.658967284309389)
    (3,0.586926162781815)
};\addlegendentry{$R_{5}5$}
\end{axis}

\end{tikzpicture}
    \vspace{-0.4cm}
    \caption{\small{Observed Improvement for the multihop topology for different pragmatic parameter settings. }}
    \label{fig:multihop_results}
\end{figure}
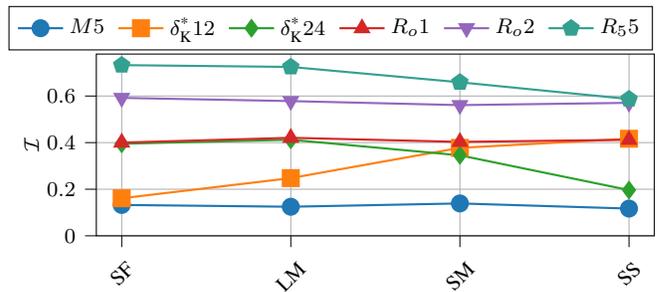

\section{Conclusion}
\label{sec:conclusion}
In this paper, we propose a novel, backward-compatible congestion marking method to improve clock synchronization performance, requiring a simpler implementation in switches than the current state of the art. 
The method corrects for \ac{PDV} caused by congestion, effectively reducing offset estimation errors for legacy clock synchronization methods \ac{PTP} and \ac{NTP} as well as more modern networked clock synchronization algorithms like Firefly.
We show that the method is relatively easy to implement in switches and endpoints and demonstrate several practical methods for selecting threshold-related parameters.   
Analytical and experimental results demonstrate the achieved improvement and provide insights into the optimal settings for different traffic conditions to achieve the best results with our method.



 
\bibliographystyle{IEEEtran}

\bibliography{main}

@inproceedings{moongen,
author = {Emmerich, Paul and Gallenm\"{u}ller, Sebastian and Raumer, Daniel and Wohlfart, Florian and Carle, Georg},
title = {MoonGen: A Scriptable High-Speed Packet Generator},
year = {2015},
booktitle = {Proceedings of the 2015 Internet Measurement Conference}
}

@ARTICLE{ITUstd,
author={},
journal={ITU-T G.8271, International Telecommunications Union},
title={{Packet over Transport aspects – Synchronization, quality and availability targets}},
year={2020},
}

@article{mills1992rfc,
  title={RFC 1305: Network Time Protocol (Version 3) Specification},
  author={Mills, David L},
  journal={Implementation and Analysis},
  year={1992}
}

@article{cristian1989probabilistic,
  title={Probabilistic clock synchronization},
  author={Cristian, Flaviu},
  journal={Distributed computing},
  volume={3},
  number={3},
  pages={146--158},
  year={1989},
  publisher={Springer}
}

@INPROCEEDINGS{kalmanFiltering,
  author={Novaes, Camila and Freire, Igor and Klautau, Aldebaro and Almeida, Igor and Medeiros, Eduardo},
  booktitle={2021 IEEE Latin-American Conference on Communications (LATINCOM)}, 
  title={Analysis of Kalman Filtering for Clock Synchronization in PTP-Unaware Networks}, 
  year={2021},
  volume={},
  number={},
  pages={1-6},
  doi={10.1109/LATINCOM53176.2021.9647759}}

@inproceedings{Gore2020,
author = {Gore, Rahul Nandkumar and Lisova, Elena and Akerberg, Johan and Bjorkman, Mats},
booktitle = {12th AEIT International Annual Conference, AEIT 2020},
doi = {10.23919/AEIT50178.2020.9241154},
isbn = {9788887237474},
month = {sep},
publisher = {Institute of Electrical and Electronics Engineers Inc.},
title = {{Clock Synchronization in Future Industrial Networks: Applications, Challenges, and Directions}},
url = {https://ieeexplore.ieee.org/document/9241154},
year = {2020}
}

@inproceedings {genghuygens,
author = {Yilong Geng and Shiyu Liu and Zi Yin and Ashish Naik and Balaji Prabhakar and Mendel Rosenblum and Amin Vahdat},
title = {Exploiting a Natural Network Effect for Scalable, Fine-grained Clock Synchronization},
booktitle = {15th {USENIX} Symposium on Networked Systems Design and Implementation ({NSDI} 18)},
year = {2018},
isbn = {978-1-939133-01-4},
address = {Renton, WA},
pages = {81--94},
url = {https://www.usenix.org/conference/nsdi18/presentation/geng},
publisher = {{USENIX} Association},
month = apr,
}

@article{spanner,
author = {Corbett, James C. and Dean, Jeffrey and Epstein, Michael and Fikes, Andrew and Frost, Christopher and Furman, J. J. and Ghemawat, Sanjay and Gubarev, Andrey and Heiser, Christopher and Hochschild, Peter and Hsieh, Wilson and Kanthak, Sebastian and Kogan, Eugene and Li, Hongyi and Lloyd, Alexander and Melnik, Sergey and Mwaura, David and Nagle, David and Quinlan, Sean and Rao, Rajesh and Rolig, Lindsay and Saito, Yasushi and Szymaniak, Michal and Taylor, Christopher and Wang, Ruth and Woodford, Dale},
title = {Spanner: Google’s Globally Distributed Database},
year = {2013},
publisher = {Association for Computing Machinery},
journal = {ACM Trans. Comput. Syst.},
}

@ARTICLE{KalmanFiltering2,
  author={Chaloupka, Zdenek and Alsindi, Nayef and Aweya, James},
  journal={IEEE Communications Letters}, 
  title={Clock Skew Estimation Using Kalman Filter and IEEE 1588v2 PTP for Telecom Networks}, 
  year={2015},
  volume={19},
  number={7},
  pages={1181-1184},
  doi={10.1109/LCOMM.2015.2427158}}

@inproceedings{ptp,
  title={IEEE-1588™ Standard for a precision clock synchronization protocol for networked measurement and control systems},
  author={Eidson, John C and Fischer, Mike and White, Joe},
  booktitle={Proceedings of the 34th Annual Precise Time and Time Interval Systems and Applications Meeting},
  pages={243--254},
  year={2002}
}

@inproceedings {sundial,
author = {Yuliang Li and Gautam Kumar and Hema Hariharan and Hassan Wassel and Peter Hochschild and Dave Platt and Simon Sabato and Minlan Yu and Nandita Dukkipati and Prashant Chandra and Amin Vahdat},
title = {Sundial: Fault-tolerant Clock Synchronization for Datacenters},
booktitle = {14th {USENIX} Symposium on Operating Systems Design and Implementation ({OSDI} 20)},
year = {2020},
}

@InProceedings{zilbermanTime,
author="Zilberman, Noa
and Grosvenor, Matthew
and Popescu, Diana Andreea
and Manihatty-Bojan, Neelakandan
and Antichi, Gianni
and W{\'o}jcik, Marcin
and Moore, Andrew W.",
editor="Kaafar, Mohamed Ali
and Uhlig, Steve
and Amann, Johanna",
title="Where Has My Time Gone?",
booktitle="Passive and Active Measurement",
year="2017",
publisher="Springer International Publishing",
address="Cham",
pages="201--214",
isbn="978-3-319-54328-4"
}

@Inbook{Banerjee2023,
author="Banerjee, Parameswar
and Matsakis, Demetrios",
title="Network Time Protocol (NTP) and Precise Time Protocol (PTP)",
bookTitle="An Introduction to Modern Timekeeping and Time Transfer",
year="2023",
publisher="Springer Nature Switzerland",
address="Cham",
pages="141--152",
isbn="978-3-031-30780-5",
doi="10.1007/978-3-031-30780-5_7",
url="https://doi.org/10.1007/978-3-031-30780-5_7"
}

@INPROCEEDINGS{bayesian_optimization,
  author={Kernen, Thomas and Machnikowski, Maciek and Shabat, Gil and Shteingart, Hanan},
  booktitle={2024 IEEE International Symposium on Precision Clock Synchronization for Measurement, Control, and Communication (ISPCS)}, 
  title={Optimizing the PTP Control Loop}, 
  year={2024},
  volume={},
  number={},
  pages={1-7},
  doi={10.1109/ISPCS63021.2024.10747730}}

@ARTICLE{LimitationsClockSync,
  author={Freris, Nikolaos M. and Graham, Scott R. and Kumar, P. R.},
  journal={IEEE Transactions on Automatic Control}, 
  title={Fundamental Limits on Synchronizing Clocks Over Networks}, 
  year={2011},
  volume={56},
  number={6},
  pages={1352-1364},
  doi={10.1109/TAC.2010.2089210}}

@article{BoundaryClockImpl,
author = {Ratzel, Rick and Greenstreet, Rodney},
title = {Toward Higher Precision: An introduction to PTP and its significance to NTP practitioners},
year = {2012},
issue_date = {August 2012},
publisher = {Association for Computing Machinery},
address = {New York, NY, USA},
volume = {10},
number = {8},
issn = {1542-7730},
url = {https://doi.org/10.1145/2346916.2354406},
doi = {10.1145/2346916.2354406},
journal = {Queue},
month = aug,
pages = {40–53},
numpages = {14}
}

@INPROCEEDINGS{TransparentClockImpl,
  author={Moreira, Naiara and Astarloa, Armando and Lazaro, Jesus and Garcia, Alain and Ormaetxea, Enekoitz},
  booktitle={2013 IEEE International Symposium on Industrial Electronics}, 
  title={IEEE 1588 Transparent Clock architecture for FPGA-based network devices}, 
  year={2013},
  volume={},
  number={},
  pages={1-6},
  doi={10.1109/ISIE.2013.6563777}}

@techreport{ramakrishnan2001addition,
  title={The addition of explicit congestion notification (ECN) to IP},
  author={Ramakrishnan, Kadangode and Floyd, Sally and Black, David},
  year={2001}
}

@inproceedings{ECNAdoption,
author = {Bauer, Steven and Beverly, Robert and Berger, Arthur},
title = {Measuring the state of ECN readiness in servers, clients,and routers},
year = {2011},
isbn = {9781450310130},
publisher = {Association for Computing Machinery},
address = {New York, NY, USA},
url = {https://doi.org/10.1145/2068816.2068833},
doi = {10.1145/2068816.2068833},
booktitle = {Proceedings of the 2011 ACM SIGCOMM Conference on Internet Measurement Conference},
pages = {171–180},
numpages = {10},
location = {Berlin, Germany},
series = {IMC '11}
}

@inproceedings{ECNTuning,
author = {Wu, Haitao and Ju, Jiabo and Lu, Guohan and Guo, Chuanxiong and Xiong, Yongqiang and Zhang, Yongguang},
title = {Tuning ECN for data center networks},
year = {2012},
isbn = {9781450317757},
publisher = {Association for Computing Machinery},
address = {New York, NY, USA},
url = {https://doi.org/10.1145/2413176.2413181},
doi = {10.1145/2413176.2413181},
booktitle = {Proceedings of the 8th International Conference on Emerging Networking Experiments and Technologies},
pages = {25–36},
numpages = {12},
location = {Nice, France},
series = {CoNEXT '12}
}

@INPROCEEDINGS{yashinfocom,
  author={Deshpande, Yash and Diederich, Philip and Kellerer, Wolfgang},
  booktitle={IEEE INFOCOM 2023 - IEEE Conference on Computer Communications Workshops (INFOCOM WKSHPS)}, 
  title={Towards a Network Aware Model of the Time Uncertainty Bound in Precision Time Protocol}, 
  year={2023},
  volume={},
  number={},
  pages={1-2},
  doi={10.1109/INFOCOMWKSHPS57453.2023.10226161}}

@misc{linuxptp,
  author       = {Richard Cochran},
  title        = {Linux PTP},
  year         = {2025},
  note         = {Version X.Y},
  howpublished = {\url{https://github.com/richardcochran/linuxptp}}
}

@ARTICLE{kalmanfilteringreview,
  author={Bletsas, A.},
  journal={IEEE Transactions on Ultrasonics, Ferroelectrics, and Frequency Control}, 
  title={Evaluation of Kalman filtering for network time keeping}, 
  year={2005},
  volume={52},
  number={9},
  pages={1452-1460},
  doi={10.1109/TUFFC.2005.1516016}}

@INPROCEEDINGS{LinearProgramming,
  author={Puttnies, Henning and Schweissguth, Eike and Timmermann, Dirk and Schacht, Jorg},
  booktitle={2019 IEEE Global Communications Conference (GLOBECOM)}, 
  title={Clock Synchronization Using Linear Programming, Multicasts, and Temperature Compensation}, 
  year={2019},
  volume={},
  number={},
  pages={1-6},
  doi={10.1109/GLOBECOM38437.2019.9013260}}

@inproceedings {graham,
author = {Ali Najafi and Michael Wei},
title = {Graham: Synchronizing Clocks by Leveraging Local Clock Properties},
booktitle = {19th USENIX Symposium on Networked Systems Design and Implementation (NSDI 22)},
year = {2022},
isbn = {978-1-939133-27-4},
address = {Renton, WA},
pages = {453--466},
url = {https://www.usenix.org/conference/nsdi22/presentation/najafi},
publisher = {USENIX Association},
month = apr
}

@inproceedings{Timely,
author = {Mittal, Radhika and Lam, Vinh The and Dukkipati, Nandita and Blem, Emily and Wassel, Hassan and Ghobadi, Monia and Vahdat, Amin and Wang, Yaogong and Wetherall, David and Zats, David},
title = {TIMELY: RTT-based Congestion Control for the Datacenter},
year = {2015},
isbn = {9781450335423},
publisher = {Association for Computing Machinery},
address = {New York, NY, USA},
url = {https://doi.org/10.1145/2785956.2787510},
doi = {10.1145/2785956.2787510},
booktitle = {Proceedings of the 2015 ACM Conference on Special Interest Group on Data Communication},
pages = {537–550},
numpages = {14},
location = {London, United Kingdom},
series = {SIGCOMM '15}
}

@misc{ecnNew,
    series =    {Request for Comments},
    number =    8311,
    howpublished =  {RFC 8311},
    publisher = {RFC Editor},
    doi =       {10.17487/RFC8311},
    url =       {https://www.rfc-editor.org/info/rfc8311},
    author =    {David L. Black},
    title =     {{Relaxing Restrictions on Explicit Congestion Notification (ECN) Experimentation}},
    pagetotal = 20,
    year =      2018,
    month =     jan,
}

@article{p4,
author = {Bosshart, Pat and Daly, Dan and Gibb, Glen and Izzard, Martin and McKeown, Nick and Rexford, Jennifer and Schlesinger, Cole and Talayco, Dan and Vahdat, Amin and Varghese, George and Walker, David},
title = {P4: programming protocol-independent packet processors},
year = {2014},
issue_date = {July 2014},
publisher = {Association for Computing Machinery},
address = {New York, NY, USA},
volume = {44},
number = {3},
issn = {0146-4833},
url = {https://doi.org/10.1145/2656877.2656890},
doi = {10.1145/2656877.2656890},
journal = {SIGCOMM Comput. Commun. Rev.},
month = jul,
pages = {87–95},
numpages = {9}
}

@ARTICLE{RED,
  author={Floyd, S. and Jacobson, V.},
  journal={IEEE/ACM Transactions on Networking}, 
  title={Random early detection gateways for congestion avoidance}, 
  year={1993},
  volume={1},
  number={4},
  pages={397-413},
  doi={10.1109/90.251892}}

@ARTICLE{KaanTCP,
  author={Aykurt, Kaan and Zerwas, Johannes and Blenk, Andreas and Kellerer, Wolfgang},
  journal={IEEE Transactions on Network and Service Management}, 
  title={When TCP Meets Reconfigurations: A Comprehensive Measurement Study}, 
  year={2024},
  volume={21},
  number={2},
  pages={1372-1386},
  doi={10.1109/TNSM.2023.3327508}}

@INPROCEEDINGS{levesqueDelayAssymetry ,
  author={Lévesque, Martin and Tipper, David},
  booktitle={2015 IEEE International Symposium on Precision Clock Synchronization for Measurement, Control, and Communication (ISPCS)}, 
  title={Improving the PTP synchronization accuracy under asymmetric delay conditions}, 
  year={2015},
  volume={},
  number={},
  pages={88-93},
  doi={10.1109/ISPCS.2015.7324689}}

@INPROCEEDINGS{murakami,
  author={Murakami, Takahide and Horiuchi, Yukio},
  booktitle={2009 International Symposium on Precision Clock Synchronization for Measurement, Control and Communication}, 
  title={Improvement of synchronization accuracy in IEEE 1588 using a queuing estimation method}, 
  year={2009},
  volume={},
  number={},
  pages={1-5},
  doi={10.1109/ISPCS.2009.5340202}}

@article{lee,
title = "An enhanced IEEE 1588 time synchronization algorithm for asymmetric communication link using block burst transmission",
author = "Sungwon Lee",
note = "Funding Information: Manuscript received May 30, 2008. The associate editor coordinating the review of this letter and approving it for publication was Y.-C. Wu. This research was supported by the Kyung Hee University Research Fund in 2008. (KHU-20080692).",
year = "2008",
doi = "10.1109/LCOMM.2008.080824",
language = "English",
volume = "12",
pages = "687--689",
journal = "IEEE Communications Letters",
issn = "1089-7798",
publisher = "Institute of Electrical and Electronics Engineers Inc.",
number = "9",
}

@INPROCEEDINGS{tc_verifying_performance,
  author={Burch, Jeff and Green, Kenneth and Nakulski, John and Vook, Dieter},
  booktitle={2009 International Symposium on Precision Clock Synchronization for Measurement, Control and Communication}, 
  title={Verifying the performance of transparent clocks in PTP systems}, 
  year={2009},
  volume={},
  number={},
  pages={1-6},
  doi={10.1109/ISPCS.2009.5340195}}

@INPROCEEDINGS{substation_performance,
  author={Liu, Hao and Liu, Jingcheng and Bi, Tianshu and Li, Jinsong and Yang, Wei and Zhang, Daonong},
  booktitle={2015 IEEE International Symposium on Precision Clock Synchronization for Measurement, Control, and Communication (ISPCS)}, 
  title={Performance analysis of time synchronization precision of PTP in smart substations}, 
  year={2015},
  volume={},
  number={},
  pages={37-42},
  doi={10.1109/ISPCS.2015.7324677}}

@inproceedings{firefly,
author = {Namyar, Pooria and Li, Yuliang and Wang, Weitao and Dukkipati, Nandita and Yap, Kk and Gong, Junzhi and Chen, Chen and Gao, Peixuan and Ray, Devdeep and Kumar, Gautam and Ma, Yidan and Govindan, Ramesh and Vahdat, Amin},
title = {Firefly: Scalable, Ultra-Accurate Clock Synchronization for Datacenters},
year = {2025},
isbn = {9798400715242},
publisher = {Association for Computing Machinery},
address = {New York, NY, USA},
url = {https://doi.org/10.1145/3718958.3750502},
doi = {10.1145/3718958.3750502},
booktitle = {Proceedings of the ACM SIGCOMM 2025 Conference},
pages = {434–452},
numpages = {19},
location = {S\~{a}o Francisco Convent, Coimbra, Portugal},
series = {SIGCOMM '25}
}

@software{chrony,
  title        = {Chrony: Versatile Implementation of the Network Time Protocol},
  author       = {Richard P. Curnow and Miroslav Lichvar},
  year         = {2003},
  url          = {https://chrony.tuxfamily.org/},
  note         = {Open-source NTP client and server implementation}
}

@INPROCEEDINGS{hwptptc,
  author={Kirrmann, Hubert and Honegger, Claudio and Ilie, Diana and Sotiropoulos, Ioannis},
  booktitle={2012 IEEE International Symposium on Precision Clock Synchronization for Measurement, Control and Communication Proceedings}, 
  title={Performance of a full-hardware PTP implementation for an IEC 62439-3 redundant IEC 61850 substation automation network}, 
  year={2012},
  volume={},
  number={},
  pages={1-6},
  doi={10.1109/ISPCS.2012.6336631}}

@misc{NetTimeLogicPTPTransparentClock,
  author       = {{NetTimeLogic GmbH}},
  title        = {{PTP Transparent Clock}},
  howpublished = {\url{https://www.nettimelogic.com/ptp-transparent-clock.php}},
  note         = {Accessed: 2025-12-18},
  year         = {n.d.}
}

@article{Liebeherr2009DelayTail,
  author = {Liebeherr, J. and Burchard, A. and Ciucu, F.},
  title = {Delay Bounds for Networks with Heavy‐Tailed and Self‐Similar Traffic},
  journal = {arXiv},
  year = {2009},
  url = {https://arxiv.org/pdf/0911.3856}
}

@misc{Whitt1990HeavyTail,
  author = {Whitt, W.},
  title = {The Performance Impact of Heavy‐Tailed Distributions},
  howpublished = {Queueing Systems survey and references},
  year = {1990s},
  note = {Discusses long tail delays in networking queue systems.}
}

@phdthesis{Garetto2003LongTail,
  author = {Garetto, M.},
  title = {Modeling, Simulation and Measurements of Queuing Delay under Long‐Tail Internet Traffic},
  school = {Università di Torino},
  year = {2003}
}

@inproceedings{Wang2020HeavyTailDelay,
  author = {Wang, P. and others},
  title = {Delay Distribution of Packet Delay under Heavy‐Tailed Traffic},
  booktitle = {Simulation Demonstrations},
  year = {2020},
  note = {Tail distribution analysis showing heavy tail behavior}
}

@inproceedings{sim2hw,
author = {Sp\"{a}th, Johannes and Helm, Max and Jaeger, Benedikt and Carle, Georg},
title = {Sim2HW: Modeling Latency Offset Between Network Simulations and Hardware Measurements},
year = {2024},
isbn = {9798400712548},
publisher = {Association for Computing Machinery},
address = {New York, NY, USA},
url = {https://doi.org/10.1145/3694811.3697820},
doi = {10.1145/3694811.3697820},
booktitle = {Proceedings of the 3rd GNNet Workshop on Graph Neural Networking Workshop},
pages = {20–26},
numpages = {7},
location = {Los Angeles, CA, USA},
series = {GNNet '24}
}

@INPROCEEDINGS{variance1,
  author={Jones, Terry and Koenig, Gregory A.},
  booktitle={2010 22nd International Symposium on Computer Architecture and High Performance Computing}, 
  title={A Clock Synchronization Strategy for Minimizing Clock Variance at Runtime in High-End Computing Environments}, 
  year={2010},
  volume={},
  number={},
  pages={207-214},
  doi={10.1109/SBAC-PAD.2010.33}}

@INPROCEEDINGS{multipathPTP,
  author={Shpiner, Alexander and Revah, Yoram and Mizrahi, Tal},
  booktitle={2013 IEEE International Symposium on Precision Clock Synchronization for Measurement, Control and Communication (ISPCS) Proceedings}, 
  title={Multi-path Time Protocols}, 
  year={2013},
  volume={},
  number={},
  pages={1-6},
  doi={10.1109/ISPCS.2013.6644754}}

@INPROCEEDINGS{PTPsec,
  author={Finkenzeller, Andreas and Butowski, Oliver and Regnath, Emanuel and Hamad, Mohammad and Steinhorst, Sebastian},
  booktitle={IEEE INFOCOM 2024 - IEEE Conference on Computer Communications}, 
  title={PTPsec: Securing the Precision Time Protocol Against Time Delay Attacks Using Cyclic Path Asymmetry Analysis}, 
  year={2024},
  volume={},
  number={},
  pages={461-470},
  doi={10.1109/INFOCOM52122.2024.10621345}}

@book{norris1998markov,
  title={Markov chains},
  author={Norris, James R},
  number={2},
  year={1998},
  publisher={Cambridge university press}
}

@inproceedings{ziegler2001quantitative,
  title={A quantitative Model for the Parameter Setting of RED with TCP traffic},
  author={Ziegler, Thomas and Brandauer, Christof and Fdida, Serge},
  booktitle={International Workshop on Quality of Service},
  pages={202--216},
  year={2001},
  organization={Springer}
}

@article{ZHUGE2026112151,
title = {DSCC : Dynamic synergistic congestion control for lossless RDMA datacenter networks},
journal = {Computer Networks},
volume = {280},
pages = {112151},
year = {2026},
issn = {1389-1286},
doi = {https://doi.org/10.1016/j.comnet.2026.112151},
url = {https://www.sciencedirect.com/science/article/pii/S1389128626001635},
author = {Jianxing Zhuge and Zeming Gao and Ye Tian and Jun Wang and Shaoxuan Yun and Xiangyang Gong}
}

@INPROCEEDINGS{RED_Selection,
  author={Hollot, C.V. and Misra, V. and Towsley, D. and Wei-Bo Gong},
  booktitle={Proceedings IEEE INFOCOM 2001. Conference on Computer Communications. Twentieth Annual Joint Conference of the IEEE Computer and Communications Society (Cat. No.01CH37213)}, 
  title={On designing improved controllers for AQM routers supporting TCP flows}, 
  year={2001},
  volume={3},
  number={},
  pages={1726-1734 vol.3},
  doi={10.1109/INFCOM.2001.916670}}

@inproceedings{virtulaisedTCP,
author = {Cronkite-Ratcliff, Bryce and Bergman, Aran and Vargaftik, Shay and Ravi, Madhusudhan and McKeown, Nick and Abraham, Ittai and Keslassy, Isaac},
title = {Virtualized Congestion Control},
year = {2016},
isbn = {9781450341936},
publisher = {Association for Computing Machinery},
address = {New York, NY, USA},
url = {https://doi.org/10.1145/2934872.2934889},
doi = {10.1145/2934872.2934889},
booktitle = {Proceedings of the 2016 ACM SIGCOMM Conference},
pages = {230–243},
numpages = {14},
location = {Florianopolis, Brazil},
series = {SIGCOMM '16}
}

@article{dctcp,
author = {Alizadeh, Mohammad and Greenberg, Albert and Maltz, David A. and Padhye, Jitendra and Patel, Parveen and Prabhakar, Balaji and Sengupta, Sudipta and Sridharan, Murari},
title = {Data center TCP (DCTCP)},
year = {2010},
issue_date = {October 2010},
publisher = {Association for Computing Machinery},
address = {New York, NY, USA},
volume = {40},
number = {4},
issn = {0146-4833},
url = {https://doi.org/10.1145/1851275.1851192},
doi = {10.1145/1851275.1851192},
journal = {SIGCOMM Comput. Commun. Rev.},
month = aug,
pages = {63–74},
numpages = {12}
}

@article{cheng2024pet,
  title={PET: Multi-agent independent PPO-based automatic ECN tuning for high-speed data center networks},
  author={Cheng, Kai and Wang, Ting and Du, Xiao and Du, Shuyi and Cai, Haibin},
  journal={arXiv preprint arXiv:2405.11956},
  year={2024}
}

@article{jung2023rlecn,
  title={RLECN—A learning based dynamic threshold control of ECN},
  author={Jung, Eun-Sung and Kim, Hyung Seok and others},
  journal={ICT Express},
  volume={9},
  number={6},
  pages={1007--1012},
  year={2023},
  publisher={Elsevier}
}

@inproceedings{acc_paper,
  title={ACC: Automatic ECN tuning for high-speed datacenter networks},
  author={Yan, Siyu and Wang, Xiaoliang and Zheng, Xiaolong and Xia, Yinben and Liu, Derui and Deng, Weishan},
  booktitle={Proceedings of the 2021 ACM SIGCOMM 2021 Conference},
  pages={384--397},
  year={2021}
}

@inproceedings{heirachichaltuning,
author="Hu, Jinbin
and Wang, Youyang
and Zhou, Zikai
and Rao, Shuying
and Xin, Rundong
and Wang, Jing
and He, Shiming",
editor="Tari, Zahir
and Li, Keqiu
and Wu, Hongyi",
title="HAECN: Hierarchical Automatic ECN Tuning with Ultra-Low Overhead in Datacenter Networks",
booktitle="Algorithms and Architectures for Parallel Processing",
year="2024",
publisher="Springer Nature Singapore",
address="Singapore",
pages="324--343",
isbn="978-981-97-0798-0"
}

\clearpage
\appendices

\section{Conditions for Improvement}
\label{sec:conditions_for_improvement}

\subsection{Condition for Variance Reduction}
\label{subsec:proof_variance_reduction}

We derive the general condition for variance reduction with $R$ thresholds.
We need to find the condition for which $\operatorname{Var}(\mathcal{Q}_{m,l}) > \operatorname{Var}(\mathcal{D}_{m,l})$. 
For ease of notation, we write $\mathcal{Q}_{m,l}$ as $X$, $\mathcal{D}_{m,l}$ for a given $R$ as $X_R$ and the threshold $\Delay{K}{*}$ as $x$.  
Hence, 
\[
X_R = X - x \sum_{i=1}^{R} \mathbf{1}_{\{X>ix\}} .
\]
Let, 
\[
\mu=\E[X], \quad p_{ix}=\P(X>ix) \quad \text{and} \quad m_{ix}=\E[X\mid X>ix].  
\]
Furthermore, note the identity,  
\begin{equation}
    \operatorname{Var}(A-B)=\operatorname{Var}(A)-2\operatorname{Cov}(A,B)+\operatorname{Var}(B)
    \label{eq:var_cov_identity}
\end{equation}
For a given $x$ and $R$, if $X_R$ exists, then so does $X_{R-1}$ and has the relation
\[
X_R = X_{R-1}-x\mathbf{1}_{\{X>Rx\}}.
\]
Using the identitiy from Equation~\ref{eq:var_cov_identity} we obtain,
\begin{equation}
\begin{aligned}
\operatorname{Var}(X_R) &= \operatorname{Var}(X_{R-1}) 
- 2x\,\operatorname{Cov}\bigl(X_{R-1}, \mathbf{1}_{\{X>Rx\}}\bigr) \\
&\quad + x^2 p_{Rx}(1-p_{Rx}),
\end{aligned}
\label{eq:expanded_variance}
\end{equation}
and therefore the necessary condition for the variance to be reduced, 
\begin{equation}
    2x\,\operatorname{Cov}\bigl(X_{R-1}, \mathbf{1}_{\{X>Rx\}}\bigr) \ge x^2 p_{Rx}(1-p_{Rx}). 
    \label{eq:necessary_condition}
\end{equation}
Now using
\[
X_{R-1} = X - x \sum_{i=1}^{R-1} \mathbf{1}_{\{X>ix\}},
\]
and by the linearity property of covariance,
\begin{equation*}
\operatorname{Cov}(A-B,C)
=
\operatorname{Cov}(A,C)
-
\operatorname{Cov}(B,C),
\end{equation*}
we simplify the LHS from Equation~\ref{eq:necessary_condition},
\begin{multline*}
\operatorname{Cov}\bigl(X_{R-1},\mathbf{1}_{\{X>Rx\}}\bigr)
=
\operatorname{Cov}\bigl(X,\mathbf{1}_{\{X>Rx\}}\bigr) \\
-
\operatorname{Cov}\Bigl(x\sum_{i=1}^{R-1}\mathbf{1}_{\{X>ix\}},\mathbf{1}_{\{X>Rx\}}\Bigr).
\end{multline*}
Factoring out $x$ from the second term yields
\begin{multline}
\operatorname{Cov}\bigl(X_{R-1},\mathbf{1}_{\{X>Rx\}}\bigr)
=
\operatorname{Cov}\bigl(X,\mathbf{1}_{\{X>Rx\}}\bigr) \\
-
x\sum_{i=1}^{R-1}
\operatorname{Cov}\bigl(\mathbf{1}_{\{X>ix\}},\mathbf{1}_{\{X>Rx\}}\bigr).
\label{eq:covariance_term}
\end{multline}
By the definition of covariance,
\begin{multline}
\operatorname{Cov}\bigl(X,\mathbf{1}_{\{X>Rx\}}\bigr)
=
\E\bigl[X\,\mathbf{1}_{\{X>Rx\}}\bigr] 
-
\E[X]\E\bigl[\mathbf{1}_{\{X>Rx\}}\bigr]\\
=
p_{Rx} m_{Rx}
-
\mu p_{Rx}
=
p_{Rx}(m_{Rx}-\mu), 
\label{eq:cov_x_1}
\end{multline}
and for $i<R$, 
\begin{multline}
\operatorname{Cov}\bigl(\mathbf{1}_{\{X>ix\}},\mathbf{1}_{\{X>Rx\}}\bigr)
=
p_{Rx}
-
p_{ix}p_{Rx}
=
p_{Rx}(1-p_{ix}).
\label{eq:cov_1_1}
\end{multline}
By plugging \ref{eq:cov_1_1} and \ref{eq:cov_x_1} in \ref{eq:covariance_term}, we obtain a simplified condition of Equation \ref{eq:necessary_condition}, 
\begin{equation}
    2(m_{Rx}-\mu)-2x\sum_{i=1}^{R-1}(1-p_{ix}) \ge x(1-p_{x})
    \label{eq:simplified_necessary_condition}
\end{equation}

From monotonicity of any \ac{CCDF}, we have $p_{ix} \ge p_{Rx}$ for $i<R$ and hence
\[
\sum_{i=1}^{R-1} (1 - p_{ix}) \le (R-1) p_{Rx}.
\]

Plugging the RHS instead of the sum in the necessary condition in~\ref{eq:simplified_necessary_condition}, and from the fact that $m_{Rx}\ge Rx$ yields the sufficient condition \textbf{C1} for variance reduction:
\[
x \ge \frac{2 \mu}{1 + (2R-1)p_{Rx}}.
\]


\subsection{Condition for Difference of Means Reduction}
For the ease of notation, we write the distributions of the forward and reverse delay as $A$ and $B$. Assume that $\E[A]>\E[B]$ without loss of generality.
Let,
\[
 p_{ix}^{A}=\P(A>ix) \quad \text{and} \quad p_{ix}^{B}=\P(B>ix),
\]
and let, $C=A-B$ and well as $C'=\sum_{i=1}^R x(\mathbf{1}\gk{A>ix}-\mathbf{1}\gk{B>ix})$.
Thus,
\begin{equation*}
    \E[\mathcal{D}_{\text{Req}}-\mathcal{D}_{\text{Resp}}]^{2}=\rk{\E[C]-\E[C']}^2\, .
\end{equation*}
Note that the function $c'\mapsto (c-c')^2$ decreases if and only if $c'\in (0,2c)$. This then translates to
\begin{equation*}
    0<\E\ek{C'}<2\E[C]\, ,
\end{equation*}
and hence we have a reduction in the mean term if and only if
\begin{equation}\label{eq:impro}
    0<\frac{x\sum_{i=1}^R \rk{p_{ix}^{A}-p_{ix}^{B}}}{\E\ek{A}-\E\ek{B}}<2\, 
\end{equation}
by writing $\E\ek{C'}$ in the numerator.

The term is greater than 0 if \textbf{C2} is satisfied, while \textbf{C3} takes care of the upper bound on the RHS. We remark that under tail dominance, we can bound $p_{ix}^{A}-p_{ix}^{A}$ using Markov's inequality in terms of $x$ and hence get a tighter lower bound on $x$ for Eq.~\eqref{eq:impro} to be satisfied. We leave this to the reader.
\section{Improvement Results}
\label{sec:app_improvement_results}
\subsection{Improvement with R}

Every addition of a new threshold step $R$ will have a variance reduction defined as:
\begin{equation*}
    \Delta_{R} = \operatorname{Var}(X_{R-1})-\operatorname{Var}(X_{R}). 
\end{equation*}
Using Equation~\ref{eq:expanded_variance}, we get,
\begin{equation*}
    \Delta_{R} = 2x\operatorname{Cov}\bigl(X_{R-1}, \mathbf{1}_{\{X>Rx\}}\bigr) - x^2 p_{Rx}(1-p_{Rx}). 
\end{equation*}
If the sufficient condition \textbf{C1} holds, then every $\Delta_{R} \ge 0$. 
Which implies a non-increase with $R$, i.e., $\operatorname{Var}(X_{R})\le\operatorname{Var}(X_{R-1})\le...\le\operatorname{Var}(X)$. 
Now examining the proof of \textbf{C1} at the end of Sec.~\ref{subsec:proof_variance_reduction}, equality holds if and only of both $p_{ix}=p_{Rx}$ and $m_{Rx}=Rx$, that is no probability mass existis between the interval $(ix,Rx]$ for $i<R$, and all probability mass above $Rx$ is concentrated at $Rx$. 
This is impossible for any queuing distribution as they are all non-degenerate. 
Thus, with loss of equality, the non-increasing becomes strictly increasing with $R$, proving the result \textbf{R1} as long as \textbf{C2} and \textbf{C3} hold. 
A special case in which equality holds occurs when there is no mass above the threshold $ix$; however, we consider this trivial, since the variance can only be reduced if the threshold is placed somewhere within the distribution. 

\subsection{Multihop Marking}

Let the forward and reverse delays be defined as a sum of $s$ individual delay distributions, 
\[
A = \sum_{i=1}^{s} A_i,
\qquad
B = \sum_{i=1}^{s} B_i,
\]
where all random variables \(A_i, B_j\) are mutually independent.

The mean-squared error (MSE) is therefore
\[
\mathrm{MSE}
= \frac{1}{4}\,\mathbb{E}\!\left[(A-B)^2\right].
\]
Expanding the square
\[
(A-B)^2
=
\left(\sum_{i=1}^{s} A_i - \sum_{j=1}^{s} B_j\right)^2.
\]

\[
\begin{aligned}
= \sum_{i=1}^{s} A_i^2 + \sum_{j=1}^{s} B_j^2 
  + 2 \sum_{1 \le i < k \le s} A_i A_k \\
\quad + 2 \sum_{1 \le j < \ell \le s} B_j B_\ell
  - 2 \sum_{i=1}^{s} \sum_{j=1}^{s} A_i B_j .
\end{aligned}
\]

Using independence assumption \textbf{A2},
\[
\mathbb{E}[A_i A_k] = \mathbb{E}[A_i]\mathbb{E}[A_k],
\qquad
\mathbb{E}[A_i B_j] = \mathbb{E}[A_i]\mathbb{E}[B_j].
\]

Also,
\[
\mathbb{E}[A_i^2] = \operatorname{Var}(A_i) + \mathbb{E}[A_i]^2,
\qquad
\mathbb{E}[B_j^2] = \operatorname{Var}(B_j) + \mathbb{E}[B_j]^2.
\]

Thus,
\[
\begin{aligned}
\mathbb{E}[(A-B)^2]
&= \sum_{i=1}^{s} \operatorname{Var}(A_i) + \sum_{j=1}^{s} \operatorname{Var}(B_j) \\
&\quad + \Biggl(
       \sum_{i=1}^{s} \mathbb{E}[A_i] 
       - \sum_{j=1}^{s} \mathbb{E}[B_j]
     \Biggr)^2 .
\end{aligned}
\]
We pair forward and reverse hops as
\[
(A_1,B_s),\ (A_2,B_{s-1}),\ \ldots,\ (A_s,B_1).
\]
And for simplicity, define the paired mean differences as the bias term, 
\[
\Delta_i \triangleq \mathbb{E}[A_i] - \mathbb{E}[B_{s+1-i}].
\]

Then
\[
\sum_{i=1}^{s} \mathbb{E}[A_i]
-
\sum_{j=1}^{s} \mathbb{E}[B_j]
=
\sum_{i=1}^{s} \Delta_i.
\]

Expanding the bias-square term, 

\[
\left(\sum_{i=1}^{s} \Delta_i\right)^2
=
\sum_{i=1}^{s} \Delta_i^2
+
2 \sum_{1 \le i < k \le s} \Delta_i \Delta_k.
\]

For each paired hop \((A_i,B_{s+1-i})\), define
\[
\begin{aligned}
\mathrm{MSE}_{A_i,B_{s+1-i}}
&\triangleq \frac{1}{4}\Bigg[\operatorname{Var}(A_i) + \operatorname{Var}(B_{s+1-i}) + \Delta_i)^2\Bigg] .
\end{aligned}
\]

Combining all terms, the overall MSE is
\[
\begin{aligned}
\mathrm{MSE}
&= \sum_{i=1}^{s} \mathrm{MSE}_{A_i,B_{s+1-i}}+\frac{1}{2} \sum_{1 \le i < k \le s}\Delta_i \Delta_k.
\end{aligned}
\]
The second term indicates that the MSE depends on how coherently the bias terms sum, increasing or decreasing it depending on their signs. 

Now, suppose a hop $g$ performs packet marking to reduce the MSE at $g$ by reducing its variance and bias terms.  
The MSE terms are positive, so the first term strictly decreases. 
For the second term, we factor out the bias terms involving $g$-th hop,

\[
\frac{1}{2}\sum_{1\le i<k\le s}\Delta_i \Delta_k
=
\frac{1}{2}\left(
\Delta_g \sum_{k\ne g} \Delta_k
+
\sum_{\substack{1\le i<k\le s \\ i,k\ne g}} \Delta_i \Delta_k
\right).
\]

Here, the second term inside the brackets is unchanged. 
A reduction in the difference of means terms corresponds to a reduction in the magnitude of $\Delta_g$ while keeping its sign the same. 
Thus, reducing the MSE at one hop pushes the first term inside the brackets towards 0, decreasing the effect of coherence. 
Under assumption \textbf{A1}, $\E[\sum_{k\ne g} \Delta_k] = 0$ and only the variance terms remain, which are additive, proving the results \textbf{R2} and \textbf{R3} without exception. 
For a random unbiased network $\E[\sum_{k\ne g} \Delta_k]\approx 0$. 
For a biased network, the likelihood of $\Delta_g \text{ and }\sum_{k\ne g} \Delta_k$ having different signs is low. 
Moreover, the addition to the MSE due to this non-coherence is linear and, in most cases, not greater than the reduction in the per-hop MSE, which is quadratic.
The more hops capable of reducing the per-hop MSE due to congestion marking, the more quadratic MSE reduction terms will outbalance the bias case if $\Delta_g \text{ and }\sum_{k\ne g} \Delta_k$ have different signs, showing \textbf{R2} and \textbf{R3} in most cases.

\section{Efficient Implementation of the Counter Propagation Algorithm}

This Section discusses the practical implementation of the counter propagation Algorithm~\ref{alg1}, its computational complexity, and the structural properties of the resulting stochastic process.

\subsection{Markov Chain Representation}

The evolution of the correction counter along a path can be represented as a \textbf{finite-state time-inhomogeneous Markov chain}. Let ($T_l$) denote the value of the counter after processing the ($l$)-th hop along the path. The counter takes values in the finite state space

\[
\mathcal{S} = {0,1,\dots,N},
\]

where ($N$) denotes the maximum counter value.

The transition from state ($T_{l-1}$) to ($T_l$) depends on the delay random variable ($X_l$) at hop ($l$) and the threshold mapping defined by the algorithm. Since the update rule depends only on the current counter value and the delay observed at the current hop, the process satisfies the Markov property. However, because the delay distribution may vary across hops, the transition probabilities depend on the hop index ($l$). The resulting process is therefore a \textbf{time-inhomogeneous Markov chain}.

Let

\[
P_l(i,j) = \Pr(T_l = j \mid T_{l-1} = i)
\]

denote the transition probability from state ($i$) to state ($j$) at hop ($l$). The distribution of the counter after hop ($l$) is described by the probability vector

\[
\boldsymbol{\pi}_l = [\Pr(T_l=0),\Pr(T_l=1),\dots,\Pr(T_l=N)].
\]

The process is initialized as

\[
\boldsymbol{\pi}_0 = [1,0,\dots,0],
\]

reflecting that the counter initially equals zero. The distribution after hop ($l$) is then obtained via the forward recursion

\[
\boldsymbol{\pi}*l = \boldsymbol{\pi}*{l-1} P_l .
\]

After traversing the entire path consisting of ($L = |L_m|$) hops, the final distribution becomes

\[
\boldsymbol{\pi}_L = \boldsymbol{\pi}_0 P_1 P_2 \cdots P_L,
\]

which directly yields the probability distribution of the final counter value ($P_m$).
The same applies to the distribution $D_m$, albeit with a much larger state space, as the per-hop delay distributions are the mixture distributions explained in Sec.~\ref{sec:analysis}. 

\subsection{Efficient Computation}

A direct implementation of Algorithm 1 corresponds to constructing a tree of all possible counter-evolutions along the path. In the worst case, each state may generate up to ($R+1$) successor states, where ($R$) denotes the number of thresholds. Consequently, the number of possible paths grows exponentially with the number of hops.

However, many of these paths correspond to identical counter states. Since future transitions depend only on the current counter value, all paths that reach the same state can be merged without loss of accuracy. The forward propagation described above, therefore, provides an equivalent dynamic programming implementation that tracks only the probability mass associated with each state.

In practice, the transition probabilities depend only on the probability that the delay falls within the threshold intervals defined by the algorithm. For hop ($l$), the required quantities are

\[
p_{l,r} = \Pr(\delta_r^* \le X_l < \delta_{r+1}^*),
\]

which can be computed once by \textbf{discretizing} the empirical delay samples in $R$ bins. 
Once these probabilities of being in each bin are available, each propagation step reduces to updating the probability vector according to the transition structure.

The transition matrices ($P_l$) exhibit a sparse structure. For any state ($i$), the algorithm maps the delay observation to one of at most ($R+1$) possible counter updates. Consequently, each row of ($P_l$) contains at most ($R+1$) non-zero entries.

Let ($\text{nnz}(P_l)$) denote the number of non-zero elements in ($P_l$). The sparsity property implies

\[
\text{nnz}(P_l) \le (N+1)(R+1).
\]


\subsection{Absorbing and Boundary States}

The counter is bounded between ($0$) and ($N$). In particular, once the counter reaches the maximum value ($N$), further increments are not possible. As a result, state ($N$) behaves as an \textbf{absorbing state} for transitions that would otherwise increase the counter beyond this bound. Similarly, the lower bound at zero prevents negative counter values.

These boundary conditions guarantee that the state space remains finite and that probability mass cannot leave the set ($\mathcal{S}$). In practice, the presence of the upper absorbing boundary also reduces the effective branching of the process, since probability mass tends to accumulate near this boundary as the number of hops increases.

\subsection{Computational Complexity}
The forward propagation algorithm maintains a probability vector of size ($N+1$) and updates it sequentially for each hop. Because each state has at most ($R+1$) possible successor states, the computational complexity of the propagation step is

\[
O(LNR).
\]

The memory requirement is ($O(N)$), since only the current probability vector must be stored.

In addition, a preprocessing step is required to compute the discretized interval probabilities from the empirical delay distributions. If each empirical distribution contains ($S$) samples, sorting the samples requires ($O(S\log S)$) time. After sorting, the interval probabilities can be computed efficiently using cumulative counts.

For the parameter ranges considered in this work (e.g., ($N \leq 32$) and ($L \leq 7$)), the resulting computation is lightweight and enables repeated evaluations of the algorithm during threshold parameter optimization.

\section{Table of Symbols and Notation}

\begin{table}[h]
\centering
\caption{Core Notation}
\label{tab:core_notation}
\begin{tabular}{ll}
\toprule
\textbf{Symbol} & \textbf{Description} \\
\midrule
$T_1, T_2, T_3, T_4$ & Timestamps in synchronization message exchange \\
$\theta$ & True clock offset between client and server \\
$\hat{\theta}$ & Estimated clock offset \\
$\epsilon$ & Offset estimation error ($\theta - \hat{\theta}$) \\
$\delta_{\text{Req}}, \delta_{\text{Resp}}$ & End-to-end delays of request and response messages \\
$\delta^{\text{base}}_m$ & Base delay (propagation + transmission) of message $m$ \\
$\delta^q_m$ & Total queuing delay of message $m$ \\
$\hat{\delta}^q_m$ & Estimated queuing delay \\
$\delta^*_K$ & Delay corresponding to threshold $K$ \\
$K$ & Queue length threshold for congestion marking \\
$R$ & Number of thresholds per hop \\
$N$ & Maximum marking counter value \\
$n$ & Observed marking counter value \\
$m \in \{\text{Req}, \text{Resp}\}$ & Message type \\
$L_m$ & Set of hops traversed by message $m$ \\
$\delta^q_{m,l}$ & Queuing delay at hop $l$ \\
$LR$ & Link line rate (bits per second) \\
\bottomrule
\end{tabular}
\end{table}

\begin{table}[h]
\centering
\caption{Analysis Notation}
\label{tab:analysis_notation}
\begin{tabular}{ll}
\toprule
\textbf{Symbol} & \textbf{Description} \\
\midrule
$\delta^E_m$ & Queuing delay estimation error of message $m$ \\
$\delta^E_{m,l}$ & Delay estimation error at hop $l$ \\
$Q_{m,l}$ & Distribution of queuing delay at hop $l$ \\
$D_{m,l}$ & Distribution of delay estimation error at hop $l$ \\
$Q_m$ & End-to-end queuing delay distribution \\
$D_m$ & End-to-end delay estimation error distribution \\
$\rho_l$ & Utilization of link $l$ \\
$\mathbb{E}[\cdot]$ & Expectation operator \\
$\mathrm{Var}(\cdot)$ & Variance operator \\
$\mathcal{E}$ & Expected MSE without correction \\
$\mathcal{E}_{\text{comp}}$ & Expected MSE after congestion marking \\
$T_{m,l}$ & Marking counter state at hop $l$ \\
$f_{m,l}(x)$ & Probability density function at hop $l$ \\
$\chi(\cdot)$ & Indicator function \\
$r$ & Number of thresholds used at a hop \\
\bottomrule
\end{tabular}
\end{table}

\end{document}